\newtheorem{thm}{Theorem}
\newtheorem{defin}{Definition}
\newtheorem{lem}{Lemma}
\newtheorem{assum}{Assumption}
\newtheorem{rem}{Remark}
\begin{document}

\title[Goodness-of-fit test for multi-layer stochastic block models]{Goodness-of-fit test for multi-layer stochastic block models}

\author*[1]{\fnm{Huan} \sur{Qing}}\email{qinghuan@u.nus.edu}
\affil[1]{\orgdiv{School of Economics and Finance}, \orgname{Chongqing University of Technology}, \city{Chongqing}, \postcode{400054}, \state{Chongqing}, \country{China}}


\abstract{Community detection in multi-layer networks is a fundamental task in complex network analysis across various areas like social, biological, and computer sciences. However, most existing algorithms assume that the number of communities is known in advance, which is usually impractical for real-world multi-layer networks. To address this limitation, we develop a novel goodness-of-fit test for the popular multi-layer stochastic block model based on a normalized aggregation of layer-wise adjacency matrices. Under the null hypothesis that a candidate community count is correct, we establish the asymptotic normality of the test statistic using recent advances in random matrix theory; conversely, we prove its divergence when the model is underfitted. This dual theoretical foundations enable two computationally efficient sequential testing algorithms to consistently determine the number of communities without prior knowledge. Numerical experiments on simulated and real-world multi-layer networks demonstrate the accuracy and efficiency of our approaches in estimating the number of communities. }
\keywords{Goodness-of-fit test, Multi-layer SBM, Multi-layer networks, Community detection}

\pacs[MSC Classification]{62H30; 91C20; 62P15}
\maketitle
\section{Introduction}\label{sec1}
In recent years, multi-layer networks have emerged as a fundamental framework for modeling complex systems where interactions occur across multiple contexts or time points \citep{mucha2010community,de2013mathematical,kivela2014multilayer,boccaletti2014structure}. Such networks capture the richness of relational data by encoding heterogeneous connectivity patterns while preserving shared structural properties. For instance, in social media platforms, user interactions can be represented across diverse platforms (e.g., Facebook, Twitter, LinkedIn, and WeChat) to reveal cross-platform behavioral patterns. In biological systems, gene co-expression networks at different stages of development can be modeled as layers to understand developmental brain disorders \citep{narayanan2010simultaneous,liu2018global,lei2020consistent,lei2023bias}. Similarly, in international trade, relationships between countries can be layered by food product types to analyze trade dynamics \citep{de2015structural}. The ability of multi-layer networks to integrate such heterogeneous yet interconnected information makes them important in domains like social science, neuroscience, systems biology, and economics.  

The study of community structures—groupings of nodes that exhibit cohesive intra-group interactions across layers—is critical for understanding the functional and structural organization of these complex network systems \citep{fortunato2010community,papadopoulos2012community,fortunato2016community,bedi2016community,javed2018community,jin2021survey}. For example, identifying communities in social networks can help design targeted marketing strategies, while detecting functional modules in biological networks can reveal insights into disease mechanisms. For community detection in multi-layer networks, the multi-layer stochastic block model (multi-layer SBM) serves as a popular statistical framework for this purpose. The multi-layer SBM model extends the classical stochastic block model (SBM) \citep{holland1983stochastic} by allowing layer-specific connectivity probabilities while maintaining a consistent community structure across layers. This flexibility enables the model to capture both the shared affiliations of nodes and the variations in interaction patterns across different layers.  Several works have proposed community detection algorithms under the multi-layer SBM model for multi-layer networks, including spectral methods based on the sum of adjacency matrices and matrix factorization methods \citep{han2015consistent,paul2020spectral}, least squares estimation \citep{lei2020consistent}, pseudo-likelihood based algorithm \citep{paul2021null,wang2021fast,fu2023profile}, tensor-based method \citep{xu2023covariate}, and bias-adjusted spectral clustering techniques \citep{lei2023bias,qing2025communityEAAI}. Despite their success, these approaches typically assume that the number of communities $ K$ is known in advance, a restrictive condition that limits their applicability to real-world networks where $ K$ is often unknown in real-world applications, and this restrictive assumption limits the practical utility of these approaches. Estimating $ K$ remains a critical open problem in multi-layer network analysis, as it directly impacts downstream tasks such as model selection and community detection.  

In single-layer networks modeled by the SBM model, the problem of determining the number of communities has been addressed through various statistical methods. To estimate $K$, notable approaches include network cross-validation \citep{chen2018network,li2020network}, Bayesian inference or composite likelihood Bayesian information criterion \citep{mcdaid2013improved,saldana2017many,hu2020corrected}, likelihood ratio tests \citep{LikeAos2017,ma2021determining}, spectrum of the Bethe Hessian matrices \citep{CanMK2022,hwang2024estimation}, 
and goodness-of-fit tests \citep{bickel2016hypothesis,lei2016goodness,dong2020spectral,hu2021using,jin2023optimal,wu2024spectral}. For brief reviews, see \citep{jin2023optimal}. However, multi-layer networks introduce unique challenges due to their layered complexity and intra-layer heterogeneity. Consequently, single-layer estimation techniques can not be immediately applied to estimate the number of communities for multi-layer networks. Therefore, specialized methodologies are required to handle the aggregation of information across multiple layers while preserving statistical tractability. To address this, we propose a novel, principled framework for estimating the number of communities \(K\) in multi-layer networks under the multi-layer SBM model. Our contributions are threefold:
\begin{itemize}
\item \textbf{A novel test statistic and its theoretical foundation}: We introduce a new goodness-of-fit test statistic based on a normalized aggregation of the layer-wise adjacency matrices. This construction is key to handling heterogeneous edge probabilities across layers. For this statistic, we establish its asymptotic normality under the null hypothesis \(H_0: K = K_0\) and prove its divergence under the alternative hypothesis \(H_1: K > K_0\). These theoretical results are derived using recent advances in random matrix theory, particularly the analysis of linear spectral statistics for generalized Wigner matrices. Notably, our theoretical analysis does not depend on the specific community detection algorithm used to estimate the community assignments. Instead, it only requires that the estimated communities satisfy certain misclustering error bounds—a condition that is achievable by many existing methods (e.g., bias-adjusted spectral clustering developed in \citep{lei2023bias}). This generality ensures the applicability of our framework across diverse algorithmic choices. These properties provide the essential theoretical backbone for model specification testing.
\item \textbf{Efficient sequential testing algorithms}: Building directly upon the above theoretical results, we design two computationally efficient sequential testing algorithms: the normalized aggregation spectral test (NAST) and the sequential ratio-based NAST (SR-NAST). We prove that both algorithms consistently estimate the true \(K\). To our knowledge, this is the first work to establish rigorous theoretical guarantees for estimating the number of communities under the multi-layer SBM. 
  \item \textbf{Empirical validation}: Extensive experiments on synthetic and real-world multi-layer networks demonstrate the high accuracy of our approaches in recovering the number of communities. 
\end{itemize}

The rest of this paper is organized as follows. In Section \ref{sec:model}, we formalize the multi-layer stochastic block model. Section \ref{sec:ideal} introduces the ideal test statistic and establishes its asymptotic properties. Section \ref{sec:proposed} presents the feasible statistic and its theoretical foundation. The two consistent sequential testing algorithms, NAST and SR-NAST, are detailed in Section \ref{sec:algorithm}. Their empirical performance is validated in Sections \ref{sec:experiments} and \ref{sec:realdata} using synthetic and real-world data, respectively. We conclude with discussions in Section \ref{sec:conclusion}. Technical proofs are provided in the Appendix.
\section{Multi-layer stochastic block model}\label{sec:model}
To address community detection in multi-layer networks, we adopt the multi-layer stochastic block model (multi-layer SBM) as our foundational framework. This model extends the classical SBM by incorporating layer-specific connectivity patterns while preserving a shared community structure across layers. Below we formalize this model, which serves as the basis for our goodness-of-fit testing procedure.
\begin{defin}[Multi-layer stochastic block model (multi-layer SBM)]\label{def:msbm}
Consider an undirected multi-layer network with \(n\) nodes and \(L\) layers. Let \(\theta \in \{1, \ldots, K\}^n\) be the community assignment vector, where \(K\) is the true number of communities. For each layer \(\ell = 1, \ldots, L\), the adjacency matrix \(A_\ell\) is generated independently as follows:
\begin{itemize}
    \item For \(1 \leq i < j \leq n\), \(A_{\ell,ij} \sim \text{Bernoulli}(B_{\ell,\theta_i\theta_j})\) independently.
    \item Diagonal elements \(A_{\ell,ii} = 0\) for all \(i\).
\end{itemize}
Here, \(B_\ell \in [0,1]^{K \times K}\) is a symmetric connectivity matrix for layer \(\ell\). The community assignment vector \(\theta\) is fixed across layers while \(B_\ell\) may vary, and the layers are independent given \(\theta\).
\end{defin}

The multi-layer SBM model captures two key aspects of real-world multi-layer networks: 1) Consistent community memberships $\theta$ across layers reflect nodes' inherent affiliations, and 2) Layer-specific connectivity matrices encode varying interaction patterns (e.g., social vs. professional contexts). This flexibility enables analysis of multi-layer network data while maintaining a unified community structure.

Despite its flexibility, a fundamental limitation hinders its practical application: the true number of communities \(K\) is typically unknown in real-world multi-layer networks. Most existing methodologies presuppose knowledge of \(K\) \citep{han2015consistent,paul2020spectral,lei2020consistent,lei2023bias,qing2025communityEAAI}, creating a significant gap between theoretical models and real-world applications. This necessitates robust statistical procedures to determine \(K\) before downstream analysis.

To address this limitation, we formulate a sequential hypothesis testing framework. We wish to test \(H_0: K = K_0\) against \(H_1: K > K_0\). The test proceeds sequentially for \(K_0 = 1, 2, \ldots\) until \(H_0\) is accepted. The core challenge lies in developing a computationally feasible test statistic that can reliably distinguish adequate community structure (\(H_0\)) from underfitting (\(H_1\)) without prior knowledge of \(K\). As will be developed in subsequent sections, our solution leverages a normalized aggregation of layer-wise adjacency matrices and uses recent advances in random matrix theory to establish asymptotic normality under the null hypothesis.
\section{Ideal test statistic and its asymptotic normality}\label{sec:ideal}
The foundation of our goodness-of-fit test relies on constructing a suitable aggregate representation of the multi-layer network that facilitates asymptotic analysis. In complex networks with multiple layers, a critical challenge lies in combining edge information across layers while preserving statistical tractability under the null hypothesis. To address this, we define the \textit{ideal normalized aggregation matrix} using the true but unknown parameters:
\begin{equation} \label{eq:A_ideal}  
\widetilde{A}^{\text{ideal}}_{ij} = 
\begin{cases} 
  \dfrac{\sum_{\ell=1}^L (A_{\ell,ij} - P_{\ell,ij})}{\sqrt{n \sum_{\ell=1}^L P_{\ell,ij}(1-P_{\ell,ij})}} & i \neq j, \\
  0 & i = j,
\end{cases}
\end{equation}
where \(P_{\ell,ij} = B_{\ell,\theta_i,\theta_j}\). This construction serves two fundamental purposes: first, it centers each edge by subtracting its true expectation \(P_{\ell,ij}\), ensuring the resulting matrix has mean zero under the model; second, it rescales the sum by the standard deviation of the aggregated edges across layers. The normalization by \(\sqrt{n \sum_{\ell} P_{\ell,ij}(1-P_{\ell,ij})}\) is particularly crucial as it stabilizes the variance across node pairs and scales the entries appropriately for high-dimensional asymptotics. This matrix can be interpreted as a multi-layer generalization of the centered and scaled adjacency matrices used in single-layer goodness-of-fit tests \citep{lei2016goodness,wu2024spectral}, extended to handle heterogeneous edge variances across layers.  

The statistical properties of \(\widetilde{A}^{\text{ideal}}\) make it suitable for random matrix theory analysis. As emphasized in Lemma \ref{lem:ideal_props}, this matrix exhibits three essential characteristics that align with generalized Wigner matrices: zero mean, controlled variance, and conditional independence. These properties emerge directly from the Bernoulli structure of the multi-layer SBM and the layer-wise independence given community assignments. The variance stabilization to \(1/n\) for off-diagonal elements is especially noteworthy because it ensures that the spectral properties of \(\widetilde{A}^{\text{ideal}}\) converge to the semicircle law, mirroring behavior seen in classical random matrix ensembles \citep{bai2010spectral,wang2021generalization}. 

\begin{lem}\label{lem:ideal_props}
\(\widetilde{A}^{\text{ideal}}\) satisfies:
\begin{enumerate}
    \item \(\mathbb{E}[\widetilde{A}^{\text{ideal}}_{ij}] = 0\) for all \(i,j\).
    \item \(\operatorname{Var}(\widetilde{A}^{\text{ideal}}_{ij}) = \frac{1}{n}\) for \(i \neq j\).
    \item Given \(\theta\), \(\{\widetilde{A}^{\text{ideal}}_{ij}\}_{i<j}\) are independent.
\end{enumerate}
\end{lem}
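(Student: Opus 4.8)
The plan is to verify each of the three properties directly from Definition \ref{def:msbm} and the construction \eqref{eq:A_ideal}, exploiting the fact that conditionally on $\theta$ the quantities $P_{\ell,ij} = B_{\ell,\theta_i\theta_j}$ are deterministic constants, so the only randomness in $\widetilde{A}^{\text{ideal}}_{ij}$ comes from the Bernoulli variables $A_{\ell,ij}$.

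First I would establish the zero-mean property. For $i = j$ the entry is identically $0$, so there is nothing to prove. For $i \neq j$, the denominator $\sqrt{n \sum_\ell P_{\ell,ij}(1-P_{\ell,ij})}$ is a (conditionally) fixed scalar, so by linearity of expectation it suffices to note $\mathbb{E}[A_{\ell,ij} - P_{\ell,ij}] = \mathbb{E}[A_{\ell,ij}] - B_{\ell,\theta_i\theta_j} = 0$ since $A_{\ell,ij} \sim \text{Bernoulli}(B_{\ell,\theta_i\theta_j})$; summing over $\ell$ and dividing by the deterministic denominator gives $\mathbb{E}[\widetilde{A}^{\text{ideal}}_{ij}] = 0$. (Strictly this is a conditional expectation given $\theta$; since it equals $0$ for every value of $\theta$, the unconditional expectation is also $0$ by the tower property — worth a one-line remark since $\theta$ is fixed in the model anyway.)

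Next I would compute the variance for $i \neq j$. Write $\widetilde{A}^{\text{ideal}}_{ij} = \left(\sum_{\ell=1}^L (A_{\ell,ij} - P_{\ell,ij})\right) \big/ \sqrt{n \sum_{\ell=1}^L P_{\ell,ij}(1-P_{\ell,ij})}$. The layers are independent given $\theta$, so the variance of the numerator is the sum of the per-layer variances $\operatorname{Var}(A_{\ell,ij}) = B_{\ell,\theta_i\theta_j}(1 - B_{\ell,\theta_i\theta_j}) = P_{\ell,ij}(1-P_{\ell,ij})$. Hence $\operatorname{Var}\left(\sum_\ell (A_{\ell,ij} - P_{\ell,ij})\right) = \sum_\ell P_{\ell,ij}(1-P_{\ell,ij})$, and dividing by the squared deterministic denominator $n \sum_\ell P_{\ell,ij}(1-P_{\ell,ij})$ yields exactly $1/n$. (One should note in passing that this requires $\sum_\ell P_{\ell,ij}(1-P_{\ell,ij}) > 0$, i.e. at least one layer has an edge probability strictly between $0$ and $1$ for the pair; this is a mild nondegeneracy condition implicitly assumed so that the normalization is well-defined.)

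Finally, for the conditional independence of $\{\widetilde{A}^{\text{ideal}}_{ij}\}_{i<j}$ given $\theta$: each $\widetilde{A}^{\text{ideal}}_{ij}$ is a deterministic (measurable) function of the single vector of Bernoulli variables $(A_{1,ij}, \ldots, A_{L,ij})$ attached to the pair $\{i,j\}$, together with constants depending on $\theta$. By Definition \ref{def:msbm} the collection $\{A_{\ell,ij} : 1 \le i < j \le n,\ 1 \le \ell \le L\}$ consists of mutually independent Bernoulli variables given $\theta$, so for distinct pairs the underlying variable blocks $(A_{\cdot,ij})$ and $(A_{\cdot,i'j'})$ are independent; applying measurable functions to disjoint independent blocks preserves independence, which gives the claim. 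I do not expect any genuine obstacle here — the statement is essentially a bookkeeping consequence of the model definition — the only points requiring care are the implicit nondegeneracy of the normalizing denominator and being explicit about conditioning on $\theta$ versus unconditional statements.
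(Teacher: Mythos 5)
Your proof is correct and follows essentially the same route as the paper's: direct computation of the mean, additivity of the variance across layers divided by the deterministic normalizer, and independence of the entries as measurable functions of disjoint independent Bernoulli blocks. If anything you are slightly more careful than the paper, which mis-attributes the variance additivity to Assumption \ref{assump:a3} rather than to the conditional independence of layers in Definition \ref{def:msbm}, and you rightly flag the nondegeneracy of the normalizing denominator (guaranteed in the paper by Assumption \ref{assump:a1}).
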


Building on the ideal normalized aggregation matrix, we define our ideal test statistic as the scaled trace of the matrix cubed: 
\[
T^{\text{ideal}} = \frac{1}{\sqrt{6}} \operatorname{tr}\left( \left( \widetilde{A}^{\text{ideal}} \right)^3 \right),
\] 
where $\operatorname{tr}(\cdot)$ denotes the trace operator. The asymptotic normality of \(T^{\text{ideal}}\) relies on controlling edge probabilities to avoid degeneracies. We formalize this through the following assumption.
\begin{assum}\label{assump:a1}
There exists \(\delta\in(0,\frac{1}{2})\) such that \(B_{\ell,kl} \in [\delta, 1-\delta]\) for all \(\ell, k, l\).
\end{assum}
Assumption \ref{assump:a1} ensures all edge probabilities are bounded away from 0 and 1, preventing cases where the variance term \(\sum_{\ell} P_{\ell,ij}(1-P_{\ell,ij})\) vanishes or blows up. This is essential for the variance stabilization in Equation \eqref{eq:A_ideal} to hold uniformly across node pairs. While the sparsity parameter \(\rho = \max_{\ell,k,l} B_{\ell,kl}\) may vary with \(n\), Assumption \ref{assump:a1} requires \(\rho \in (\delta,1-\delta)\), ensuring the network remains neither too sparse nor too dense, which is a common requirement for goodness-of-fit test for SBM in \citep{lei2016goodness,lei2015consistency}. Within this regime, Lemma \ref{lem:ideal_normal} establishes the standard normal limit for \(T^{\text{ideal}}\). The proof leverages recent advances in the central limit theorem for linear spectral statistics of inhomogeneous Wigner matrices \citep{wang2021generalization}, where the variance formula accounts for heterogeneous fourth moments across blocks. Crucially, the trace's cubic form and the variance scaling \(1/\sqrt{6}\) emerge from combinatorial calculations involving non-backtracking walks on three distinct nodes, as detailed in the proof of this lemma.

\begin{lem}\label{lem:ideal_normal}
Suppose that Assumption \ref{assump:a1} holds, then under the null hypothesis \(H_0: K = K_0\), we have
\[
T^{\text{ideal}} \xrightarrow{d} N(0,1),
\]
where \(\xrightarrow{d}\) means convergence in distribution and $N(0,1)$ denotes the standard normal distribution.
\end{lem}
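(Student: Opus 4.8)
The plan is to recognize $\widetilde{A}^{\text{ideal}}$ as an inhomogeneous (generalized) Wigner matrix and to obtain the limit from the central limit theorem for linear spectral statistics of such matrices \citep{wang2021generalization,bai2010spectral}, applied to the polynomial test function $f(x)=x^3$, since $\operatorname{tr}\big((\widetilde{A}^{\text{ideal}})^3\big)=\sum_i\lambda_i(\widetilde{A}^{\text{ideal}})^3=\operatorname{tr} f(\widetilde{A}^{\text{ideal}})$. Note first that $\widetilde{A}^{\text{ideal}}$ is built from the \emph{true} parameters $P_{\ell,ij}$ and does not depend on $K_0$, so the statement can be proved unconditionally; the hypothesis $H_0$ plays a role only later, in making the feasible statistic of Section \ref{sec:proposed} a valid approximation to $T^{\text{ideal}}$. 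For Step 1 (the matrix ensemble), Lemma \ref{lem:ideal_props} already gives symmetry, zero diagonal, entrywise independence above the diagonal given $\theta$, mean $0$, and variance exactly $1/n$. I would supplement this with the uniform higher-moment bound required by the LSS-CLT: Assumption \ref{assump:a1} yields $\sum_\ell P_{\ell,ij}(1-P_{\ell,ij})\asymp L$, so $\sum_\ell (A_{\ell,ij}-P_{\ell,ij})$ is a sum of independent, bounded, mean-zero terms with total variance of order $L$, and a Rosenthal/Bernstein estimate gives $\mathbb{E}\,|\widetilde{A}^{\text{ideal}}_{ij}|^p\le C_p\,n^{-p/2}$ uniformly in $i,j$ (and in $L$). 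Because the variance profile is exactly constant ($=1/n$), the fourth-moment corrections that appear in the general variance formula reduce to the homogeneous-Wigner case.

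For Step 2 (the mean) I would show $\mathbb{E}\big[\operatorname{tr}((\widetilde{A}^{\text{ideal}})^3)\big]=0$ exactly, so no recentering is needed: expanding $\operatorname{tr}((\widetilde{A}^{\text{ideal}})^3)=\sum_{i,j,k}\widetilde{A}^{\text{ideal}}_{ij}\widetilde{A}^{\text{ideal}}_{jk}\widetilde{A}^{\text{ideal}}_{ki}$, every term with a repeated index carries a factor $\widetilde{A}^{\text{ideal}}_{ii}=0$, while every term with $i,j,k$ distinct is a product of three independent mean-zero entries. For Step 3 (the variance) I would compute $\operatorname{Var}\big(\operatorname{tr}((\widetilde{A}^{\text{ideal}})^3)\big)\to 6$ by the moment method: writing the variance as a sum over pairs of closed length-$3$ walks, the only configurations of leading size $n^3\cdot n^{-3}$ are the non-backtracking ones in which the two walks traverse the \emph{same} triple of distinct nodes, so each of the three edges is used exactly twice; every other configuration (shared-edge count $<3$, or an edge used $\ge3$ times) is $o(1)$, since a once-used edge annihilates the expectation and the self-moments $\mathbb{E}[\widetilde{A}^{\text{ideal}}_{ij}{}^3]$, $\mathbb{E}[\widetilde{A}^{\text{ideal}}_{ij}{}^4]$ are $O(n^{-3/2})$ or smaller. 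Counting $\binom n3$ node triples against $6\times 6$ traversal orderings and multiplying by $(1/n)^3$ gives the limit $6$, and hence $\operatorname{Var}(T^{\text{ideal}})=\tfrac16\operatorname{Var}\big(\operatorname{tr}((\widetilde{A}^{\text{ideal}})^3)\big)\to 1$, which is exactly what the $1/\sqrt6$ normalization is designed to produce.

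For Step 4 (the Gaussian limit) I would then either invoke the LSS-CLT for inhomogeneous Wigner matrices directly at $f(x)=x^3$, or argue by the method of moments: the analogous combinatorics for $\mathbb{E}\big[\operatorname{tr}((\widetilde{A}^{\text{ideal}})^3)^m\big]$ forces the $m$ closed triangular walks to pair off into identical triangles, producing the Gaussian moments $6^{m/2}(m-1)!!$ for even $m$ and $0$ for odd $m$. Either route yields $\operatorname{tr}\big((\widetilde{A}^{\text{ideal}})^3\big)\xrightarrow{d}N(0,6)$, and therefore $T^{\text{ideal}}\xrightarrow{d}N(0,1)$, as claimed; this mirrors the single-layer arguments in \citep{lei2016goodness}, with the extra layer averaging absorbed entirely into the normalizing denominator.

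The main obstacle I anticipate is Step 1 together with the non-leading-order bookkeeping in Steps 3--4: one must show that the inhomogeneous and possibly $L$-dependent entries genuinely behave, to the required order, like those of a standard Wigner matrix — uniform control of all entrywise moments, negligibility of the diagonal and of all walks with coincident indices, and, if the argument is routed through \citep{wang2021generalization}, a careful check that our normalization and the cubic test function fit that theorem's hypotheses. Once these uniformities are in place, the combinatorial heart of the proof (matched triangles $\Rightarrow$ asymptotic variance $6$) is routine.
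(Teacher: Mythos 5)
Your proposal is correct and follows essentially the same route as the paper's proof: identify $\widetilde{A}^{\text{ideal}}$ as a generalized Wigner matrix via Lemma \ref{lem:ideal_props}, control the entrywise fourth moments using Assumption \ref{assump:a1}, invoke the LSS--CLT of \citep{wang2021generalization} for $f(x)=x^3$, show the mean of the trace vanishes by independence and zero diagonal, and obtain the limiting variance $6$ from exactly the same $36\binom{n}{3}n^{-3}$ triangle count. The only additions beyond the paper are cosmetic (the remark that $H_0$ is not actually needed for this lemma, and the optional method-of-moments route for Gaussianity), so no further comparison is warranted.
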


\section{Proposed test statistic and its theoretical properties}\label{sec:proposed}
While the ideal test statistic \(T^{\text{ideal}}\) developed in Section \ref{sec:ideal} provides a theoretically sound foundation for testing \(H_0: K = K_0\) under the multi-layer SBM, it relies critically on the true community assignment vector \(\theta\) and the true layer-specific connectivity matrices \(\{B_{\ell}\}_{\ell=1}^{L}\). However, in practical applications involving real-world multi-layer networks, these parameters are unknown, creating a fundamental gap between the theoretical ideal and practical implementation. Consequently, \(T^{\text{ideal}}\) cannot be computed directly from observed data. To bridge this gap, in this section, we develop a feasible test statistic \(T\) by replacing the unknown true parameters \(\theta\) and \(\{B_{\ell}\}\) with suitable estimates \(\hat{\theta}\) and \(\{\hat{B}_{\ell}\}\) derived from the data. The primary goal of this section is to construct this practical test statistic and establish its asymptotic normality under \(H_0\).

Suppose that $\hat{\theta}\in\{1,2,\ldots, K_{0}\}^{n}$ is the estimated community label vector returned by any community detection algorithm $\mathcal{M}$ with target number of communities $K_0$ for the multi-layer network. Let $\hat{C}_k=\{i:\hat{\theta}_i=k\mathrm{~for~}i\in[n]\}$ be the set of nodes and $\hat{n}_k:=|\hat{C}_k|$ be the number of nodes 
belonging to the $k$-th estimated community for $k\in[K_0]$. Similar to \citep{lei2016goodness,wu2024spectral}, we estimate the $L$ connectivity matrices $\{B_\ell\}^{L}_{\ell=1}$ as follows:
\begin{equation} \label{eq:B_hat}  
\hat{B}_{\ell,kl} = 
\begin{cases} 
  \dfrac{1}{\hat{n}_k \hat{n}_l} \sum_{i \in \hat{C}_k} \sum_{j \in \hat{C}_l} A_{\ell,ij}, & k \neq l, \\
  \dfrac{1}{\hat{n}_k (\hat{n}_k - 1)/2} \sum_{i < j \in \hat{C}_k} A_{\ell,ij}, & k = l,\ \hat{n}_k \geq 2, \\
  0, & \text{otherwise},
\end{cases}
\end{equation}
where $k\in[K_0], l\in[K_0], \ell\in[L]$. We then estimate the $L$ probability matrices $\{P_\ell\}^{L}_{\ell=1}$ as follows:
\begin{equation} \label{eq:P_hat}  
\hat{P}_{\ell,ij} = \hat{B}_{\ell,\hat{\theta}_i\hat{\theta}_j},
\end{equation}
where $i\in[n], j\in[n], \ell\in[L]$. Note that if we let $\Theta$ be a $n\times K_0$ matrix such that $\Theta_{ik}=1$ if $\hat{\theta}_i=k$ and 0 otherwise for $i\in[n], k\in[K_0]$, we have \(\hat{P}_\ell = \Theta \hat{B}_\ell \Theta^\top\) for $\ell\in[L]$. Algorithm \ref{alg:estimation} summarizes the details of parameter estimation for multi-layer networks.
\begin{algorithm}[H]
\caption{Parameter estimation for multi-layer SBM}\label{alg:estimation}
\begin{algorithmic}[1]
\Require Adjacency matrices \(\{A_\ell\}_{\ell=1}^L\) and number of communities \(K_0\)
\Ensure Estimated probability matrices \(\{\hat{P}_{\ell}\}^{L}_{\ell=1}\)
\State Run any community detection algorithm $\mathcal{M}$ to \(\{A_\ell\}_{\ell=1}^L\) with $K_0$ communities to get $\hat{\theta}$.
\State Estimate connectivity matrices via Equation (\ref{eq:B_hat}).
\State Estimate probability matrices via Equation (\ref{eq:P_hat}).
\end{algorithmic}
\end{algorithm}

After obtaining the estimated probability matrices, we can construct the normalized aggregation matrix as follows:
\begin{equation} \label{eq:A_hat}  
\widetilde{A}^{\text{agg}}_{ij} = 
\begin{cases} 
  \dfrac{\sum_{\ell=1}^L (A_{\ell,ij} - \hat{P}_{\ell,ij})}{\sqrt{n \sum_{\ell=1}^L \hat{P}_{\ell,ij}(1-\hat{P}_{\ell,ij})}} & i \neq j, \\
  0 & i = j,
\end{cases}
\end{equation}
where $i\in[n], j\in[n]$. Based on the normalized aggregation matrix $\widetilde{A}^{\text{agg}}$, our test statistic is designed as follows:
\begin{equation} \label{eq:T}  
T = \frac{1}{\sqrt{6}} \operatorname{tr}\left( \left( \widetilde{A}^{\text{agg}} \right)^3 \right)
\end{equation}

To develop the asymptotic normality of the proposed test statistics $T$, we need the following assumptions.
\begin{assum}\label{assump:a2}
There exists \(c > 0\) such that \(\min_{1 \leq k \leq K} |C_k| \geq c n / K\).
\end{assum}
Assumption \ref{assump:a2} means that the community sizes are balanced, where this assumption is also needed in the goodness-of-fit test for SBM in \citep{lei2016goodness,wu2024spectral}.

Let $m:=\|\hat{\theta} - \theta\|_0$ denotes the number of misclustered nodes for any community detection algorithm $\mathcal{M}$, where similar to the analysis in \citep{jin2024mixed}, we assume that the community labels are aligned via a permutation that minimizes the number of misclustered nodes throughout this paper. For our theoretical analysis, we also need the following assumption to control the growth rates of the number of layers $L$ and number of communities $K_0$ relative to the number of nodes $n$. And we also need it to control the number of misclustered nodes for any community detection algorithm $\mathcal{M}$.
\begin{assum}\label{assump:a3}
\(\frac{LK^2\mathrm{max}(\log n, m^2)}{n}\to 0\) as \(n \to \infty\).
\end{assum}

Assumption \ref{assump:a3} serves as a critical regularity condition for establishing the asymptotic normality of the test statistic \(T\) under \(H_0: K = K_0\). This assumption governs the interplay between network size (\(n\)), number of communities (\(K\)), layers (\(L\)), and misclustering error (\(m = \|\hat{\theta} - \theta\|_0\)) for any community detection algorithm $\mathcal{M}$. Its role is to control the accumulation of estimation errors in \(\widetilde{A}^{\text{agg}}\) relative to the idealized \(\widetilde{A}^{\text{ideal}}\), thereby preserving the weak convergence \(T\overset{d}{\to} N(0,1)\). We discuss its implications across several key asymptotic regimes below:
\begin{itemize}
  \item The mild misclustering (\(m^2=O(\log n)\)) case: When the misclustering error is moderate such that \(m^2 \leq C \log n\) for some \(C > 0\), Assumption \ref{assump:a3} simplifies to \(\frac{LK^{2} \log n}{n} \to 0\). Here, the dominant constraint is the interplay between \(L\), \(K\), and \(n\). If \(K= O(1)\) (fixed number of communities), \(L\) can grow as \(o(n / \log n)\). This accommodates moderately large multi-layer networks where the number of layers scales sub-linearly with \(n\). If \(K\) grows with \(n\) (e.g., \(K = n^\alpha\)), \(L\) must satisfy \(L = o(n^{1-2\alpha} / \log n)\), implying \(\alpha < 1/2\) is necessary for \(L \geq 1\). 
  \item The severe misclustering (\(m^2 \gg \log n)\) case: When misclustering is substantial (e.g., \(m \propto n^\gamma\) for \(\gamma > 0\)), Assumption 3 reduces to \(\frac{LK^{2} m^{2}}{n} \to 0\). If \(K = O(1)\), this requires \(L m^2 / n \to 0\). For \(L = O(1)\), it demands \(m = o(n^{1/2})\), meaning the community detection algorithm \(\mathcal{M}\) must achieve clustering consistency  with a rate slower than \(\sqrt{n}\). If \(m \propto n^{\gamma}\) (\(\gamma \geq 1/2\)), the assumption fails unless \(L \to 0\), which is impractical. 
  \item The fixed dimensions (\(K = O(1)\) and \(L = O(1)\)) case: When both candidate communities and layers are fixed, Assumption \ref{assump:a3} reduces to \(\max(\log n, m^2)/n \to 0\), which holds if \(m^2 = o(n)\). This is equivalent to requiring \(m = o(\sqrt{n})\) if \(m^2 > \log n\), or trivially satisfied if \(m^2 = O(\log n)\). Here, the critical constraint is that the misclustering rate must satisfy \(m / \sqrt{n} \overset{P}{\to} 0\). This regime is feasible with spectral algorithms (e.g., bias-adjusted SoS introduced in \citep{lei2023bias}) under Assumptions \ref{assump:a1}-\ref{assump:a2}, as their typical error rates (e.g., \(m = O_P(1)\) or \(m = O_P(\log n)\)) readily satisfy this condition. 
   \item The high-dimensional regimes (\(K \propto n^\alpha\), \(L \propto n^\beta)\) case: When both \(K\) and \(L\) scale with \(n\), Assumption \ref{assump:a3} becomes \(n^{\beta + 2\alpha} \max(\log n, m^2) / n \to 0\). If \(m^2 = O(\log n)\), this simplifies to \(\beta + 2\alpha < 1\). For example, if \(\alpha = 1/4\), \(\beta < 1/2\) is required. If \(m^2\) grows polynomially (e.g., \(m \propto n^\gamma\)), the assumption necessitates \(\beta + 2\alpha + 2\gamma < 1\). This imposes strict limitations: even moderate growth in \(K\) (e.g., \(\alpha > 0\)) forces \(\beta\) or \(\gamma\) to be negative unless \(m\) decays with \(n\). In practice, this implies the test is only feasible for very large \(n\) when \(K\) and \(L\) are small relative to \(n\), or when community detection is exceptionally accurate (\(\gamma \ll 1/2\)).
\end{itemize}

The above analysis shows that the asymptotic normality of $T$ typically requires $K = o(\sqrt{n})$ and $m = o(\sqrt{n})$ for any community detection algorithm $\mathcal{M}$.
\begin{rem}
In this paper, if we use the bias-adjusted algorithms introduced in \citep{lei2023bias,qing2025community,qing2025communityEAAI} to estimate communities for multi-layer networks, where these methods are spectral algorithms with theoretical guarantees under multi-layer SBM. Define $B_{\ell,0}=B_{\ell}/\rho$ for $\ell\in[L]$. If Assumptions \ref{assump:a1}-\ref{assump:a2} hold and we further assume that $\sum_{\ell}B^{2}_{\ell,0}$ satisfies $\lambda_{\mathrm{min}}(\sum_{\ell}B^{2}_{\ell,0})\geq cL$ for some $c>0$, where $\lambda_{\mathrm{min}}(\cdot)$ denotes the smallest eigenvalue (in magnitude) of a square matrix, then main theorems \citep{lei2023bias,qing2025community,qing2025communityEAAI} guarantees that under \(H_0: K = K_0\), the estimated community label vecto \(\hat{\theta}\) obtained by the bias-adjusted spectral clustering algorithms satisfies
\[
\frac{1}{n} \|\hat{\theta} - \theta\|_0 = O_P\left( \frac{1}{n^2} + \frac{\log(L+n)}{L n^2 \rho^2} \right)=O_P\left( \frac{1}{n^2} + \frac{\log(L+n)}{L n^2 \delta^2} \right).
\]
This means that when we let the community detection algorithm $\mathcal{M}$ be the bias-adjusted spectral algorithms, we have $m=O_P\left( \frac{1}{n} + \frac{\log(L+n)}{L n\delta^2} \right)$ (a value much smaller than $\log n$) which simplifies Assumption \ref{assump:a3} to \(\frac{L\log n}{n}\to 0\) as \(n \to \infty\) when $K=O(1)$, implying that $L$ should grow slower than $n/\log n$ as $n$ grows. Community detection approaches developed in \citep{han2015consistent,lei2020consistent,paul2020spectral} are specifically designed for multi-layer SBM  and their number of misclustered nodes also much smaller than $\sqrt{n}$ under mild conditions, which imply that many community detection algorithms for multi-layer SBM satisfy $m=o(\sqrt{n})$ and can be used for our parameter estimation. In this paper, we choose the bias-adjusted SoS algorithm developed in \citep{lei2023bias} for the parameter estimation in Algorithm \ref{alg:estimation} as numerical results in \citep{lei2023bias} show that bias-adjusted SoS generally outperforms methods developed in \citep{han2015consistent,paul2020spectral} in detecting communities, and it is computationally fast. 
\end{rem}
The following lemma guarantees that the proposed test statistic is close to the ideal test statistic.
\begin{lem}\label{lem:plugin_error}
Suppose that Assumptions \ref{assump:a1}-\ref{assump:a3} hold, then under the null hypothesis \(H_0: K = K_0\), we have
\[
T - T^{\text{ideal}} = o_P(1).
\]
\end{lem}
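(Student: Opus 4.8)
\emph{Proof plan.} The plan is to write $M:=\widetilde{A}^{\text{ideal}}$ and $E:=\widetilde{A}^{\text{agg}}-\widetilde{A}^{\text{ideal}}$ (both symmetric, with zero diagonal), expand the cube, and use cyclic invariance of the trace together with symmetry to reduce the claim to showing that each of $\operatorname{tr}(M^{2}E)$, $\operatorname{tr}(ME^{2})$, $\operatorname{tr}(E^{3})$ is $o_P(1)$, since $\operatorname{tr}\bigl((\widetilde{A}^{\text{agg}})^{3}\bigr)-\operatorname{tr}(M^{3})=3\operatorname{tr}(M^{2}E)+3\operatorname{tr}(ME^{2})+\operatorname{tr}(E^{3})$. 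I would first record two facts about $M$: by Lemma \ref{lem:ideal_props} and Assumption \ref{assump:a1} it is a generalized Wigner matrix with entrywise variance $1/n$ and rescaled entries bounded by $\sqrt{L/n}$, so standard semicircle-law estimates give $\|M\|_{\mathrm{op}}=O_P(1)$, while $\|M\|_{F}^{2}=\sum_{i\neq j}M_{ij}^{2}=O_P(n)$ by concentration of a sum of $\binom{n}{2}$ bounded independent terms.

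The heart of the argument is a bound on $E$. For $i\neq j$ I would set $s_{ij}=\sum_{\ell}(A_{\ell,ij}-P_{\ell,ij})$, $d_{ij}=\sum_{\ell}(\hat P_{\ell,ij}-P_{\ell,ij})$, $v_{ij}=n\sum_{\ell}P_{\ell,ij}(1-P_{\ell,ij})$ and $\hat v_{ij}=n\sum_{\ell}\hat P_{\ell,ij}(1-\hat P_{\ell,ij})$, so that $E_{ij}=-d_{ij}/\sqrt{\hat v_{ij}}+s_{ij}\bigl(\hat v_{ij}^{-1/2}-v_{ij}^{-1/2}\bigr)$ with $v_{ij}\asymp nL$ uniformly by Assumption \ref{assump:a1}. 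First I would prove a uniform bound on the block estimates: by Assumption \ref{assump:a2}, apart from entries touching a misclustered node, $\hat B_{\ell,kl}$ is an average of $\asymp(n/K)^{2}$ independent Bernoulli variables, so Bernstein's inequality plus a union bound over the $LK^{2}$ blocks (with $\log(LK^{2})$ absorbed by $\log n$ under Assumption \ref{assump:a3}) gives $\max_{\ell,k,l}|\hat B_{\ell,kl}-B_{\ell,kl}|=O_P\bigl(K\sqrt{\log n}/n+mK/n\bigr)=O_P\bigl(K\sqrt{\max(\log n,m^{2})}/n\bigr)=:\varepsilon_{0}$, the $mK/n$ term bounding the bias from the at most $O(mn/K)$ misclustered pairs per block; in particular $\hat v_{ij}\asymp nL$ with probability tending to one. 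Then I would split $E=E^{\mathrm{est}}+E^{\mathrm{mis}}$, where $E^{\mathrm{est}}$ keeps the pairs with both endpoints correctly clustered and $E^{\mathrm{mis}}$ keeps the remaining $\leq mn$ pairs meeting the misclustered set $\mathcal{E}$, $|\mathcal{E}|=m$. For $E^{\mathrm{est}}$ one has $|d_{ij}|\leq L\varepsilon_{0}$ and $|\hat v_{ij}-v_{ij}|\lesssim nL\varepsilon_{0}$, hence $|E^{\mathrm{est}}_{ij}|=O_P(\varepsilon_{0}\sqrt{L/n})$ and $\|E^{\mathrm{est}}\|_{F}^{2}=O_P(nL\varepsilon_{0}^{2})=O_P\bigl(LK^{2}\max(\log n,m^{2})/n\bigr)=o_P(1)$ by Assumption \ref{assump:a3}. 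For $E^{\mathrm{mis}}$ the crude bound $|E^{\mathrm{mis}}_{ij}|=O_P(\sqrt{L/n})$ only gives $\|E^{\mathrm{mis}}\|_{F}^{2}=O_P(mL)$, so here I would exploit structure: up to the negligible term $s_{ij}(\hat v_{ij}^{-1/2}-v_{ij}^{-1/2})$, the entry $E^{\mathrm{mis}}_{ij}$ depends on $(i,j)$ only through $(\theta_{i},\hat\theta_{i})$ and $(\theta_{j},\hat\theta_{j})$, so $E^{\mathrm{mis}}$ is block-constant and supported on the $m$ rows and $m$ columns indexed by $\mathcal{E}$, hence has rank $O(\min(m,K^{2}))$ and $\|E^{\mathrm{mis}}\|_{\mathrm{op}}=O_P\bigl(\sqrt{\min(m,K^{2})\,L}\bigr)$.

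Finally I would assemble the three trace terms. The parts involving only $E^{\mathrm{est}}$ are bounded through $\|M\|_{\mathrm{op}}$, $\|M\|_{F}$ and $\|E^{\mathrm{est}}\|_{F}$: e.g.\ $|\operatorname{tr}(M(E^{\mathrm{est}})^{2})|\leq\|M\|_{\mathrm{op}}\|E^{\mathrm{est}}\|_{F}^{2}$, while $\operatorname{tr}(M^{2}E^{\mathrm{est}})$ is of order $\|E^{\mathrm{est}}\|_{F}/\sqrt{n}$ via a second-moment computation exploiting that, conditionally on $\hat\theta$ and $\{\hat B_{\ell}\}$, the off-diagonal entries of $M$ are mean-zero and independent; all of these are $o_P(1)$ under Assumption \ref{assump:a3}. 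The parts involving $E^{\mathrm{mis}}$ would be handled by combining its rank-$O(\min(m,K^{2}))$ block structure with $\|M\|_{\mathrm{op}}=O_P(1)$ and the semicircle concentration of $M$, i.e.\ by conditioning on $\hat\theta,\{\hat B_{\ell}\}$ (edge-insensitive functions of the data, so that $M$ stays a generalized Wigner matrix conditionally and its interaction with the deterministic low-rank $E^{\mathrm{mis}}$ concentrates), after which Assumption \ref{assump:a3} --- whose $m^{2}$ term and $K^{2}$ factor are calibrated for exactly this step --- forces these contributions to vanish. Summing gives $T-T^{\text{ideal}}=\tfrac{1}{\sqrt{6}}\bigl(3\operatorname{tr}(M^{2}E)+3\operatorname{tr}(ME^{2})+\operatorname{tr}(E^{3})\bigr)=o_P(1)$.

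\emph{Main obstacle.} The hard part is $E^{\mathrm{mis}}$: the $m$ misclustered nodes create $\Theta(mn)$ entries of $E$ of size $\Theta(\sqrt{L/n})$, so purely norm-based bounds on the cubic trace leave an irreducible term of order $mL$ (and larger in the mixed terms $\operatorname{tr}(M^{2}E)$ and $\operatorname{tr}(ME^{2})$), which need not vanish. Defeating it requires exploiting that this perturbation is, up to lower-order fluctuations, a low-rank block-constant matrix and decoupling it from the randomness of $M$ by conditioning on the nearly edge-independent estimates $\hat\theta,\{\hat B_{\ell}\}$; this is precisely where the full strength of Assumption \ref{assump:a3} is used.
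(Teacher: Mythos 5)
Your overall strategy coincides with the paper's: expand \(\operatorname{tr}((\widetilde{A}^{\text{agg}})^{3})-\operatorname{tr}((\widetilde{A}^{\text{ideal}})^{3})\) into the three cross terms, control the block-probability error at rate \(K(m+\sqrt{\log n})/n\), deduce an entrywise bound of order \(\varepsilon_0\sqrt{L/n}\) on the perturbation, and conclude via \(\|\Delta\|_F^2=O_P(LK^2\max(\log n,m^2)/n)=o_P(1)\) together with \(\|\widetilde{A}^{\text{ideal}}\|=O_P(1)\); your treatment of \(E^{\mathrm{est}}\) is essentially the paper's treatment of all of \(\Delta\). Where you depart is in splitting off \(E^{\mathrm{mis}}\), the pairs touching misclustered nodes. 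This is a legitimate observation — the paper's uniform claim \(\|\hat P_\ell-P_\ell\|_{\max}=O_P(r_n)\) is justified there only by the remark that the indicator term has expectation \(2m/n\), which bounds an average rather than a maximum, so the paper implicitly treats those \(O(mn)\) pairs as if they were as small as the correctly clustered ones. You correctly note that a pure Frobenius bound on \(E^{\mathrm{mis}}\) gives \(O_P(mL)\), which does not vanish.

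However, the machinery you propose to defeat \(E^{\mathrm{mis}}\) does not close the gap as described, so the proposal has a genuine hole exactly where it claims the main difficulty lies. First, the decoupling step is invalid as stated: \(\hat\theta\) and \(\{\hat B_\ell\}\) are functions of the same adjacency matrices \(\{A_\ell\}\) that build \(M=\widetilde{A}^{\text{ideal}}\), so conditional on them the entries of \(M\) are neither independent nor mean zero; asserting that \(M\) ``stays a generalized Wigner matrix conditionally'' assumes away precisely the dependence that makes plug-in arguments delicate (the same objection applies to your second-moment bound for \(\operatorname{tr}(M^2E^{\mathrm{est}})\)). Second, even granting independence and the low-rank block structure (which itself neglects the non-block-constant fluctuation \(s_{ij}(\hat v_{ij}^{-1/2}-v_{ij}^{-1/2})\)), the natural bound \(|\operatorname{tr}(M^2E^{\mathrm{mis}})|\le\|M\|_{\mathrm{op}}^2\,\mathrm{rank}(E^{\mathrm{mis}})\,\|E^{\mathrm{mis}}\|_{\mathrm{op}}\) is of order \(\min(m,K^2)^{3/2}\sqrt{L}\), which is \(\Theta(\sqrt{L})\) already for a single misclustered node and is not forced to vanish by Assumption \ref{assump:a3}. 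So you would need either a genuinely sharper cancellation argument for these terms or the assumption \(m=0\) with high probability (which the cited misclustering rates \(m=O_P(n^{-1}+\log(L+n)/(Ln\delta^2))\) in fact deliver, but which your plan does not invoke). In short: for \(E^{\mathrm{est}}\) you reproduce the paper's argument; for \(E^{\mathrm{mis}}\) you correctly identify a difficulty the paper elides but do not supply a working resolution.
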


The following theorem is the main theoretical result of this paper, as it guarantees the asymptotic normality of the proposed test statistic $T$ under mild conditions.
\begin{thm}[Asymptotic null distribution]\label{thm:main}
Suppose that Assumptions \ref{assump:a1}-\ref{assump:a3} hold, then under the null hypothesis \(H_0: K = K_0\), we have
\[
T \xrightarrow{d} N(0,1).
\]
\end{thm}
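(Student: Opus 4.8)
The plan is to obtain Theorem~\ref{thm:main} immediately from the two preparatory lemmas via Slutsky's theorem. Writing $T = T^{\text{ideal}} + (T - T^{\text{ideal}})$, Lemma~\ref{lem:ideal_normal} gives $T^{\text{ideal}} \xrightarrow{d} N(0,1)$ under Assumption~\ref{assump:a1} and $H_0$, while Lemma~\ref{lem:plugin_error} gives $T - T^{\text{ideal}} = o_P(1)$ under Assumptions~\ref{assump:a1}--\ref{assump:a3} and $H_0$; Slutsky's theorem then yields $T \xrightarrow{d} N(0,1)$. So the substance of the theorem is entirely contained in Lemma~\ref{lem:plugin_error}, and the rest of this sketch describes how I would establish that lemma, since that is where the real work lies.

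Set $E := \widetilde{A}^{\text{agg}} - \widetilde{A}^{\text{ideal}}$, a symmetric matrix with zero diagonal. Since $T - T^{\text{ideal}} = \frac{1}{\sqrt 6}\bigl(\operatorname{tr}((\widetilde{A}^{\text{agg}})^3) - \operatorname{tr}((\widetilde{A}^{\text{ideal}})^3)\bigr)$, expanding $(\widetilde{A}^{\text{ideal}}+E)^3$ and using cyclicity of the trace gives
\[
\operatorname{tr}\bigl((\widetilde{A}^{\text{agg}})^3\bigr) - \operatorname{tr}\bigl((\widetilde{A}^{\text{ideal}})^3\bigr) = 3\operatorname{tr}\bigl((\widetilde{A}^{\text{ideal}})^2 E\bigr) + 3\operatorname{tr}\bigl(\widetilde{A}^{\text{ideal}} E^2\bigr) + \operatorname{tr}(E^3),
\]
so it suffices to show each term on the right is $o_P(1)$. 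I would use the deterministic bounds $|\operatorname{tr}(\widetilde{A}^{\text{ideal}} E^2)| \le \|\widetilde{A}^{\text{ideal}}\|_{\mathrm{op}}\|E\|_F^2$, $|\operatorname{tr}(E^3)| \le \|E\|_F^3$, and $|\operatorname{tr}((\widetilde{A}^{\text{ideal}})^2 E)| \le \|\widetilde{A}^{\text{ideal}}\|_{\mathrm{op}}^2\|E\|_*$ (nuclear norm), together with the fact that $\widetilde{A}^{\text{ideal}}$ is, by Lemma~\ref{lem:ideal_props}, a generalized Wigner matrix with entrywise variance $1/n$ and vanishingly small entries, so that $\|\widetilde{A}^{\text{ideal}}\|_{\mathrm{op}} = O_P(1)$ (already a byproduct of the proof of Lemma~\ref{lem:ideal_normal}).

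Everything then reduces to a sharp bound on $E$. I would decompose $E = E^{\mathrm{lr}} + E^{\mathrm{sp}}$, where $E^{\mathrm{sp}}$ collects the entries $(i,j)$ with $i$ or $j$ misclustered---supported on only $O(mn)$ pairs, hence controllable through its restricted support and Frobenius norm---and $E^{\mathrm{lr}}$ is supported on pairs of correctly clustered nodes. On that part $P_{\ell,ij}$ and $\hat P_{\ell,ij}$ depend only on the true block pair of $(i,j)$, so $E^{\mathrm{lr}}$ is block-constant, whence $\operatorname{rank}(E^{\mathrm{lr}}) \le K_0$ and $\|E^{\mathrm{lr}}\|_* \le \sqrt{K_0}\|E^{\mathrm{lr}}\|_F$. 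Its block values come from the numerator perturbation $\sum_\ell(\hat B_{\ell,kl} - B_{\ell,kl})$ and a denominator perturbation, the latter handled by a first-order expansion of $x\mapsto 1/\sqrt x$, with Assumption~\ref{assump:a1} keeping $\sum_\ell \hat P_{\ell,ij}(1-\hat P_{\ell,ij})$ of order $L\delta^2$ from below. Assumption~\ref{assump:a2} ensures $\hat n_k \asymp n/K$, so the degenerate branches of \eqref{eq:B_hat} do not occur and the blocks used to form $\hat B_\ell$ have size of order $(n/K)^2$; Bernstein/Hoeffding concentration over the $L$ layers and $O(K^2)$ block pairs then gives uniform control of $\hat B_{\ell,kl} - B_{\ell,kl}$. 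Propagating these estimates through the three trace terms, one verifies that each is bounded, up to constants, by a quantity that tends to zero exactly under Assumption~\ref{assump:a3}---its role being precisely to force $LK^2\max(\log n, m^2)/n \to 0$.

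The step I expect to be the main obstacle is the cross term $\operatorname{tr}((\widetilde{A}^{\text{ideal}})^2 E)$: the crude bound $|\operatorname{tr}((\widetilde{A}^{\text{ideal}})^2 E)| \le \|\widetilde{A}^{\text{ideal}}\|_{\mathrm{op}}^2\sqrt n\,\|E\|_F$ (using $\|E\|_*\le\sqrt n\,\|E\|_F$) loses a full factor $\sqrt n$ and is not good enough, so one must genuinely exploit that the dominant piece $E^{\mathrm{lr}}$ has rank only $O(K_0)$ and treat $E^{\mathrm{sp}}$ through its $O(mn)$-sparse support. A secondary difficulty is that $E^{\mathrm{lr}}$ is not independent of $\widetilde{A}^{\text{ideal}}$---the same $A_\ell$ enter both $\hat B_\ell$ and $\widetilde{A}^{\text{ideal}}$---so controlling $v^\top(\widetilde{A}^{\text{ideal}})^2 u$ for the singular vectors of $E^{\mathrm{lr}}$, or replacing this by a direct moment computation, requires care to rule out spurious alignment. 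By contrast, the trace expansion, the operator-norm control of $\widetilde{A}^{\text{ideal}}$, and the concluding Slutsky argument are routine.
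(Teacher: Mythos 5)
Your proof of the theorem itself coincides exactly with the paper's: it is the three-line argument combining Lemma \ref{lem:ideal_normal} (asymptotic normality of $T^{\text{ideal}}$), Lemma \ref{lem:plugin_error} ($T - T^{\text{ideal}} = o_P(1)$), and Slutsky's theorem, so at that level there is nothing to add.

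The bulk of your proposal is really a sketch of Lemma \ref{lem:plugin_error}, and there your route differs from the paper's. The paper bounds $\Delta = \widetilde{A}^{\text{agg}} - \widetilde{A}^{\text{ideal}}$ entrywise, deduces $\|\Delta\|_F^2 \le n^2\|\Delta\|_{\max}^2 = O_P(LK_0^2\max(\log n, m^2)/n) = o_P(1)$, and then controls all three terms of the trace expansion by Frobenius-norm products such as $\|(\widetilde{A}^{\text{ideal}})^2\Delta\|_F \le \|\widetilde{A}^{\text{ideal}}\|^2\|\Delta\|_F$; it never introduces your low-rank-plus-sparse decomposition of $\Delta$ or the nuclear norm. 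Your instinct that the cross term $\operatorname{tr}((\widetilde{A}^{\text{ideal}})^2\Delta)$ is the genuine obstacle is well placed: the paper's step $|\operatorname{tr}(M)| \le \|M\|_F$ is not a valid inequality in general (it can lose a factor of $\sqrt{n}$, and $\|\Delta\|_F = o_P(1)$ alone is not small enough to absorb that loss), whereas your plan --- exploiting that the part of $\Delta$ supported on correctly clustered pairs is block-constant with rank at most $K_0$, so that $\|\cdot\|_* \le \sqrt{K_0}\,\|\cdot\|_F$, and treating the $O(mn)$-supported remainder separately --- supplies exactly the structural input needed to close that gap, modulo the dependence between $\hat B_\ell$ and $\widetilde{A}^{\text{ideal}}$ that you correctly flag as requiring care. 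So your sketch is, if anything, more careful than the paper's own argument at the one point where care is required; the price is the rank/sparsity bookkeeping and possibly a marginally stronger form of Assumption \ref{assump:a3} to absorb the extra $\sqrt{K_0}$ from the nuclear-norm step.
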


The accuracy of the community detection algorithm fundamentally influences the validity of Theorem \ref{thm:main}, as its conclusion relies critically on Assumption \ref{assump:a3} controlling the misclustering error \(m = \|\hat{\theta} - \theta\|_0\). Specifically, the asymptotic normality \(T\xrightarrow{d} N(0,1)\) under \(H_0\) requires \(m = o(\sqrt{n})\) for the estimation errors in \(\widetilde{A}^{\text{agg}}\) to vanish relative to \(\widetilde{A}^{\text{ideal}}\). When community detection is highly accurate (e.g., \(m <C\log n)\) or \(m = \mathcal{O}_P(1)\)), Assumption \ref{assump:a3} simplifies significantly—often reducing to \(\frac{LK^2 \log n}{n} \to 0\)—which permits a broader range of network dimensions (e.g., larger \(L\) or slowly growing \(K\)). Conversely, less accurate algorithms risk violating \(m = o(\sqrt{n})\), particularly in high-dimensional regimes (\(K \propto n^\alpha, L \propto n^\beta\)), where polynomial growth in \(m\) quickly destabilizes the variance structure of \(\widetilde{A}^{\text{agg}}\) and invalidates the asymptotic normality. Thus, selecting algorithms with theoretical guarantees of tight misclustering bounds (e.g., spectral methods with explicit \(m = o_P(\sqrt{n})\) rates under multi-layer SBM) is essential not only for parameter estimation accuracy but also to ensure the test statistic’s limiting distribution holds across practical and scalability conditions. This is why we always prefer community detection methods with tight misclustering bounds in multi-layer SBM for the sequential testing framework used for the estimation of the number of communities in this paper.

The following conditions are needed for the test's power against the multi-layer SBM model.
\begin{assum}\label{assump:a4}
Define the average connection matrix $\bar{B} = \frac{1}{L} \sum_{\ell=1}^{L} B_{\ell}$. There exists a constant $\eta > 0$ such that
\[
\min_{1 \le k \neq k' \le K} \; \max_{1 \le l \le K} \; \bigl| \bar{B}_{k l} - \bar{B}_{k' l} \bigr| \ge \eta.
\]
\end{assum}
Assumption \ref{assump:a4} ensures that the true communities remain distinguishable in the \emph{average} connectivity matrix $\bar{B}$. This is a mild, natural condition: it only requires that after aggregating across layers, every pair of communities differs in how they connect to at least one other community by a non‑vanishing amount $\eta$. It does not demand separation in every individual layer, making it suitable for real multi‑layer data where layers can be heterogeneous. In fact,
this condition is the multi‑layer analogue of the classical row‑separation condition considered in Theorem 3.3 of \citep{lei2016goodness}.  
\begin{assum}\label{assump:a5}
When $K_0 < K$, we require
\[
\frac{n L}{K^4} \to \infty \mathrm{~and~}\frac{n L^{3}}{K^{12}(\log n)^{3}} \to \infty\quad \text{as } n \to \infty.
\]
\end{assum}
 
Assumption \ref{assump:a5} is mild in typical applications because the number of communities $K$ is usually much smaller than the number of nodes $n$. In the simplest case where $K$ is bounded ($K=O(1)$), both conditions reduce to $nL\to\infty$ and $nL^{3}/(\log n)^{3}\to\infty$, which hold naturally. Even if $K$ grows slowly with $n$, the two conditions remain easily satisfied as long as $K$ does not increase too fast relative to $n$ and $L$. Thus, Assumption \ref{assump:a5} describes a realistic scaling regime that covers most practical multi‑layer network scenarios.

The following theorem provides a lower bound of $|T|$ under the alternative model $K_{0}<K$.
\begin{thm}[Asymptotic power guarantee]\label{thm:power}
Suppose that Assumptions \ref{assump:a1}--\ref{assump:a5} hold, and the true number of communities $K$ is greater than the candidate $K_0$ (i.e., $K_0 < K$). Then, there exists a constant $C > 0$ such that
\[
\lim_{n \to \infty} \mathbb{P}\left( |T(K_0)| \ge C \cdot \frac{(nL)^{3/2}}{K^6} \right) = 1.
\]
\end{thm}

Theorem \ref{thm:power} establishes the asymptotic power of the proposed goodness-of-fit test against the underfitting case $K_{0}<K$. This theorem shows that the test statistic $|T(K_0)|$ grows at least as fast as $(nL)^{3/2}/K^6$ in probability. This growth rate is derived from the spectral norm of a signal submatrix that arises because, when $K_0<K$, at least one estimated community must contain nodes from two distinct true communities. The separation condition in Assumption \ref{assump:a4} ensures that the aggregated connectivity profiles of these true communities differ by a non‑vanishing amount $\eta$, which translates into a deterministic signal of order $\sqrt{nL}/K^2$ in the normalized aggregation matrix $\widetilde{A}^{\text{agg}}$. After controlling the noise part via Lemma \ref{lem:noise_full} given in the appendix, the cubic trace $T(K_0)$ inherits a magnitude of order $(nL)^{3/2}/K^6$. 

Together with Theorem \ref{thm:main}, Theorem \ref{thm:power} ensures a sharp phase transition: $|T(K_0)|$ is $O_P(1)$ when $K_0 = K$ but diverges at a polynomial rate when $K_0 < K$. This behavior is exploited in our sequential algorithms to consistently estimate $K$. Thus, Theorem \ref{thm:power} is not merely a technical bound; it is the cornerstone for proving the consistency of our sequential estimation algorithms designed in the next section.
\section{Hypothesis testing algorithms and their consistency in the estimation of $K$}\label{sec:algorithm}
The asymptotic normality of $T(K_0)$ under $H_0$ (Theorem \ref{thm:main}) and its divergence under $H_1$ (Theorem \ref{thm:power}) provide a rigorous foundation for estimating the number of communities $K$ via sequential testing. In this section, we present two practical algorithms: the Normalized Aggregation Spectral Test (NAST) and the Sequential Ratio-based Normalized Aggregation Spectral Test (SR-NAST). Both methods are computationally efficient, requiring only a single community detection step per candidate $K_0$, and both are provably consistent.
\subsection{Normalized Aggregation Spectral Test (NAST)}
NAST directly implements the sequential goodness-of-fit test. Starting with $K_{0}=1$, our NAST repeatedly invokes a community detection routine to obtain an estimated community label vector, estimates layer-wise connectivity and probability matrices, forms the normalized aggregation matrix, and evaluates the cubic-trace test statistic $T$. The loop terminates when \(|T|\) falls below a prescribed threshold \(t_n\) (e.g., $\log n$), at which point the current $K_{0}$ is returned as the estimated number of communities. In this paper, we use the bias-adjusted SoS algorithm of \citep{lei2023bias} as the community detection technique, because its mis-clustering error satisfies $m=o(\sqrt{n})$ under mild conditions, thereby guaranteeing that Assumption \ref{assump:a3} holds and the asymptotic null distribution $N(0,1)$ is preserved. The details of our NAST method is summarized in Algorithm \ref{alg:test}. 

\begin{algorithm}[H]
\caption{Normalized Aggregation Spectral Test (NAST)}\label{alg:test}
\begin{algorithmic}[1]
\Require Adjacency matrices \(\{A_\ell\}_{\ell=1}^L\), threshold $t_{n}$
\Ensure Estimated number of communities \(\hat{K}\)
\State Initialize \(K_0 \gets 1\)
\Repeat
\State Obtain \(\{\hat{P}_\ell\}^{L}_{\ell}\) via Algorithm \ref{alg:estimation} using candidate community number $K_0$.
\State Compute \(T(K_{0})\) via Equation (\ref{eq:T}).
\If{\(|T(K_{0})| < t_{n}\)}
    \State Accept \(H_0\), set \(\hat{K} = K_0\)
\Else
    \State \(K_0 \gets K_0 + 1\)
\EndIf
\Until{\(H_0\) accepted}
\State \Return \(\hat{K}\)
\end{algorithmic}
\end{algorithm}

The threshold $t_n$ must be chosen to separate the null distribution from the alternative. The following theorem provides explicit conditions on $t_n$ that guarantee consistency.

\begin{thm}[Consistency of NAST]\label{thm:consistency}
Suppose that Assumptions \ref{assump:a1}--\ref{assump:a5} all hold. Define the NAST estimator $\hat{K}$ as in Algorithm \ref{alg:test}, where the threshold $t_n$ satisfies:
\begin{itemize}
    \item $t_n \to \infty$,
    \item $t_n = o\!\left( (nL)^{3/2} / K^6 \right)$.
\end{itemize}
Then,
\[
\lim_{n \to \infty} \mathbb{P}\bigl( \hat{K} = K \bigr) = 1.
\]
\end{thm}

Theorem \ref{thm:consistency} follows directly from Theorems \ref{thm:main} and \ref{thm:power}. The conditions on $t_n$ ensure that $t_n$ grows slower than the divergence rate under $H_1$ but faster than the bounded fluctuations under $H_0$. Thus, as $n \to \infty$, $\mathbb{P}(|T(K_0)| > t_n) \to 1$ for all $K_0 < K$, while $\mathbb{P}(|T(K)| \le t_n) \to 1$. Consequently, NAST rejects all $K_0 < K$ and accepts exactly at $K_0 = K$ with probability approaching one. A simple valid choice is $t_n = \log n$, which satisfies both conditions under Assumption \ref{assump:a5} because $(nL)^{3/2}/K^6$ grows much faster than $\log n$. In practice, one might consider thresholds of the form $c \log n$ for some constant $c > 0$, but the asymptotic theory remains valid for any such threshold sequence satisfying conditions in Theorem \ref{thm:consistency}.

\subsection{Sequential Ratio-Based Normalized Aggregation Spectral Test (SR-NAST)}
SR-NAST replaces the absolute test statistic in NAST with a ratio statistics. This method leverages the distinct asymptotic behavior of this ratio: it remains stochastically bounded when the candidate number of communities is below the true value but diverges to infinity when the candidate number equals the true value. This sharp phase transition allows the algorithm to reliably identify the true number of communities using a simple, slowly diverging threshold (e.g., $\log n$).

This method examines the ratio statistic defined below:
\begin{align}\label{etaK0}
\eta_{K_0} = |T(K_0-1)| / |T(K_0)|\qquad{K_{0}=2,3,\ldots,K}.
\end{align}

The behavior of the ratio statistic $\eta_{K_0}$ is characterized by the following theorem, which justifies the stopping rule. 
\begin{thm}[Behavior of the ratio statistic]\label{thm:ratio-behavior}
Under Assumptions \ref{assump:a1}-\ref{assump:a5}:
\begin{enumerate}
  \item For $2 \le K_0 \le K-1$ (underfitting),
  \[
  \eta_{K_0} = O_P(1).
  \]
  \item When $K_0 = K$ (true model),
  \[
  \eta_K \xrightarrow{P} \infty.
  \]
\end{enumerate}
\end{thm}
Theorem \ref{thm:ratio-behavior} captures the phase transition in $\eta_{K_0}$. For $K_0 \le K-1$, both $|T(K_0-1)|$ and $|T(K_0)|$ diverge at the same rate $(nL)^{3/2}/K^6$ by Theorem \ref{thm:power} and its proof, so their ratio is stochastically bounded. At $K_0 = K$, the numerator $|T(K-1)|$ diverges by Theorem \ref{thm:power} while the denominator $|T(K)|$ is $O_P(1)$ by Theorem \ref{thm:main}, causing $\eta_K$ to diverge. This sharp change makes $\eta_{K_0}$ a natural diagnostic for detecting $K$.

SR-NAST sequentially computes $\eta_{K_0}$ and stops when it exceeds a simple threshold $t_{\mathrm{ratio},n}$ (e.g., $t_{\mathrm{ratio},n}= \log n$). Algorithm \ref{alg:sr-nast} provides the complete procedure.

With the result of Theorem \ref{thm:ratio-behavior}, the consistency of SR-NAST follows immediately.
\begin{thm}[Consistency of SR-NAST]\label{thm:srnast-consistency}
Under Assumptions \ref{assump:a1}-\ref{assump:a5}, define the SR-NAST estimator $\hat{K}$ as in Algorithm \ref{alg:sr-nast}, where the threshold $t_{\mathrm{ratio},n}$ satisfies:
\begin{itemize}
    \item $t_{\mathrm{ratio},n} \to \infty$,
    \item $t_{\mathrm{ratio},n} = o\!\left( (nL)^{3/2} / K^6 \right)$.
\end{itemize}
Then,
\[
\lim_{n \to \infty} \mathbb{P}(\hat{K} = K) = 1.
\]
\end{thm}
Theorem \ref{thm:ratio-behavior} implies that at $K_0 = K$, $\eta_K > t_{\mathrm{ratio},n}$ with probability tending to one; while for $K_0 = 2, \dots, K-1$, $\eta_{K_0}$ is bounded and hence $\eta_{K_0} > t_{\mathrm{ratio},n}$ with probability tending to zero. The initial step ($K_0=1$) is handled by the same threshold: if $K=1$, Theorem \ref{thm:main} gives $|T(1)| = O_P(1)$, so $|T(1)| \le t_{\mathrm{ratio},n}$ with high probability; if $K>1$, Theorem \ref{thm:power} gives $|T(1)|$ divergent, so $|T(1)| > t_{\mathrm{ratio},n}$ with high probability, and the algorithm moves to $K_0=2$. Thus, SR-NAST iterates through $K_0=1,2,\dots$ and stops exactly at $K_0=K$ with probability approaching one. The choice $t_{\mathrm{ratio},n}= \log n$ is convenient and satisfies requirements in Theorem \ref{thm:srnast-consistency}.
\begin{algorithm}[H]
\caption{Sequential Ratio-based Normalized Aggregation Spectral Test (SR-NAST)}\label{alg:sr-nast}
\begin{algorithmic}[1]
\Require Multi-layer adjacency matrices \(\{A_\ell\}_{\ell=1}^L\), threshold $t_{\mathrm{ratio},n}>0$
\Ensure Estimated number of communities \(\hat{K}\)
\State Set \(K_0 \gets 1\)
\State Compute \(T(1)\) via Equation (\ref{eq:T})
\If{\(|T(1)| \leq t_{\mathrm{ratio},n}\)}
    \State \(\hat{K} \gets 1\)
    \State \Return \(\hat{K}\)
\EndIf
\State Set \(K_0 \gets 2\)
\Repeat
    \State Compute \(T(K_0)$ via Equation (\ref{eq:T}) with candidate number \(K_0\)
    \State Compute \(\eta_{K_0}\) via Equation (\ref{etaK0})
    \If{\(\eta_{K_0} > t_{\mathrm{ratio},n}\)}
        \State \(\hat{K} \gets K_0\)
    \Else
        \State $K_0 \gets K_0 + 1$
    \EndIf
\Until{$\eta_{K_0} > t_{\mathrm{ratio},n}$}
\State \Return \(\hat{K}\)
\end{algorithmic}
\end{algorithm}

\begin{rem}
SR-NAST is demonstrably more robust than NAST due to fundamental differences in their stopping rules. While both algorithms employ the same test statistic \(T(K_0)\)—whose asymptotic normality under the correct model (\(H_0: K=K_0\)) and divergence under underfitting (\(H_1: K>K_0\)) form the basis for sequential testing—their decision criteria differ critically. NAST stops when the absolute value \(|T(K_0)|\) falls below a predefined threshold \(t_n\). This requires identifying the point where a diverging sequence transitions to an \(O_P(1)\) variable, a task highly sensitive to the precise asymptotic calibration of \(t_n\). In contrast, SR-NAST leverages the ratio statistic \(\eta_{K_0}\), which is \(O_P(1)\) for all \(K_0 < K\) but diverges in probability (\(\eta_K \stackrel{P}{\rightarrow} \infty\)) precisely at \(K_0 = K\). This creates a sharp, discontinuous signal—a jump from stochastic boundedness to explosion—which is inherently easier and more reliable to detect than the continuous attenuation of \(|T(K_0)|\). Consequently, the termination of SR-NAST is less sensitive to the specific choice of its threshold \(t_{\text{ratio},n}\) compared to the delicate balance required for \(t_n\) in NAST. 
\end{rem}
\section{Numerical experiments}\label{sec:experiments}

This section validates the theoretical properties of the proposed test statistic and evaluates the performance of the proposed algorithms. We design experiments to verify the asymptotic normality of the test statistic $T$ under $H_0$, examine the size and power of $T$, and assess the accuracy of NAST and SR-NAST in estimating the true number of communities $K$ across diverse multi-layer networks. All simulations use the bias-adjusted SoS algorithm \citep{lei2023bias} for community detection. Multi-layer network generation follows the multi-layer SBM framework in Definition \ref{def:msbm}, and the community assignments are generated by letting each node belong to each community with equal probability for all experiments.

\texttt{Experiment 1: Asymptotic normality of $T$ under $H_0$.} We validate Theorem \ref{thm:main} by generating multi-layer networks. Networks comprise $n\in\{200,600,1000\}$ nodes, $L = 5$ layers, and $K\in\{2,3,4\}$ communities. Crucially, layer-specific connectivity matrices $B_\ell$ exhibit heterogeneous patterns: diagonal entries $B_{\ell,kk} \sim \text{Uniform}(0.65, 0.75)$ and off-diagonal entries $B_{\ell,kl} \sim \text{Uniform}(0.25, 0.35)$ for $k\neq \ell$, satisfying Assumption \ref{assump:a1} with $\delta = 0.25$ while ensuring $B_\ell \neq B_{\ell'}$ for $\ell \neq \ell'$. For each of 1,000 Monte Carlo replicates, we compute $T$ using Equation (\ref{eq:T}) and plot its empirical distribution in Figure \ref{fig:normality}. The histograms demonstrate alignment with the theoretical \(N(0,1)\) curve across all configurations. This visual concordance empirically confirms Theorem \ref{thm:main}, indicating that the asymptotic null distribution of \(T\) holds robustly even for finite-sized networks.  
\begin{figure}[htp!]
\centering
\resizebox{\columnwidth}{!}{
\subfigure{\includegraphics[width=0.33\textwidth]{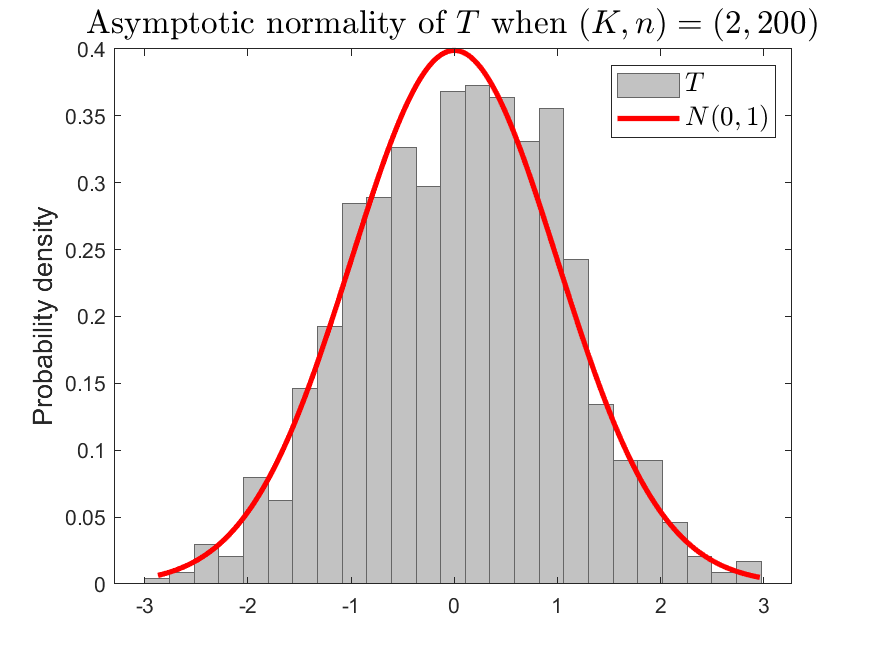}}
\subfigure{\includegraphics[width=0.33\textwidth]{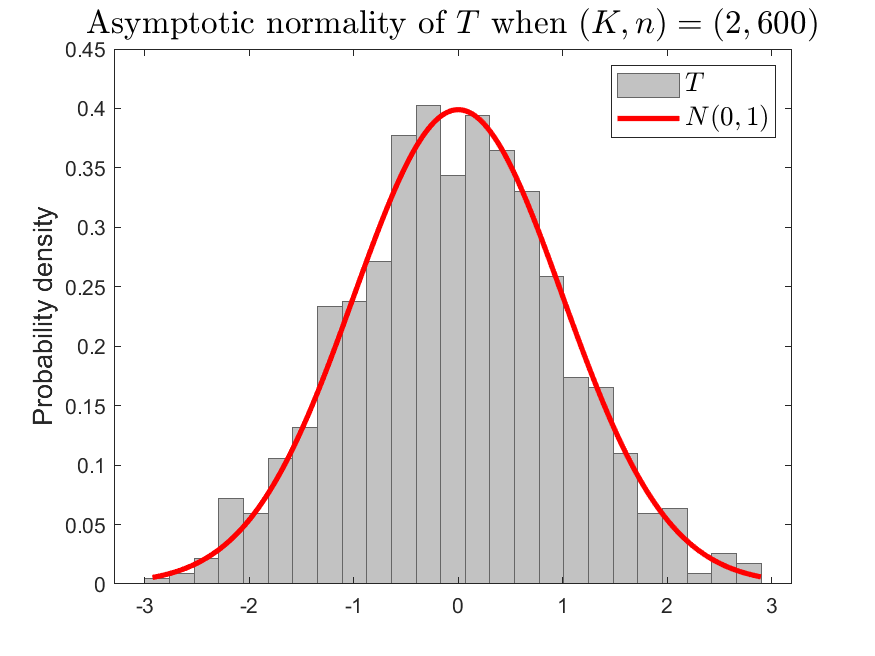}}
\subfigure{\includegraphics[width=0.33\textwidth]{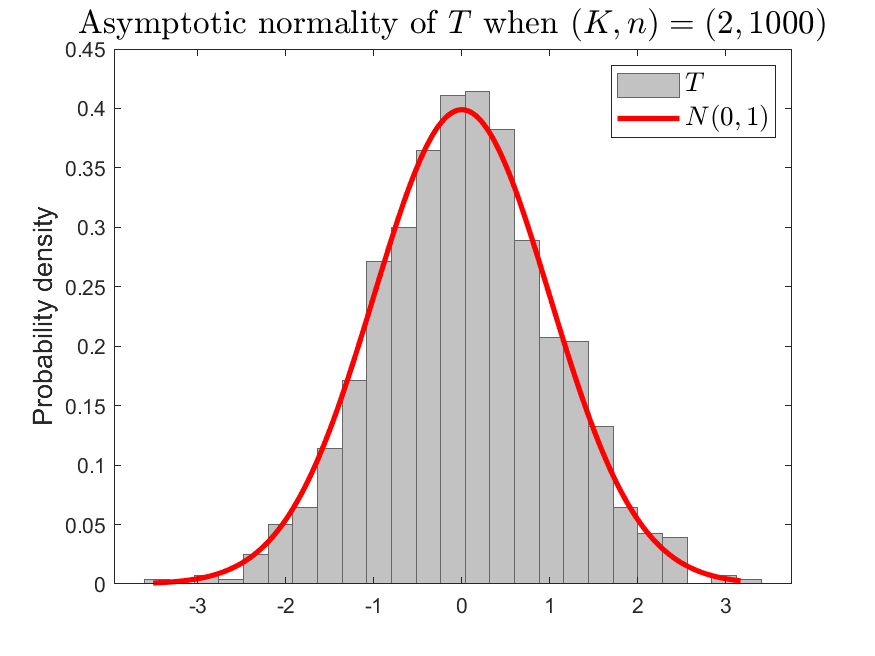}}
}
\resizebox{\columnwidth}{!}{
\subfigure{\includegraphics[width=0.33\textwidth]{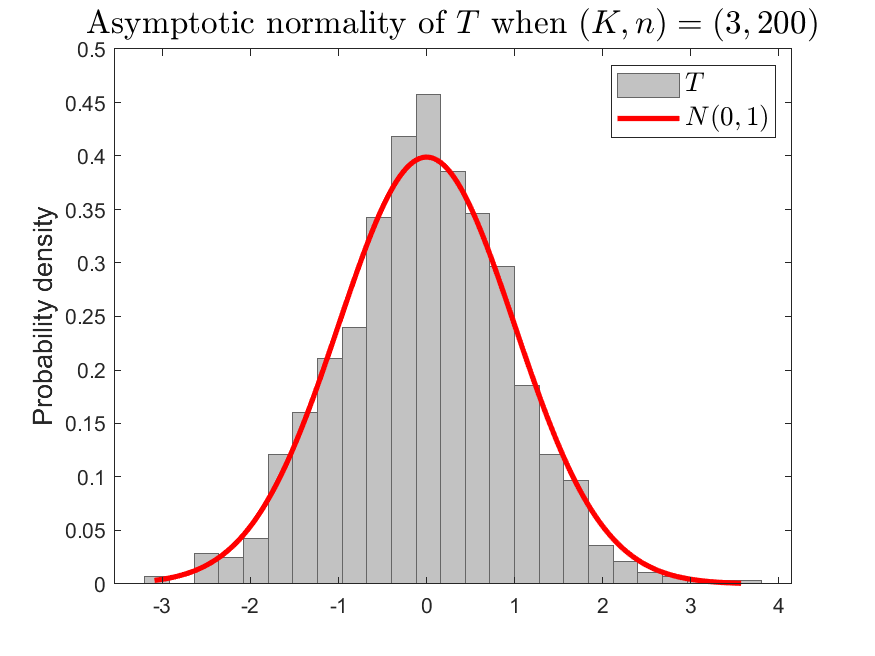}}
\subfigure{\includegraphics[width=0.33\textwidth]{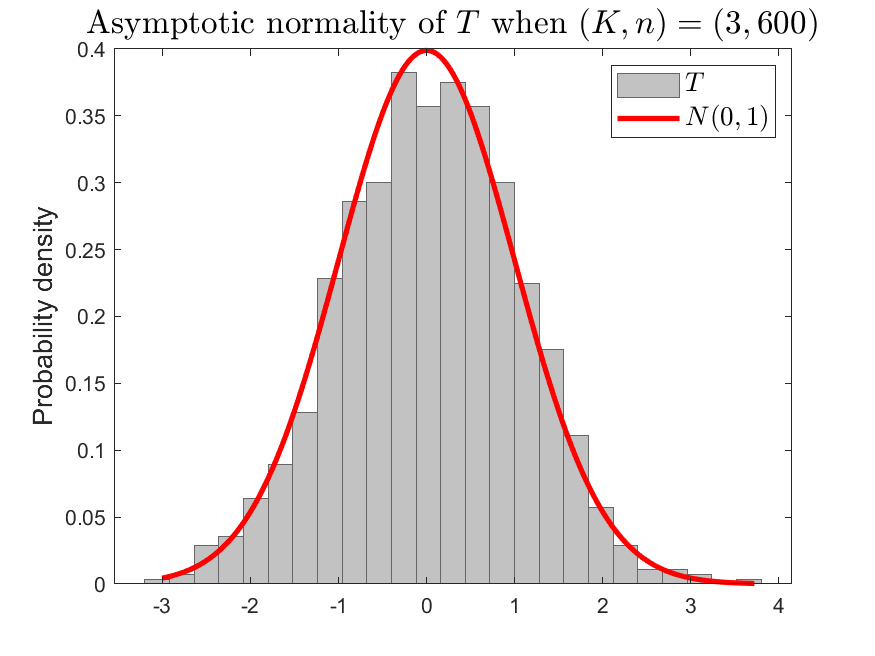}}
\subfigure{\includegraphics[width=0.33\textwidth]{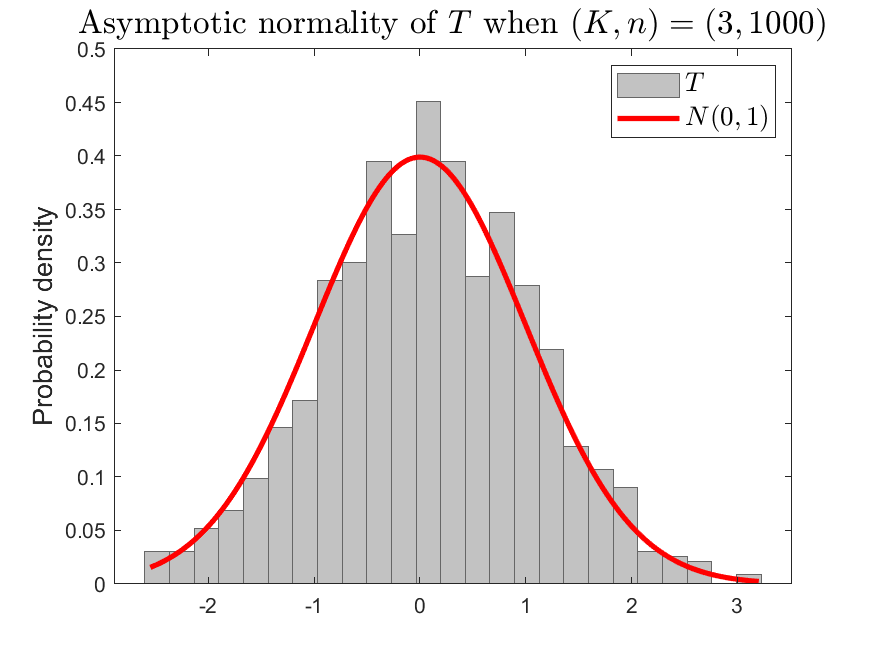}}
}
\resizebox{\columnwidth}{!}{
\subfigure{\includegraphics[width=0.33\textwidth]{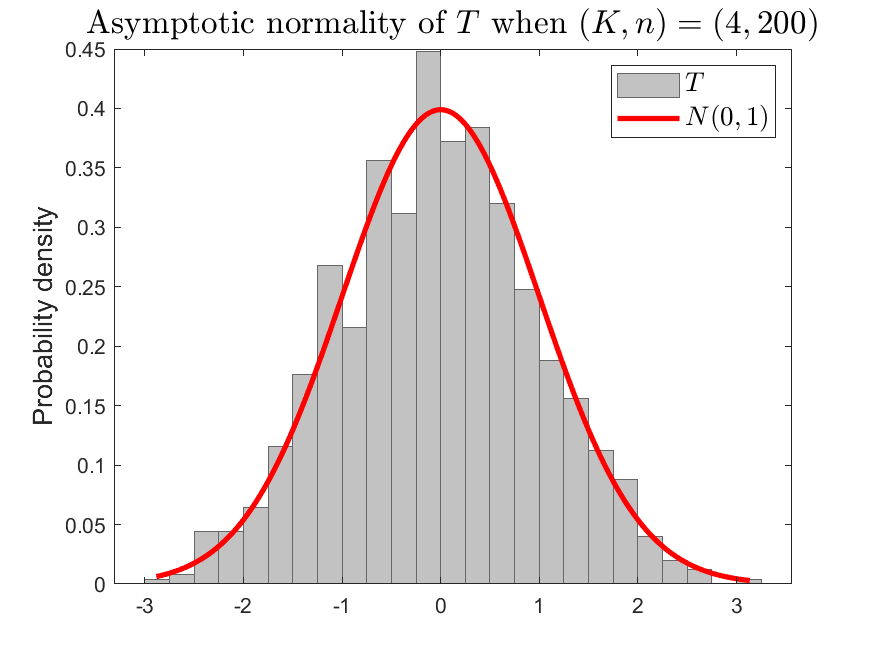}}
\subfigure{\includegraphics[width=0.33\textwidth]{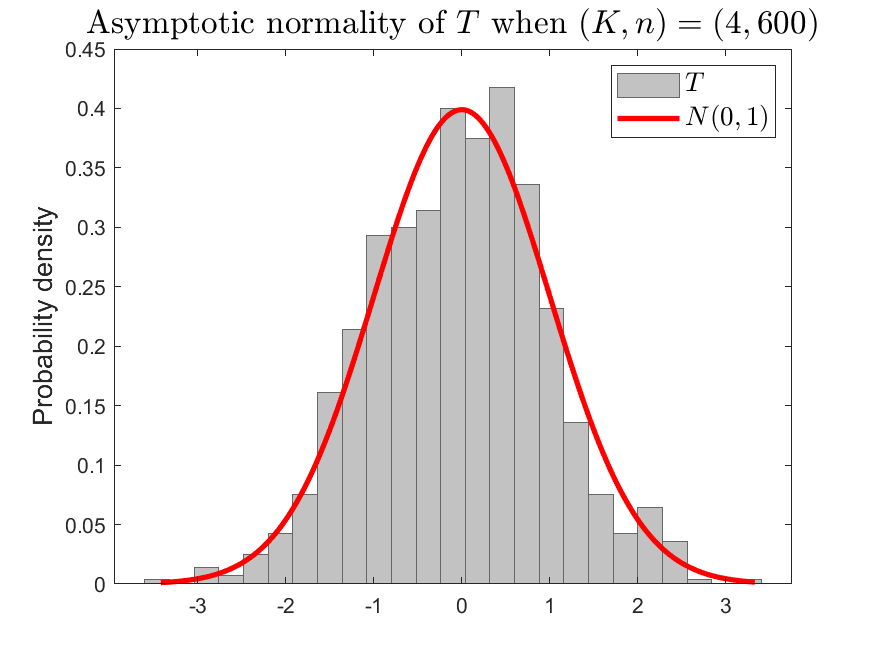}}
\subfigure{\includegraphics[width=0.33\textwidth]{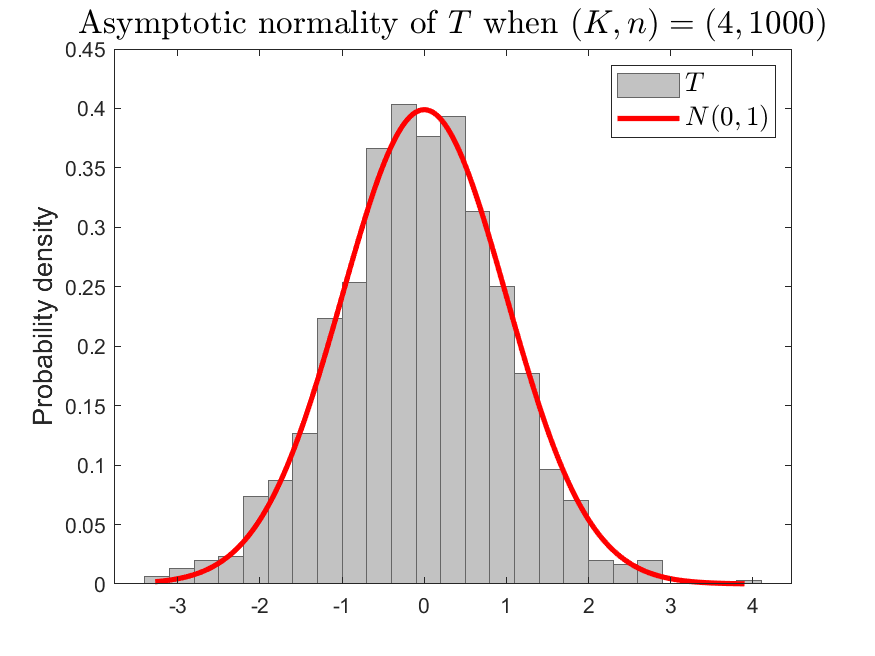}}
}
\caption{Histogram plots of $T$ for different choices of $(K,n)$, where the red curve is the probability density function of $N(0,1)$.}
\label{fig:normality} 
\end{figure}

\texttt{Experiment 2: Size and power of $T$.} We evaluate the test's validity under $H_0$ and sensitivity under $H_1$ for varying true $K$. Networks ($n = 1000$, $L = 10$) are generated with $K\in \{1,2,3,4,5\}$. Connectivity matrices incorporate heterogeneous patterns: $B_{\ell,kl} = \rho(0.3 + \epsilon_\ell + 0.4 \cdot \mathbf{1}(k = l))$, where $\epsilon_\ell \sim \text{Uniform}(-0.1, 0.1)$ introduces layer-specific deviations and $\rho= 0.5$ controls network's sparsity. Table \ref{table:size_power} reports empirical rejection rates over 200 trials. We see that the proposed goodness-of-fit test demonstrates excellent statistical properties under both the null and alternative hypotheses. Under the true null hypothesis \(H_0: K = K_0\), the empirical rejection rates across all tested values of \(K_0\) (ranging from 1 to 5) consistently align with the nominal significance level \(\alpha = 0.05\), with observed rates of 0.055, 0.052, 0.056, 0.050, and 0.055, respectively. This close agreement validates the asymptotic normality of the test statistic \(T\) established in Theorem \ref{thm:main} and confirms accurate Type I error control in finite samples. Under the alternative hypothesis \(H_1: K = K_0 + 1\), the test achieves perfect empirical power (rejection rate = 1.000) in all configurations, demonstrating exceptional sensitivity to underfitting of the community structure. Crucially, these results hold robustly despite explicit incorporation of layer-specific heterogeneity in connectivity matrices \(B_\ell\), highlighting the test’s reliability under realistic multi-layer network conditions. 

\begin{table}[!ht]
\centering
\normalsize
\caption{\normalsize
Empirical rejection rate ($\alpha = 0.05$) under $H_0$ ($K = K_0$) and $H_1$ ($K = K_0 + 1$) over 200 independent trials.}
\label{table:size_power}
\setlength{\tabcolsep}{2pt} 
\begin{tabular}{c|c|c}
$K_0$ & Size ($K = K_0$) & Power ($K = K_0 + 1$) \\
\hline
1 & 0.055 & 1.000\\
2 & 0.052 & 1.000 \\
3 & 0.056 & 1.000 \\
4 & 0.050 & 1.000 \\
5 & 0.055 & 1.000\\
\end{tabular}
\end{table}

\texttt{Experiment 3: Accuracy of NAST and SR-NAST in estimating $K$.} We assess the accuracy of our proposed methods for true $K \in \{1, 2, 3, 4, 5\}$ under various conditions. Networks ($n = 1000$, $L = 10$) use connectivity matrices $B_{\ell,kl} = \rho(0.3 + \epsilon_\ell + 0.4 \cdot \mathbf{1}(k = l))$, where $\epsilon_\ell \sim \text{Uniform}(-0.1, 0.1)$ introduces distinct perturbations per layer and block, with \(\rho\in\{0.01,0.05,0.1,0.15,0.2,0.25,0.3\}\) controlling network's sparsity. We increase $K_0$ sequentially. Table \ref{table:accuracy} shows the proportion of correct estimates $\hat{K} = K$ over 200 trials. Based on the numerical results, we see that the proposed NAST and SR-NAST algorithms demonstrate robust performance in estimating the true number of communities \(K\) across varying sparsity levels \(\rho\), particularly for \(\rho \geq 0.1\). At the lowest sparsity level (\(\rho = 0.01\)), estimation accuracies are highly sensitive to \(K\): while accuracies remain high for \(K = 1\), they drop drastically for \(K\geq2\), indicating that extreme sparsity impedes reliable community recovery in more complex structures. However, as \(\rho\) increases moderately to \(0.1\), accuracies sharply improve across all \(K\), exceeding 92\% in all cases.

\begin{table}[!ht]
\centering
\caption{\normalsize Proportion of correct estimates of $K$ over 200 trials at different sparsity levels $\rho$ for NAST and SR-NAST algorithms.}
\label{table:accuracy}
\small
\begin{tabular}{ccccccccc}
\toprule
 & \multirow{2}{*}{Algorithm} & \multicolumn{7}{c}{Sparsity level ($\rho$)} \\
\cmidrule(lr){3-9}
$K$ & & 0.01 & 0.05 & 0.1 & 0.15 & 0.2 & 0.25 & 0.3 \\
\midrule
\multirow{2}{*}{1} & NAST & 1 & 1 & 1 &1 & 1 & 1 & 1 \\
 & SR-NAST & 1 & 1& 1& 1 & 1 & 1 & 1 \\
\midrule
\multirow{2}{*}{2} & NAST & 0.705 & 1 & 1 & 1 & 1 & 1 &1 \\
 & SR-NAST & 0.545 & 1 & 1 & 1 & 1 & 1 & 1 \\
\midrule
\multirow{2}{*}{3} & NAST & 0& 1 & 1 & 1 & 1 & 1 & 1 \\
 & SR-NAST & 0 & 1 & 1 & 1 & 1 & 1 & 1\\
\midrule
\multirow{2}{*}{4} & NAST & 0 & 0.970 & 1 & 1& 1 & 1& 1\\
 & SR-NAST & 0 & 0.845 & 1 & 1 & 1 & 1 & 1 \\
\midrule
\multirow{2}{*}{5} & NAST & 0 & 0.120 & 1 & 1 & 1 & 1 & 1 \\
 & SR-NAST & 0 & 0.5 & 0.920 & 1& 1 & 1 & 1 \\
\bottomrule
\end{tabular}
\end{table}

\begin{table}[!ht]
\centering
\caption{\normalsize Average values of the test statistic \( T \) over 200 Monte Carlo trials for different candidate community numbers \( K_0 \) when the true number of communities \( K \in \{1, 2, 3, 4, 5\} \). The first \( T \) value satisfying \( |T| < t_{n}\) with $t_{n}=\log n\approx6.9078$ when $n=1000$ (i.e., the null hypothesis \( H_0: K = K_0 \) is accepted) for each true \( K \) is highlighted in bold. The settings of network parameters \((n,L,\theta,\{B_{\ell}\}^{L}_{\ell=1})\) are the same as Experiment 3, with the sparsity parameter $\rho$ being $0.1$.}\label{table:TK0}
\begin{tabular}{c c c cccc cccc ccc}
\hline
 & \multicolumn{10}{c}{\(K_0\)} \\ 
\cmidrule{2-11}
& 1& 2&3&4 &5&6&7&8&9&10&&&\\
\hline
$K=1$ & \textbf{0.0349} & 0.0654 & 0.0776 &0.0817&0.0823&0.0874&0.0870 &0.0909&0.0912&0.0875\\
$K=2$ & 332.1328 & \textbf{0.0270} & 0.5367 &0.8206&5.8074&11.0763&18.5395&27.0380&36.8827&47.2314\\
$K=3$ & 213.7155 & 81.6456 & \textbf{-0.0336} &-0.0277&-0.0151&-0.0083&0.0293&0.1513&0.2080&0.4551\\
$K=4$& 156.9399 & 87.3642 & 33.8397&\textbf{-0.0118}&-0.0059&-0.0009&0.0010&0.0089 &0.0055&0.0216&\\
$K=5$ & 121.0352 &78.9828 & 44.1187 &17.4084&\textbf{0.0947}&0.0952&0.0969&0.1027&0.1072&0.1077\\
\hline
\end{tabular}
\end{table}

Table \ref{table:TK0} empirically validates the theoretical properties of the goodness-of-fit test statistic \( T \) under the sequential testing framework. For each true \( K \), the statistic \( T \) exhibits a sharp phase transition at \( K_0 = K \):  
\begin{itemize}
  \item For the underfitting regime (\( K_0 < K \)), we see that \( |T| \) is exceptionally large (e.g., \( T \geq 17.4084\) for \( K_0 < K \)) when the true $K$ is 5, reflecting systematic model misspecification. This aligns with Theorem \ref{thm:power}, where \( T \) diverges under \( H_1: K > K_0 \) due to unmodeled community structure. 
  \item For the critical transition regime (\( K_0 = K \)), we observe that \( T \) collapses near zero (bold values), with \( |T| \leq 0.0947 \) across all \( K \). This sudden drop confirms asymptotic normality shown in Theorems \ref{thm:main} and \ref{thm:power}.  
  \item For the overfitting regime (\( K_0 > K \)), we see that \( T \) remains stable near zero when $K_0$ is slightly larger than the true $K$ across all \( K \), indicating no evidence against \( H_0 \). Consequently, the test lacks power to reject overfitted models (\(K_0 > K\)). However, our NAST avoids this limitation by its sequential testing starting from \(K_0 = 1\), which ensures the termination at the smallest \(K_0\) where \(H_0\) is accepted, which is typically the true \(K\) (as validated in Table 3).
\end{itemize}

The consistency of this pattern in the test statistic $T$: steep divergence for \( K_0 < K \), immediate normalization at \( K_0 = K \), and sustained stability for \( K_0 > K \), validates the efficacy of the sequential testing algorithm NAST in identifying the true community count.

\begin{table}[!ht]
\centering
\caption{\normalsize Average values of the ratio statistic $\eta_{K_0}$ over 200 Monte Carlo trials for different candidate community numbers $K_0$ when the true number of communities $K \in \{2, 3, 4, 5, 6\}$. The first $\eta_{K_0}$ value satisfying $\eta_{K_0} > t_{\text{ratio},n}$ with $t_{\text{ratio},n} = \log n \approx 6.9078$ when $n = 1000$ for each true $K$ is highlighted in bold. Note that $\eta_{K_0}$ is defined from $K_0 = 2$, while $K_0 = 1$ is recorded in Table~\ref{table:TK0}. The settings of network parameters $(n,L,\theta,\{B_{\ell}\}^{L}_{\ell=1})$ are the same as Experiment 3, with the sparsity parameter $\rho$ being $0.1$.}
\label{table:etaK0}
\begin{tabular}{c c cccccccc}
\hline
 & \multicolumn{9}{c}{$K_0$} \\ 
\cmidrule{2-10}
& 2& 3&4&5&6&7&8&9&10\\
\hline
$K=2$ & \textbf{1629.01} & 1.10 & 1.66 &1.71&1.49&0.91&0.85&0.96&0.95\\
$K=3$ & 2.62 & \textbf{521.08} & 1.09 &1.03&1.03&1.05&1.46&1.57&2.40\\
$K=4$ & 1.80 & 2.59 & \textbf{132.78} &1.01&1.09&1.00&1.00&1.08&1.10\\
$K=5$ & 1.53 & 1.79 & 2.55 &\textbf{63.43}&1.00&1.10&1.04&1.27&1.03\\
$K=6$ & 1.40 & 1.52 & 1.78 &2.52&\textbf{53.66}&1.05&0.99&1.03&1.33\\
\hline
\end{tabular}
\end{table}

Table~\ref{table:etaK0} provides decisive empirical validation of the theoretical phase transition of the ratio statistic \(\eta_{K_0}\), as established in Theorem \ref{thm:ratio-behavior}. The data exhibit a clear and consistent pattern across all true community counts \(K\):
\begin{itemize}
    \item Underfitting regime (\(K_0 < K\)): For every \(K \in \{2,3,4,5,6\}\), the values of \(\eta_{K_0}\) remain stochastically bounded, with all entries satisfying \(\eta_{K_0} \leq 2.62\). This observation confirms the theoretical prediction \(\eta_{K_0} = O_P(1)\) for \(K_0 < K\).
    \item Critical transition (\(K_0 = K\)): A distinct and dramatic peak in \(\eta_{K_0}\) occurs precisely at \(K_0 = K\) (bold values). The statistic jumps to values orders of magnitude larger than those in the underfitting regime (e.g., \(\eta_4 = 132.78\) for \(K=4\), compared to \(\eta_3 = 2.59\)). This sharp increase empirically demonstrates the divergence in probability \(\eta_K \xrightarrow{P} \infty\), driven by the diverging numerator \(|T(K-1)|\) under \(H_1\) and the \(O_P(1)\) denominator \(|T(K)|\) under \(H_0\).
    \item Overfitting regime (\(K_0 > K\)): For \(K_0 > K\), \(\eta_{K_0}\) returns to magnitudes comparable to the underfitting regime, reflecting the absence of further structural signal beyond the true model order.
\end{itemize}

The empirical dichotomy—boundedness for all underspecified models versus a pronounced, unique peak at the true model—robustly underpins the stopping rule of the SR-NAST algorithm. By sequentially testing until \(\eta_{K_0}\) surpasses a slowly growing threshold (e.g., \(t_{\mathrm{ratio},n} = \log n\)), the algorithm consistently and efficiently identifies the true \(K\), leveraging this theoretically guaranteed phase transition. The consistency of this pattern across all tested values of \(K\) confirms the practical reliability of the ratio-based approach for sequential model selection.
\section{Real data}\label{sec:realdata}
In this section, we consider eight real-world networks and report their basic information in Table \ref{realdata}, where the four single-layer networks can be downloaded from \url{http://www-personal.umich.edu/~mejn/netdata/}, and the four multi-layer networks are available at \url{https://manliodedomenico.com/data.php}. 
\begin{table}[h!]
\footnotesize
	\centering
	\caption{Basic information of real-world networks used in this paper.}\label{realdata}
	\begin{tabular}{lp{2cm}p{2cm}p{2cm}llll}
\hline\hline
Dataset&Source&Node meaning&Edge meaning&Layer meaning&$n$&$L$&True $K$\\
\hline
Dolphins&\citep{lusseau2003bottlenose}&Dolphin&Companionship&NA&62&1&2\\
Football&\citep{girvan2002community}&Team&Regular-season game&NA&110&1&11\\
Polbooks&Krebs (unpublished)&Book&Co-purchasing of books by the same buyers&NA&92&1&2\\
UKfaculty&\citep{nepusz2008fuzzy}&Faculty&Friendship&NA&79&1&3\\
Lazega Law Firm&\citep{snijders2006new}&Partners and associates&Partnership&Social type&71&3&Unknown\\
C.Elegans&\citep{chen2006wiring}&Caenorhabditis elegans&Connectome&Synaptic junction&279&3&Unknown\\
CS-Aarhus&\citep{magnani2013combinatorial}&Employees&Relationship&Social type&61&5&Unknown\\
FAO-trade&\citep{de2015structural}&Countries&Trade relationship&Food product&214&364&Unknown\\
\hline\hline
\end{tabular}
\end{table}

\begin{figure}[htp!]
\centering
\resizebox{\columnwidth}{!}{
\subfigure{\includegraphics[width=0.33\textwidth]{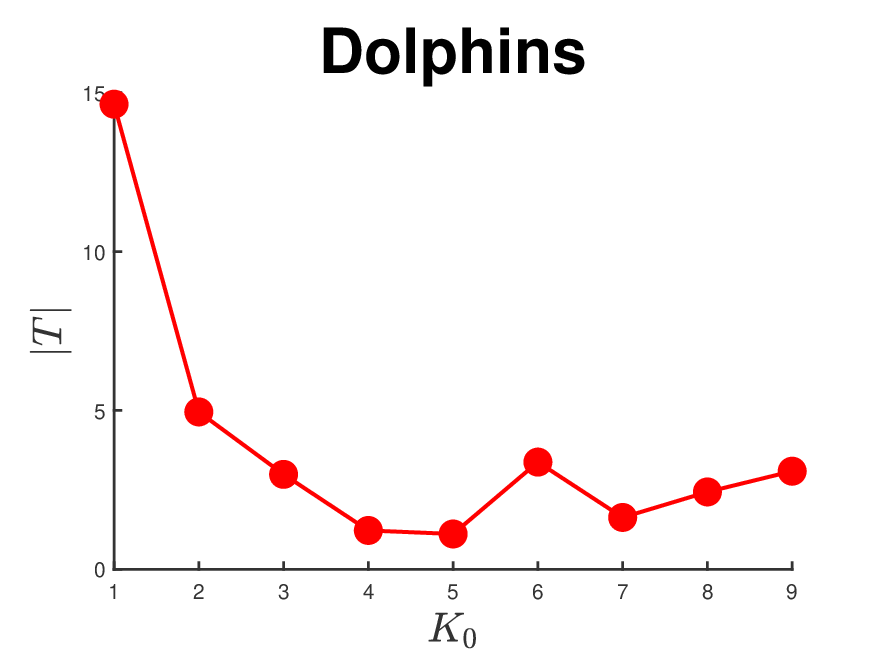}}
\subfigure{\includegraphics[width=0.33\textwidth]{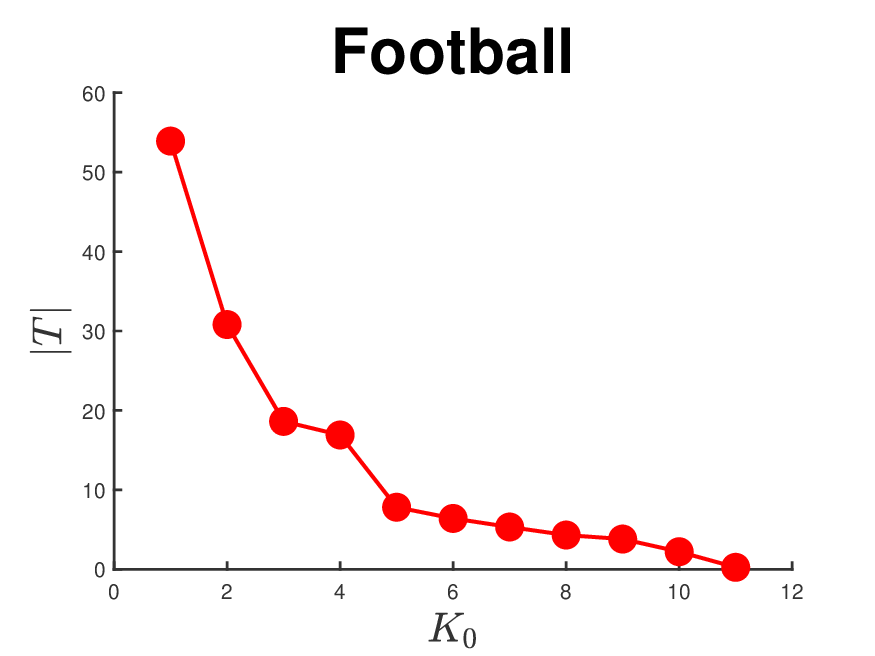}}
\subfigure{\includegraphics[width=0.33\textwidth]{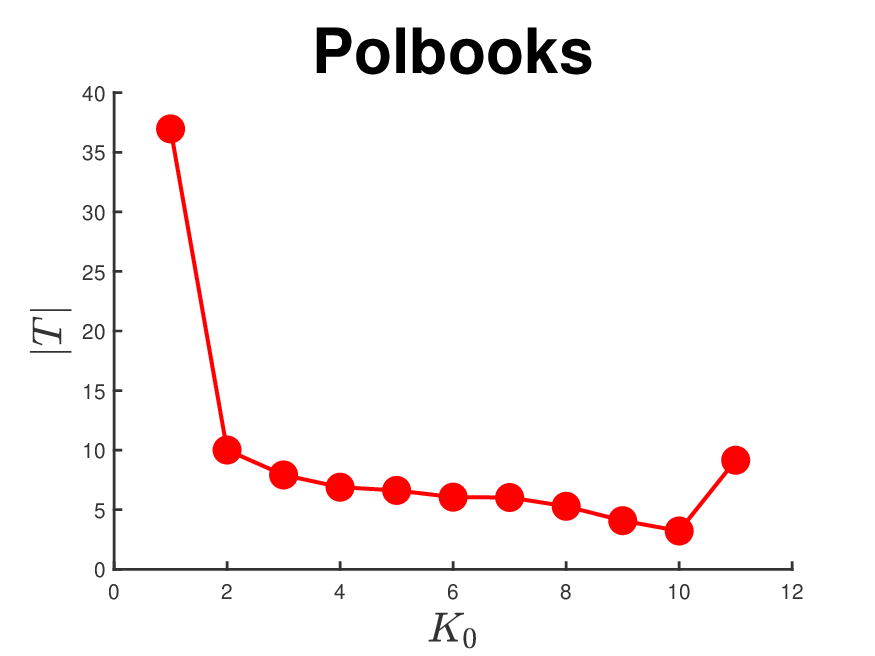}}
\subfigure{\includegraphics[width=0.33\textwidth]{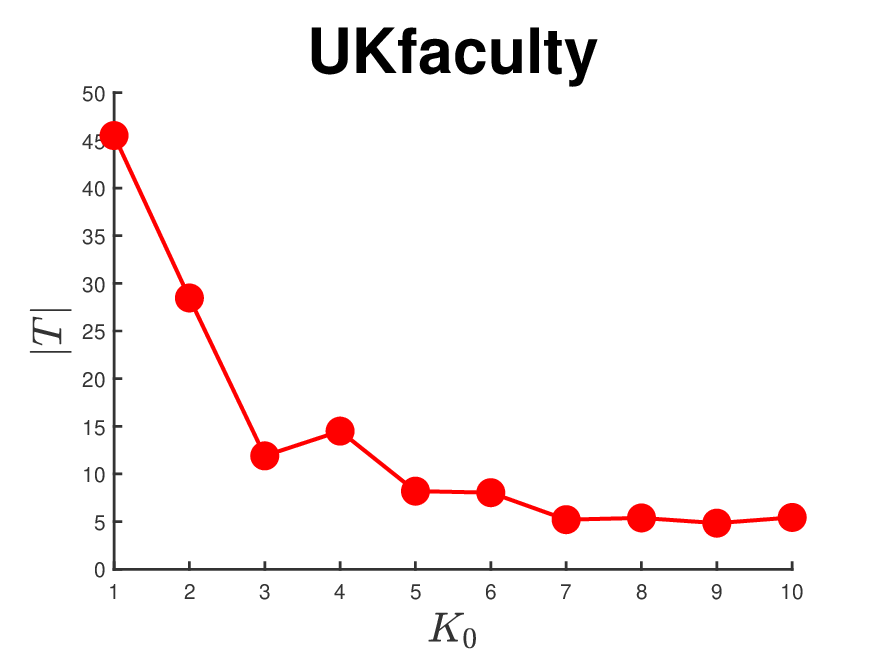}}
}
\resizebox{\columnwidth}{!}{
\subfigure{\includegraphics[width=0.33\textwidth]{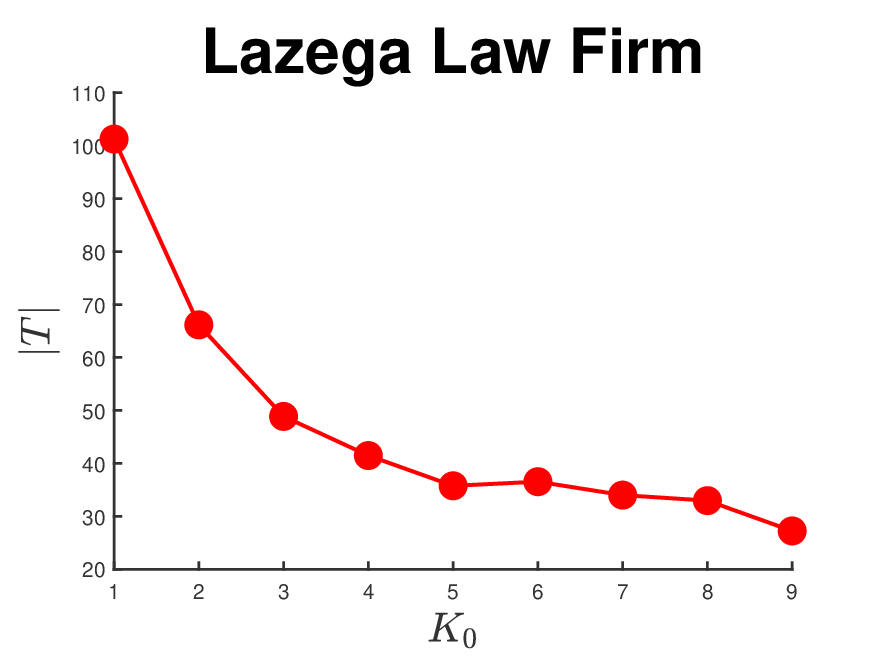}}
\subfigure{\includegraphics[width=0.33\textwidth]{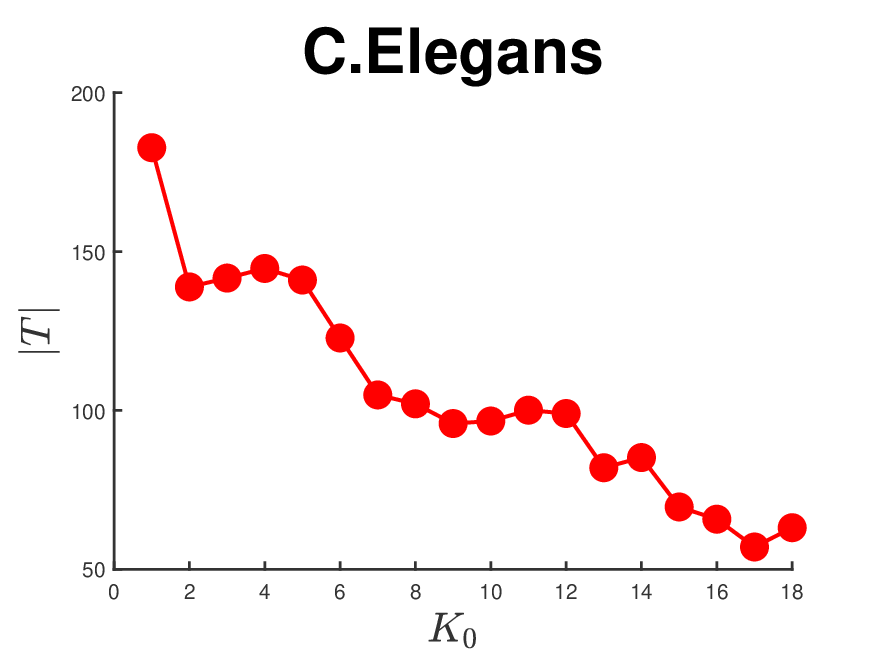}}
\subfigure{\includegraphics[width=0.33\textwidth]{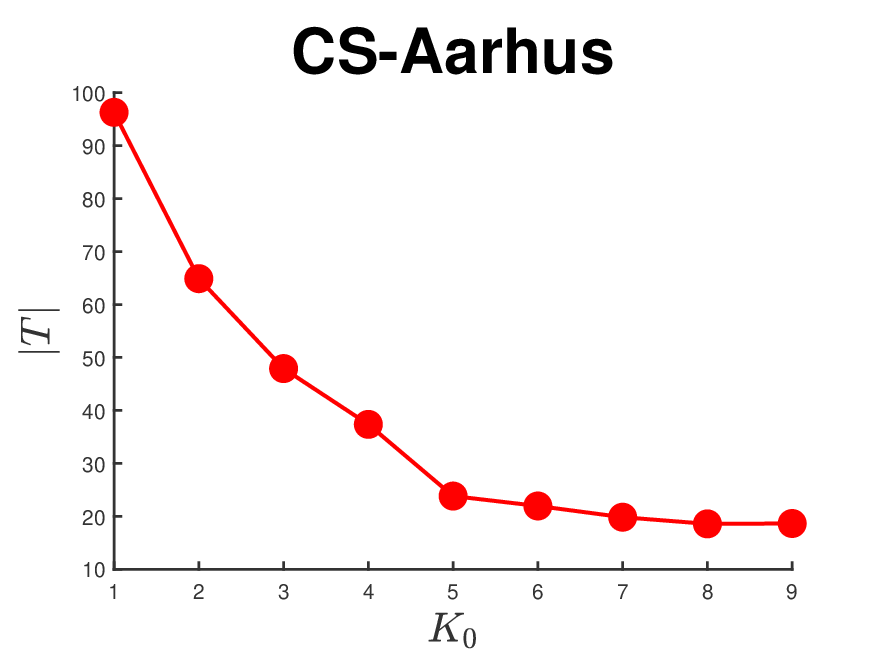}}
\subfigure{\includegraphics[width=0.33\textwidth]{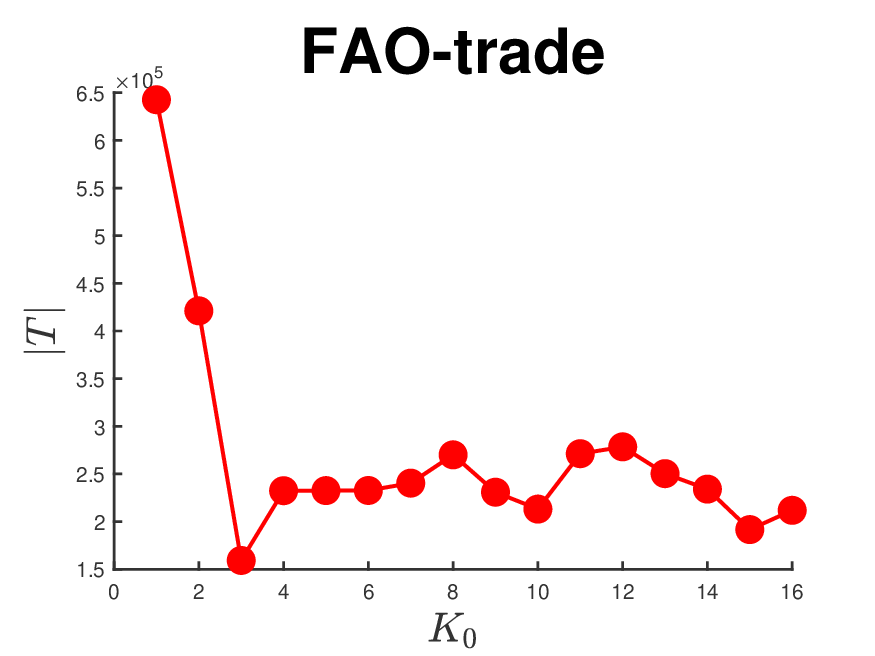}}
}
\caption{$|T|$ against increasing $K_{0}$ for the real-world networks used in this paper.}
\label{fig:realdataTK0} 
\end{figure}

In Figure \ref{fig:realdataTK0}, we plot the test statistic \(|T|\) against candidate community numbers \(K_0\) for each real network. The NAST algorithm stops when \(|T(K_0)|\) falls below a threshold \(t_n\) (e.g., \(t_n = \log n\), approximately 4–6 here). However, for most networks—except FAO-trade—\(|T(K_0)|\) exceeds \(t_n\) across nearly all \(K_0\), meaning NAST would generally fail to stop at any plausible value. In such cases, the ratio statistic \(\eta_{K_0}\) used in SR-NAST, which relies on relative changes in \(|T|\), provides a more robust alternative. Figure \ref{fig:realdataetaK0} illustrates how \(\eta_{K_0}\) successfully identifies the number of communities by detecting sharp drops in \(|T|\) across all eight real-world networks. Based on Figure \ref{fig:realdataetaK0}, we have:

\begin{figure}[htp!]
\centering
\resizebox{\columnwidth}{!}{
\subfigure{\includegraphics[width=0.33\textwidth]{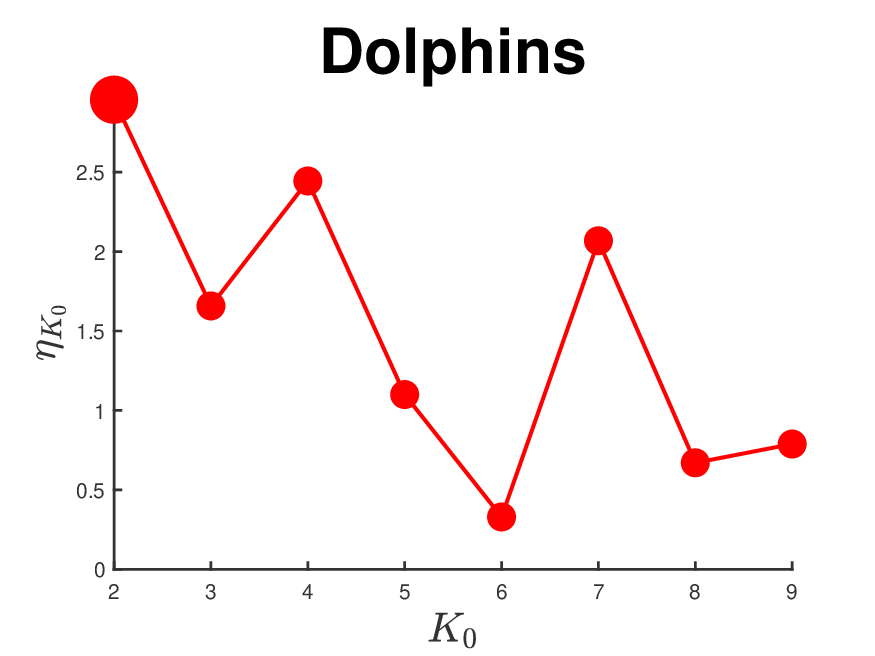}}
\subfigure{\includegraphics[width=0.33\textwidth]{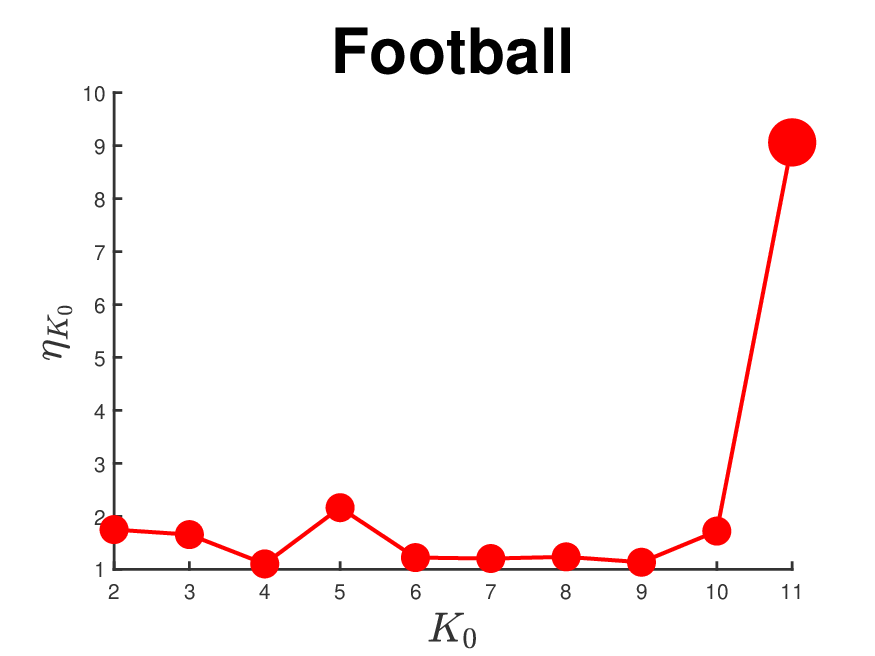}}
\subfigure{\includegraphics[width=0.33\textwidth]{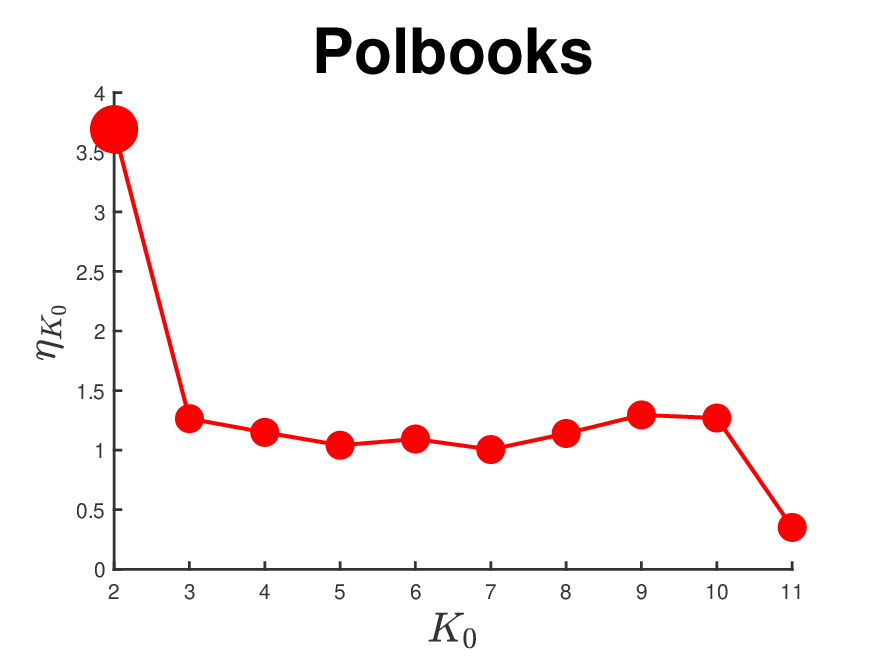}}
\subfigure{\includegraphics[width=0.33\textwidth]{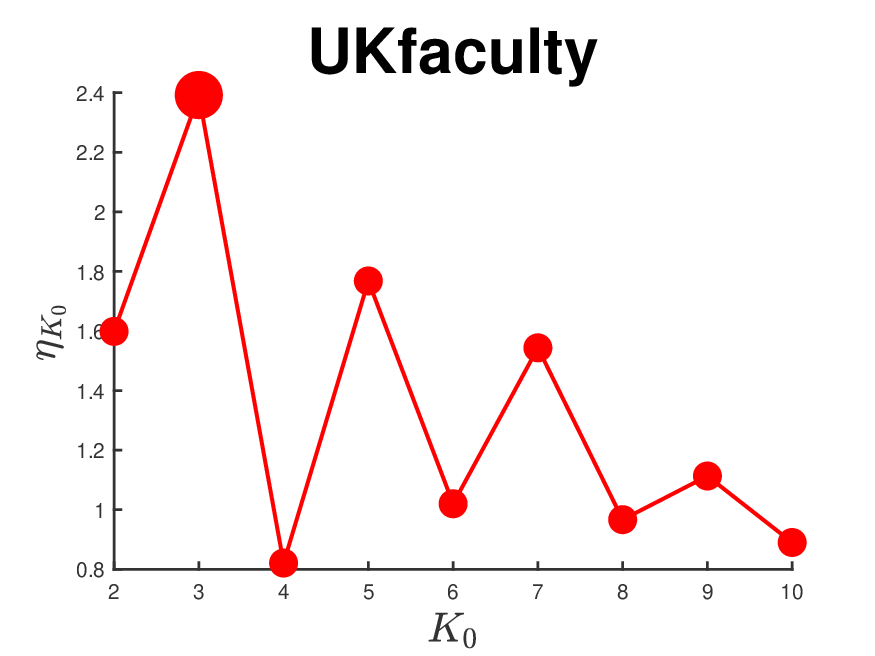}}
}
\resizebox{\columnwidth}{!}{
\subfigure{\includegraphics[width=0.33\textwidth]{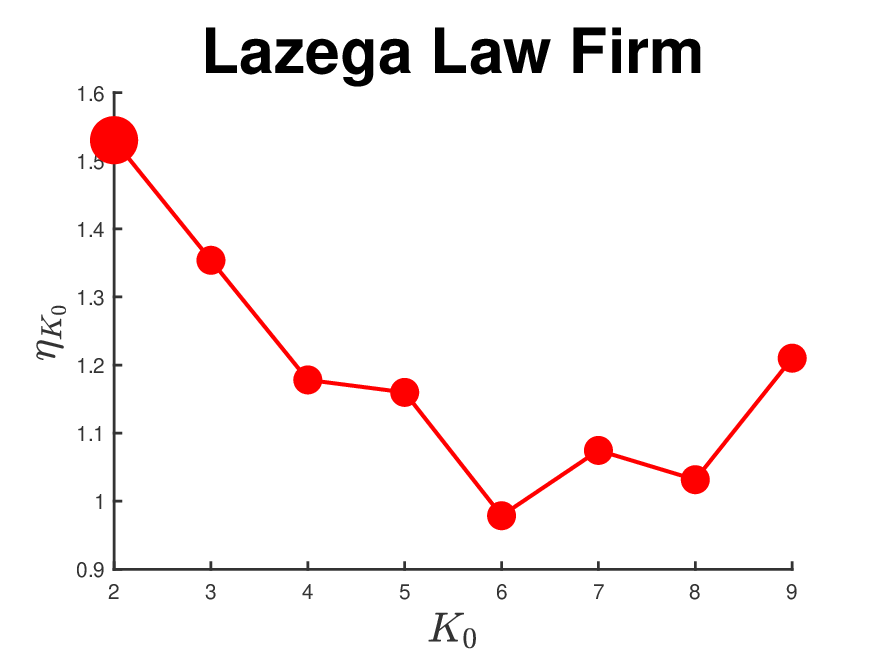}}
\subfigure{\includegraphics[width=0.33\textwidth]{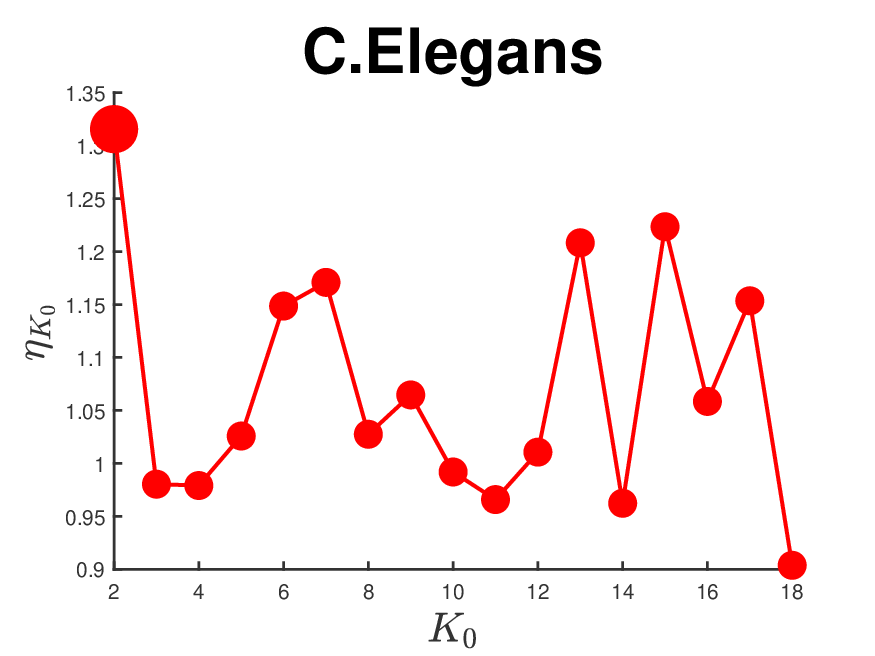}}
\subfigure{\includegraphics[width=0.33\textwidth]{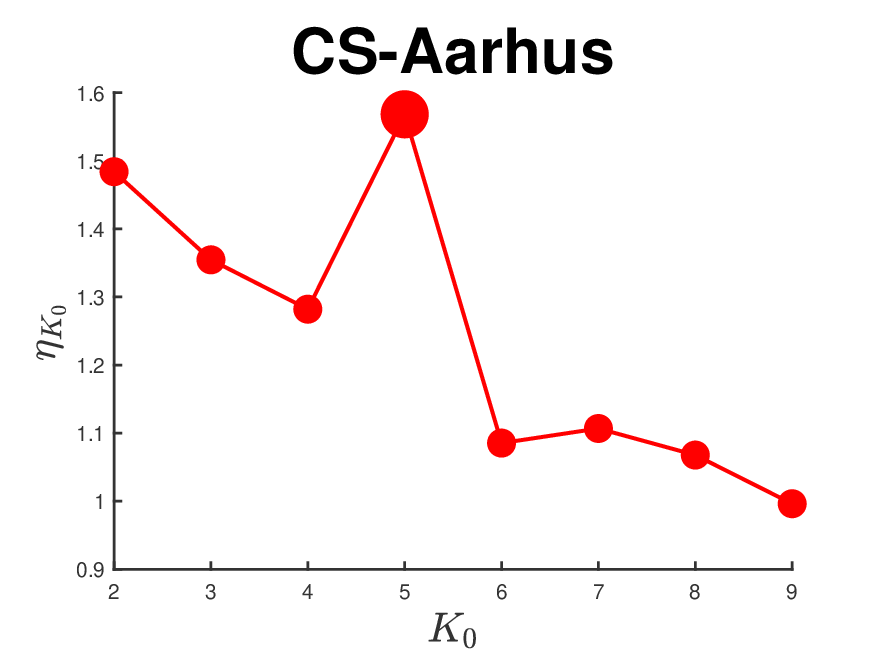}}
\subfigure{\includegraphics[width=0.33\textwidth]{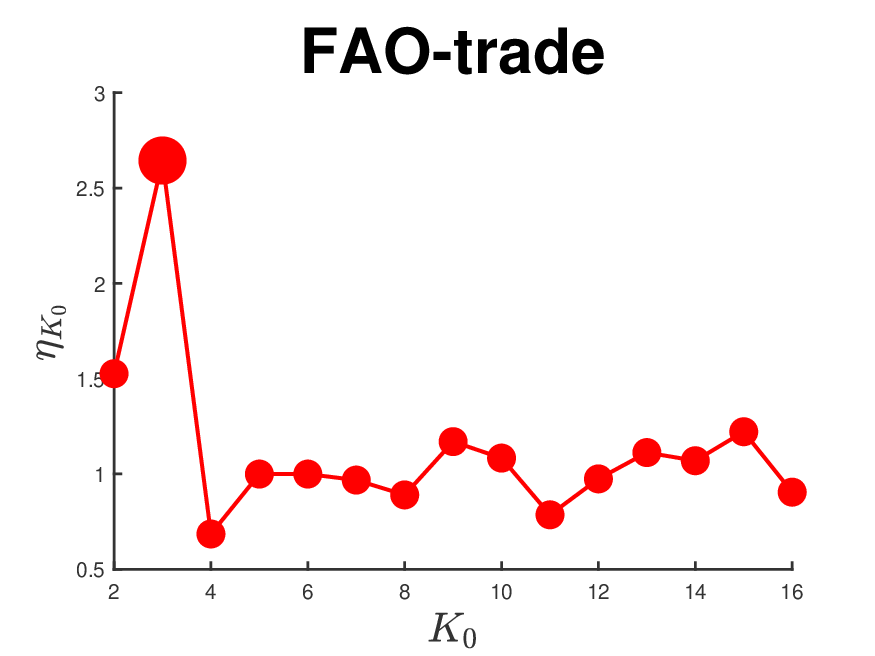}}
}
\caption{$\eta_{K_{0}}$ against increasing $K_{0}$ for the real-world networks used in this paper, with the largest $\eta_{K_{0}}$ value highlighted by a larger dot.}
\label{fig:realdataetaK0} 
\end{figure}
\begin{itemize}
  \item SR-NAST exactly determines the true \(K\) for all four single-layer networks with known ground truth \(K\).   
      For Dolphins, Football, Polbooks, and UKfaclty, $\eta_{K_{0}}$ peaks at \(K_0=2, 11, 2\), and \(3\), respectively, aligning with their true number of communities.
  \item For the four real multi-layer networks with unknown true $K$, SR-NAST identifies a proper number of communities for them. In detail, SR-NAST estimates the number of communities for Lazega Law Firm, C.Elegans, CS-Aarhus, and FAO-trade as 2, 2, 5, and 3, respectively. 
\end{itemize}
\section{Conclusion}\label{sec:conclusion}
This paper introduces a principled framework for determining the number of communities in the multi-layer stochastic block model, addressing a critical gap in the analysis of complex multi-layer networks. Our approaches centers on a novel spectral-based goodness-of-fit test leveraging a normalized aggregation of layer-wise adjacency matrices. Under mild regularity conditions, we establish the asymptotic normality of a test statistic derived from the trace of the cubed normalized matrix when the candidate community count \(K_0\) is correct; we also build the asymptotic power of this test statistic under the alternative hypothesis. The two theoretical foundations facilitate two computationally efficient sequential testing procedures, where both algorithms iteratively evaluate increasing values of \(K_0\) until the null hypothesis \(H_0: K = K_0\) is accepted. Numerical experiments on both simulated and real-world multi-layer networks demonstrate the accuracy and efficiency of our methods in recovering the true number of communities. To the best of our knowledge, this is the first work for determining \(K\) in multi-layer networks with rigorous theoretical guarantees.

Looking forward, several extensions offer promising research directions. An important direction is extending this testing framework to multi-layer degree-corrected SBM to better accommodate the heterogeneous degree distributions common in real-world multi-layer networks. Extending the methodology to directed multi-layer networks or mixed membership models (where nodes belong to multiple communities) presents another significant challenge. Furthermore, developing dynamic versions of the test to handle time-varying node memberships in temporal multi-layer networks, or incorporating node/edge covariates into the framework for covariate-assisted community number estimation, could greatly broaden its applicability.

\bmhead{Acknowledgements} Qing's work was sponsored by the Scientific Research Foundation of Chongqing University of Technology (Grant No. 2024ZDR003), and the Science and Technology Research Program of Chongqing Municipal Education Commission (Grant No. KJQN202401168).
\bmhead{Author Contributions} Huan Qing: Conceptualization, Methodology, Investigation, Software, Formal analysis, Data curation, Writing-original draft, Writing-reviewing \& editing, Funding acquisition.
\section*{Declarations}
\textbf{Conflict of interest} The author declares no conflict of interest.
\begin{appendices}
\section{Proofs of theoretical results}
\subsection{Proof of Lemma \ref{lem:ideal_props}}
\begin{proof}
For expectation, we have \(\mathbb{E}[A_{\ell,ij} - P_{\ell,ij}] = 0\), so \(\mathbb{E}[\widetilde{A}^{\text{ideal}}_{ij}] = 0\). For variance for \(i \neq j\), we have
\[
\operatorname{Var}(\widetilde{A}^{\text{ideal}}_{ij}) = \frac{\operatorname{Var}\left( \sum_{\ell=1}^L (A_{\ell,ij} - P_{\ell,ij}) \right)}{n \sum_{\ell=1}^L P_{\ell,ij}(1-P_{\ell,ij})}.
\]

By Assumption \ref{assump:a3}, the variance of the sum is \(\sum_{\ell=1}^L \operatorname{Var}(A_{\ell,ij}) = \sum_{\ell=1}^L P_{\ell,ij}(1-P_{\ell,ij})\). Thus,
\[
\operatorname{Var}(\widetilde{A}^{\text{ideal}}_{ij}) = \frac{\sum_{\ell=1}^L P_{\ell,ij}(1-P_{\ell,ij})}{n \sum_{\ell=1}^L P_{\ell,ij}(1-P_{\ell,ij})} = \frac{1}{n}.
\]

Given \(\theta\), the entries \(A_{\ell,ij}\) are independent across edges and layers. Thus, \(\{\widetilde{A}^{\text{ideal}}_{ij}\}_{i<j}\) are functions of disjoint independent random variables, hence independent.
\end{proof}
\subsection{Proof of Lemma \ref{lem:ideal_normal}}
\begin{proof}
By Lemma \ref{lem:ideal_props}, we know that \(\widetilde{A}^{\text{ideal}}\) is a real symmetric matrix with:
\begin{itemize}
    \item Diagonal elements: 0,
    \item Off-diagonal elements: Independent given \(\theta\), mean 0, variance \(\frac{1}{n}\).
\end{itemize}

Thus, \(\widetilde{A}^{\text{ideal}}\) is a generalized Wigner matrix with zero diagonal. By Assumption \ref{assump:a1}, we have 
  \[
  \mathbb{E}\left[ \left( \widetilde{A}^{\text{ideal}}_{ij} \right)^4 \right] \leq \frac{C_1}{n^2} \quad \text{for} \quad C_1>0 \mathrm{~is~a~constant}.
  \]
  since \(\widetilde{A}^{\text{ideal}}_{ij} = \frac{Y}{\sqrt{n \sigma^2}}\) where \(Y = \sum_{\ell} (A_{\ell,ij}-P_{\ell,ij})\), \(\sigma^2 = \sum_{\ell} P_{\ell,ij}(1-P_{\ell,ij}) \geq L\delta(1-\delta)\), and  
  \[
  \mathbb{E}[Y^4] = \sum_{\ell} \mathbb{E}[(A_{\ell,ij}-P_{\ell,ij})^4] + 6\sum_{\ell < m} \text{Var}(A_{\ell,ij})\text{Var}(A_{m,ij}) \leq C_2L + C_3L^2,
   \]
   where $C_2$ and $C_3$ are two positive constants. Let $\tilde{A}=\sqrt{n}\widetilde{A}^{\text{ideal}}$, we see that every element of $\tilde{A}$ has mean 0, variance 1, and finite fourth moment. Set $S_{n}=\frac{\tilde{A}^{2}}{n}$. By Theorem 5.8 in \citep{bai2010spectral}, we know that the largest eigenvalue of $S_{n}$ tends to 4 almost surely, i.e., w.h.p. $\|S_{n}\|=\|\frac{\tilde{A}^{2}}{n}\|=\|(\widetilde{A}^{\text{ideal}})^{2}\|$ is 4, which implies that w.h.p. $\|\widetilde{A}^{\text{ideal}}\|=2$. Hence, we have \(\|\widetilde{A}^{\text{ideal}}\| = O_P(1)\).

Let \(\lambda_1, \dots, \lambda_n\) denote the eigenvalues of \(\widetilde{A}^{\text{ideal}}\). For \(f(x) = x^3\), define the linear spectral statistic:  
\[
G_n(f) = \operatorname{tr} \left( f(\widetilde{A}^{\text{ideal}}) \right) - n \int_{-\infty}^{\infty} f(u)  dF(u) = \operatorname{tr} \left( (\widetilde{A}^{\text{ideal}})^3 \right),
\]  
where \(F(u)\) is the semicircle law \(\frac{\sqrt{4 - u^2}}{2\pi} \mathbf{1}_{[-2,2]}(u)\). The integral vanishes because \(f(u) = u^3\) is odd and \(F(u)\) is symmetric:  
\[
\int_{-2}^{2} u^3 \frac{\sqrt{4 - u^2}}{2\pi}  du = 0.
\]

For the trace, we have 
\[
\operatorname{tr} \left( (\widetilde{A}^{\text{ideal}})^3 \right) = \sum_{i,j,k} \widetilde{A}^{\text{ideal}}_{ij} \widetilde{A}^{\text{ideal}}_{jk} \widetilde{A}^{\text{ideal}}_{ki},
\]  
where terms with repeated indices vanish (\(\widetilde{A}^{\text{ideal}}_{ii} = 0\)). For distinct \(i,j,k\), the expectation is zero:  
\[
\mathbb{E}[\widetilde{A}^{\text{ideal}}_{ij} \widetilde{A}^{\text{ideal}}_{jk} \widetilde{A}^{\text{ideal}}_{ki}] = 0,
\]  
due to independence of \(\{\widetilde{A}^{\text{ideal}}_{ij}\}_{i<j}\) and zero mean. Hence, we have
\[
\mathbb{E}\left[ \operatorname{tr} \left( (\widetilde{A}^{\text{ideal}})^3 \right) \right] = 0.
\]
 
By Theorem 2.1 of \citep{wang2021generalization}, for generalized Wigner matrices with \(\mathbb{E}[W_{ij}^4] \leq C/n^2\), \(G_n(f)\) converges weakly to \(N(0, \sigma_f^2)\). The variance \(\sigma_f^2\) is computed as:  
\[
\sigma_f^2 = \frac{1}{4\pi^2} \int_{-2}^{2} \int_{-2}^{2} f'(x)f'(y) V(x,y)  dx  dy, \quad f'(x) = 3x^2,
\]  
where \(V(x,y)\) incorporates fourth-moment dependencies. To compute \(\sigma_f^2 = 6\) for \(f(x) = x^3\), we use a combinatorial approach that leverages the structure of the trace expansion and the properties of \(\widetilde{A}^{\text{ideal}}\). Expanding the trace obtains 
\[
G_n(f) = \sum_{i,j,k} \widetilde{A}^{\text{ideal}}_{ij} \widetilde{A}^{\text{ideal}}_{jk} \widetilde{A}^{\text{ideal}}_{ki}.
\]  

Since \(\widetilde{A}^{\text{ideal}}_{ii} = 0\) (zero diagonal), non-vanishing terms require distinct \(i,j,k\). Thus, we have 
\[
G_n(f) = \sum_{i \neq j, j \neq k, k \neq i} \widetilde{A}^{\text{ideal}}_{ij} \widetilde{A}^{\text{ideal}}_{jk} \widetilde{A}^{\text{ideal}}_{ki}.
\]  

For each unordered triple of distinct nodes \(\{a,b,c\}\), there are \(3! = 6\) ordered permutations \((i,j,k)\) corresponding to the same product \(\widetilde{A}^{\text{ideal}}_{ab}\widetilde{A}^{\text{ideal}}_{bc}\widetilde{A}^{\text{ideal}}_{ca}\) (by symmetry of \(\widetilde{A}^{\text{ideal}}\)). Therefore, we get  
\[
G_n(f) = 6 \sum_{1 \leq a < b < c \leq n} \widetilde{A}^{\text{ideal}}_{ab}\widetilde{A}^{\text{ideal}}_{bc}\widetilde{A}^{\text{ideal}}_{ca}.
\]  

The variance is \(\operatorname{Var}(G_n(f)) = \mathbb{E}[G_n(f)^2]\) (since \(\mathbb{E}[G_n(f)] = 0\)). We have
\[
G_n(f)^2 = 36 \sum_{\{a,b,c\}} \sum_{\{a',b',c'\}} \left( \widetilde{A}^{\text{ideal}}_{ab}\widetilde{A}^{\text{ideal}}_{bc}\widetilde{A}^{\text{ideal}}_{ca} \right) \left( \widetilde{A}^{\text{ideal}}_{a'b'}\widetilde{A}^{\text{ideal}}_{b'c'}\widetilde{A}^{\text{ideal}}_{c'a'} \right).
\]  

By independence of entries and \(\mathbb{E}[\widetilde{A}^{\text{ideal}}_{ij}] = 0\), \(\mathbb{E}[G_n(f)^2]\) is non-zero only when \(\{a,b,c\} = \{a',b',c'\}\). For each such triple:  
\[
\mathbb{E}\left[ \left( \widetilde{A}^{\text{ideal}}_{ab}\widetilde{A}^{\text{ideal}}_{bc}\widetilde{A}^{\text{ideal}}_{ca} \right)^2 \right] = \mathbb{E}\left[ (\widetilde{A}^{\text{ideal}}_{ab})^2 \right] \mathbb{E}\left[ (\widetilde{A}^{\text{ideal}}_{bc})^2 \right] \mathbb{E}\left[ (\widetilde{A}^{\text{ideal}}_{ca})^2 \right] = \left( \frac{1}{n} \right)^3,
\]  
since \(\operatorname{Var}(\widetilde{A}^{\text{ideal}}_{ij}) = \frac{1}{n}\) for \(i \neq j\) by Lemma \ref{lem:ideal_props}. The number of unordered triples is \(\binom{n}{3}\). Thus, we have  
\[
\mathbb{E}[G_n(f)^2] = 36 \cdot \binom{n}{3} \cdot \frac{1}{n^3} = 36 \cdot \frac{n(n-1)(n-2)}{6} \cdot \frac{1}{n^3} = 6 \cdot \frac{(n-1)(n-2)}{n^2}.
\]  

As \(n \to \infty\), we get 
\[
\operatorname{Var}(G_n(f)) = 6 \cdot \frac{(n-1)(n-2)}{n^2} \to 6.
\]  

Hence, we have \(\sigma_f^2 = 6\). Combining \(\sigma_f^2 = 6\) with $\mathbb{E}\left[ \operatorname{tr} \left( (\widetilde{A}^{\text{ideal}})^3 \right) \right] = 0$ gives
\[
G_n(f) \overset{d}{\to} N(0, 6).
\]

Since \(G_n(f) = \operatorname{tr} \left( (\widetilde{A}^{\text{ideal}})^3 \right)\) and \(T^{\text{ideal}} = \frac{1}{\sqrt{6}} G_n(f)\),  we get
\[
T^{\text{ideal}} = \frac{G_n(f)}{\sqrt{6}} \overset{d}{\to} \frac{1}{\sqrt{6}} \cdot N(0,6) = N(0,1),
\]  
where Slutsky's theorem applies as the scaling is deterministic. 
\end{proof}
\subsection{Proof of Lemma \ref{lem:plugin_error}}
\begin{proof}
We show \(\left| \operatorname{tr}\left( (\widetilde{A}^{\text{agg}})^3 \right) - \operatorname{tr}\left( (\widetilde{A}^{\text{ideal}})^3 \right) \right| \xrightarrow{P} 0\). For any \(\ell,k,l\), by Lemma \ref{BlockErrorMain} and Assumptions \ref{assump:a2}-\ref{assump:a3}, we have
\[
|\hat{B}_{\ell,kl} - B_{\ell,kl}| = O_P\left( \frac{K_0(m + \sqrt{\log n})}{n} \right) \triangleq r_n.
\]

For any \(i,j,\ell\),
\[
|\hat{P}_{\ell,ij} - P_{\ell,ij}| \leq \max_{k,l} |\hat{B}_{\ell,kl} - B_{\ell,kl}| + \mathbf{1}\{\hat{\theta}_i \neq \theta_i \text{ or } \hat{\theta}_j \neq \theta_j\}.
\]

The second term has expectation \(\leq 2m/n = o(1)\) by Assumption \ref{assump:a3}, so \(\|\hat{P}_\ell - P_\ell\|_{\max} = O_P(r_n)\). For \(i \neq j\), define 
\[
D_{ij} = \sum_{\ell} P_{\ell,ij}(1-P_{\ell,ij}), \quad \hat{D}_{ij} = \sum_{\ell} \hat{P}_{\ell,ij}(1-\hat{P}_{\ell,ij}).
\]  

By Assumption \ref{assump:a1}, \(D_{ij} \geq L\delta(1-\delta)\). By Lemma \ref{UniformLowerBoundForEstimatedVariance}, \(\hat{D}_{ij} \geq L\delta(1-\delta)/2\) w.h.p. Now:  
\[
\widetilde{A}^{\text{agg}}_{ij} - \widetilde{A}^{\text{ideal}}_{ij} = \frac{Y}{\sqrt{n \hat{D}_{ij}}} - \frac{Z}{\sqrt{n D_{ij}}},
\]  
where \(Y = \sum_{\ell} (A_{\ell,ij} - \hat{P}_{\ell,ij})\), \(Z = \sum_{\ell} (A_{\ell,ij} - P_{\ell,ij})\). Decomposing  $\frac{Y}{\sqrt{n \hat{D}_{ij}}} - \frac{Z}{\sqrt{n D_{ij}}}$ gives
\[
\frac{Y}{\sqrt{n \hat{D}_{ij}}} - \frac{Z}{\sqrt{n D_{ij}}}= \underbrace{\left( \frac{Y}{\sqrt{n \hat{D}_{ij}}} - \frac{Y}{\sqrt{n D_{ij}}} \right)}_{\mathrm{term}~(I)} + \underbrace{\frac{Y - Z}{\sqrt{n D_{ij}}}}_{\mathrm{term}~(II)}.
\]  

For term (I), by the mean value theorem, we have
  \[
  \left| \frac{1}{\sqrt{\hat{D}_{ij}}} - \frac{1}{\sqrt{D_{ij}}} \right| = \frac{1}{2} \xi_{ij}^{-3/2} |\hat{D}_{ij} - D_{ij}|,
  \]  
  for \(\xi_{ij} \in [\min(D_{ij},\hat{D}_{ij}), \max(D_{ij},\hat{D}_{ij})] \geq L\delta(1-\delta)/2\) w.h.p. By simple analysis, we have
  \[
  |\hat{D}_{ij} - D_{ij}| \leq L \|\hat{P} - P\|_{\max} + O(1) = O_P(L r_n).
  \]  
  
Since \(|Y|\leq L\), we have
  \[
  |\frac{Y}{\sqrt{n \hat{D}_{ij}}} - \frac{Y}{\sqrt{n D_{ij}}}| \leq |Y| \cdot O_P\left( \frac{L r_n}{(L\delta(1-\delta)/2)^{3/2}} \right) / \sqrt{n} = O_P\left( L^{2}\cdot \frac{r_n}{L^{3/2} \sqrt{n}} \right) = O_P\left( \frac{r_n\sqrt{L}}{\sqrt{n}} \right).
  \]  

For term (II), we have  
  \[
  |Y - Z| = \left| \sum_{\ell} (P_{\ell,ij} - \hat{P}_{\ell,ij}) \right| \leq L \|\hat{P} - P\|_{\max} = O_P(L r_n).
  \]  

So, we have 
  \[
  |\frac{Y-Z}{\sqrt{nD_{ij}}}| \leq \frac{O_P(L r_n)}{\sqrt{n} \cdot \sqrt{L \delta(1-\delta)}} = O_P\left( r_n \sqrt{\frac{L}{n}} \right).
  \]  

Combining and substituting \(r_n\):  
\[
|\widetilde{A}^{\text{agg}}_{ij} - \widetilde{A}^{\text{ideal}}_{ij}| = O_P\left( r_n \sqrt{\frac{L}{n}} \right) = O_P\left( \frac{K_0(m + \sqrt{\log n})}{n} \sqrt{\frac{L}{n}} \right).
\]

Under Assumption \ref{assump:a3}, this bound converges to 0 in probability. Specifically, the dominant term is
\[
O_P\left( \frac{\sqrt{LK^2_0 \mathrm{max}(\log n, m^{2})}}{n^{3/2}} \right)=O_P\left( \sqrt{\frac{LK_0^2 \mathrm{max}(\log n, m^2)}{n^3}} \right).
\]

By Assumption \ref{assump:a3},  we have 
\[
|\widetilde{A}^{\text{agg}}_{ij} - \widetilde{A}^{\text{ideal}}_{ij}| \xrightarrow{P} 0.
\]  

Thus, we have \(\|\widetilde{A}^{\text{agg}} - \widetilde{A}^{\text{ideal}}\|_{\max} \xrightarrow{P} 0\). By Assumption \ref{assump:a3}, we have
\[
\|\widetilde{A}^{\text{agg}} - \widetilde{A}^{\text{ideal}}\|_F^2 = \sum_{i,j} |\widetilde{A}^{\text{agg}}_{ij} - \widetilde{A}^{\text{ideal}}_{ij}|^2 \leq n^2 \|\widetilde{A}^{\text{agg}} - \widetilde{A}^{\text{ideal}}\|_{\max}^2 =O_P(\frac{LK_0^2\mathrm{max}(\log n, m^2)}{n}) =o_P(1).
\]

Let \(\Delta = \widetilde{A}^{\text{agg}} - \widetilde{A}^{\text{ideal}}\). Then:  
\[
\begin{aligned}
&\left| \operatorname{tr}\left( (\widetilde{A}^{\text{agg}})^3 \right) - \operatorname{tr}\left( (\widetilde{A}^{\text{ideal}})^3 \right) \right| \\
&= \left| \operatorname{tr}\left( (\widetilde{A}^{\text{ideal}} + \Delta)^3 - (\widetilde{A}^{\text{ideal}})^3 \right) \right| \\
&= \left| \operatorname{tr}\left( 3(\widetilde{A}^{\text{ideal}})^2 \Delta + 3\widetilde{A}^{\text{ideal}} \Delta^2 + \Delta^3 \right) \right| \\
&\leq 3 \| (\widetilde{A}^{\text{ideal}})^2 \Delta \|_F + 3 \| \widetilde{A}^{\text{ideal}} \Delta^2 \|_F + \| \Delta^3 \|_F.
\end{aligned}
\]  

For the first term, we have  
  \[
  \| (\widetilde{A}^{\text{ideal}})^2 \Delta \|_F \leq \| (\widetilde{A}^{\text{ideal}})^2 \| \cdot \| \Delta \|_F \leq \| \widetilde{A}^{\text{ideal}} \|^2 \cdot \| \Delta \|_F.
  \] 
   
  By Lemma \ref{lem:ideal_normal}, \(\| \widetilde{A}^{\text{ideal}} \| = O_P(1)\), and \(\| \Delta \|_F = o_P(1)\), so this is \(o_P(1)\). For the second term, we have  
  \[
  \| \widetilde{A}^{\text{ideal}} \Delta^2 \|_F \leq \| \widetilde{A}^{\text{ideal}} \| \cdot \| \Delta^2 \|_F \leq \| \widetilde{A}^{\text{ideal}} \| \cdot \| \Delta \|_F^2 = o_P(1).
  \]  
  
For the third term, we have  
  \[
  \| \Delta^3 \|_F \leq \| \Delta \|^2 \cdot \| \Delta \|_F \leq \| \Delta \|_F^3 = o_P(1).
  \]  
  
Thus, \(\operatorname{tr}\left( (\widetilde{A}^{\text{agg}})^3 \right) - \operatorname{tr}\left( (\widetilde{A}^{\text{ideal}})^3 \right) = o_P(1)\), and \(T - T^{\text{ideal}} = o_P(1)\). 
\end{proof}
\subsection{Proof of Theorem \ref{thm:main}}
\begin{proof}
By Lemma \ref{lem:ideal_normal}, \(T^{\text{ideal}} \xrightarrow{d} N(0,1)\). By Lemma \ref{lem:plugin_error}, \(T - T^{\text{ideal}} = o_P(1)\). By Slutsky's theorem, \(T \xrightarrow{d} N(0,1)\).
\end{proof}

\subsection{Proof of Theorem \ref{thm:power}}
\begin{proof}

\textbf{Step 1: Existence of a large-signal submatrix.}
Since $K_0 < K$, by the pigeonhole principle, there exists an estimated community that contains nodes from at least two distinct true communities. Combined with Assumption \ref{assump:a2} (balanced communities), we obtain the following lemma.

\begin{lem}\label{lem:large_submatrix}
Under Assumption \ref{assump:a2} and $K_0 < K$ (i.e., under the alternative hypothesis), there exist:
\begin{itemize}
    \item an estimated community $a \in [K_0]$,
    \item two distinct true communities $k_1, k_2 \in [K]$,
    \item a third true community $k_3 \in [K]$, and
    \item an estimated community $b \in [K_0]$,
\end{itemize}
such that defining
\[
S_1 = \{ i \in \mathcal{C}_{k_1} : \hat{\theta}_i = a \}, \quad S_2 = \{ i \in \mathcal{C}_{k_2} : \hat{\theta}_i = a \}, \quad T' = \{ j \in \mathcal{C}_{k_3} : \hat{\theta}_j = b \},
\]
we have
\[
|S_1| \ge \frac{c_0 n}{2 K^2}, \quad |S_2| \ge \frac{c_0 n}{2 K^2}, \quad |T'| \ge \frac{c_0 n}{2 K^2},
\]
and
\begin{align}\label{Equ1}
\left| \sum_{\ell=1}^{L} \bigl( B_{\ell, k_1 k_3} - B_{\ell, k_2 k_3} \bigr) \right| \ge \frac{\eta L}{2}.
\end{align}
\end{lem}

\begin{proof}
For each true community \(k \in [K]\), denote by \(\mathcal{C}_k = \{ i: \theta_i = k \}\) the set of nodes belonging to true community \(k\). By Assumption \ref{assump:a2}, we have
\[
|\mathcal{C}_k| \ge \frac{c_0 n}{K} \quad \text{for all } k \in [K].
\]

The estimated community assignment \(\hat{\theta}\) partitions each \(\mathcal{C}_k\) into at most \(K_0\) disjoint subsets, each corresponding to one of the \(K_0$ estimated communities. By the pigeonhole principle, for each \(k\) there exists at least one estimated community label \(a_k \in [K_0]\) such that
\[
\bigl| \{ i \in \mathcal{C}_k : \hat{\theta}_i = a_k \} \bigr| \ge \frac{|\mathcal{C}_k|}{K_0}.
\]
Consequently,
\[
\bigl| \{ i \in \mathcal{C}_k : \hat{\theta}_i = a_k \} \bigr| \ge \frac{c_0 n}{KK_0}.
\]

Since \(K_0 < K\) and both are integers, we have \(K_0 \le K-1\). Therefore,
\[
\frac{1}{K_0} \ge \frac{1}{K-1} > \frac{1}{K},
\]
which implies
\[
\frac{c_0 n}{KK_0} \ge \frac{c_0 n}{K(K-1)} \ge \frac{c_0 n}{K^2} \frac{K}{K-1}.
\]
For \(K \ge 2\), the factor \(\frac{K}{K-1} \ge 1\). Hence, a simpler and slightly weaker lower bound is
\[
\bigl| \{ i \in \mathcal{C}_k : \hat{\theta}_i = a_k \} \bigr| \ge \frac{c_0 n}{K^2}.
\]

\(c_0 n/(2K^2) \le c_0 n/K^2\) gives
\[
\bigl| \{ i \in \mathcal{C}_k : \hat{\theta}_i = a_k \} \bigr| \ge \frac{c_0 n}{2K^2}.
\]

Now, we have defined a mapping \(k \mapsto a_k\) from the \(K\) true communities to the \(K_0\) estimated communities. Since \(K > K_0\), the pigeonhole principle guarantees that there exist two distinct true communities \(k_1\) and \(k_2\) such that \(a_{k_1} = a_{k_2}\). Denote this common estimated community by \(a\). Define
\[
S_1 = \{ i \in \mathcal{C}_{k_1} : \hat{\theta}_i = a \}, \quad S_2 = \{ i \in \mathcal{C}_{k_2} : \hat{\theta}_i = a \}.
\]

From the construction of \(a_{k_1}\) and \(a_{k_2}\), we immediately obtain
\[
|S_1| \ge \frac{c_0 n}{2K^2}, \qquad |S_2| \ge \frac{c_0 n}{2K^2}.
\]

By Assumption \ref{assump:a4}, for the pair \((k_1, k_2)\), there exists a true community \(k_3 \in [K]\) such that
\[
\bigl| \bar{B}_{k_1 k_3} - \bar{B}_{k_2 k_3} \bigr| \ge \eta.
\]

Expanding the definition of \(\bar{B}\) gives
\[
\Bigl| \sum_{\ell=1}^L \bigl( B_{\ell, k_1 k_3} - B_{\ell, k_2 k_3} \bigr) \Bigr| \ge \eta L.
\]

Since \(\eta L \ge (\eta/2) L\) for all \(L \ge 1\), we obtain
\[
\Bigl| \sum_{\ell=1}^L \bigl( B_{\ell, k_1 k_3} - B_{\ell, k_2 k_3} \bigr) \Bigr| \ge \frac{\eta L}{2},
\]
which is Equation (\ref{Equ1}).

Finally, consider the true community \(k_3\). Applying the same pigeonhole argument as before, there exists an estimated community \(b \in [K_0]\) such that
\[
\bigl| \{ j \in \mathcal{C}_{k_3} : \hat{\theta}_j = b \} \bigr| \ge \frac{|\mathcal{C}_{k_3}|}{K_0} \ge \frac{c_0 n}{K \cdot K_0} \ge \frac{c_0 n}{2K^2}.
\]

Define \(T' = \{ j \in \mathcal{C}_{k_3} : \hat{\theta}_j = b \}\). Then \(|T'| \ge \frac{c_0 n}{2K^2}\). We have thus identified \(a, k_1, k_2, k_3, b\) that satisfy all the required conditions, completing the proof.
\end{proof}

\textbf{Step 2: Decomposition of the submatrix.}
Let $S = S_1 \cup S_2$ and consider the submatrix $R = (\widetilde{A}^{\text{agg}}_{ij})_{i \in S, j \in T'}$. For $i \in S$ and $j \in T'$, since $\hat{\theta}_i = a$ and $\hat{\theta}_j = b$, we have $\hat{P}_{\ell,ij} = \hat{B}_{\ell, a b}$. Therefore,
\[
\widetilde{A}^{\text{agg}}_{ij} = \frac{\sum_{\ell=1}^{L} (A_{\ell,ij} - \hat{B}_{\ell, a b})}{\sqrt{n \hat{D}_{ab}}}, \quad \text{where } \hat{D}_{ab} = \sum_{\ell=1}^{L} \hat{B}_{\ell, a b} (1 - \hat{B}_{\ell, a b}).
\]

Denote the true probability $P_{\ell,ij} = B_{\ell, \theta_i\theta_j}$. Decompose $R$ into signal and noise parts:
\[
R = M + E, \quad \text{with } M_{ij} = \frac{\sum_{\ell=1}^{L} (P_{\ell,ij} - \hat{B}_{\ell, a b})}{\sqrt{n \hat{D}_{ab}}}, \quad E_{ij} = \frac{\sum_{\ell=1}^{L} (A_{\ell,ij} - P_{\ell,ij})}{\sqrt{n \hat{D}_{ab}}}.
\]

For $i \in S_1$ ($\theta_i = k_1$), $P_{\ell,ij} = B_{\ell, k_1 k_3}$; for $i \in S_2$ ($\theta_i = k_2$), $P_{\ell,ij} = B_{\ell, k_2 k_3}$.

\textbf{Step 3: Bounding the denominator $\hat{D}_{ab}$.}
\begin{lem}\label{lem:bound_Dab}
Suppose that Assumptions \ref{assump:a1} and \ref{assump:a3} hold, and \(K_0<K\) (i.e., under the alternative hypothesis), there exists a constant \(\kappa = \kappa(\delta) > 0\) such that:
\[
\mathbb{P}\left( \kappa L \le \hat{D}_{ab} \le \frac{L}{4} \right) \to 1,
\]
where \(\hat{D}_{ab} = \sum_{\ell=1}^L \hat{B}_{\ell,ab}(1 - \hat{B}_{\ell,ab})\) and \(\kappa := \frac{\delta(1-\delta)}{2}\).
\end{lem}

\begin{proof}
We prove the lemma in three parts: (i) establishing the upper bound, (ii) establishing the lower bound under the conditions of Lemma \ref{lem:large_submatrix}, and (iii) verifying probabilistic convergence.

\textbf{Part 1: Upper bound.}
For any \(x \in [0,1]\), the function \(g(x) = x(1-x)\) satisfies \(g(x) \le 1/4\). Therefore,
\[
\hat{D}_{ab} = \sum_{\ell=1}^L \hat{B}_{\ell,ab}(1-\hat{B}_{\ell,ab}) \le \sum_{\ell=1}^L \frac{1}{4} = \frac{L}{4}.
\]

\textbf{Part 2: Lower bound.}
The lower bound is proved under the conditions of Lemma \ref{lem:large_submatrix}, which provides specific estimated communities \(a, b \in [K_0]\) and subsets \(S_1, S_2 \subseteq \hat{\mathcal{C}}_a\), \(T' \subseteq \hat{\mathcal{C}}_b\) with sizes bounded below:
\begin{align*}
|S_1| \ge \frac{c_0 n}{2K^2}, |S_2| \ge \frac{c_0 n}{2K^2}, |T'|\ge \frac{c_0 n}{2K^2}.
\end{align*}

Consequently, the sizes of the estimated communities satisfy:
\[
\hat{n}_a := |\hat{\mathcal{C}}_a| \ge |S_1| + |S_2| \ge \frac{c_0 n}{K^2}, \qquad
\hat{n}_b := |\hat{\mathcal{C}}_b| \ge |T'| \ge \frac{c_0 n}{2K^2}.
\]

Thus,
\begin{equation}\label{eq:nab-bound}
\hat{n}_a \hat{n}_b \ge \frac{c_0^2}{2K^4} n^2 =: c_1 n^2,
\end{equation}
where \(c_1 > 0\) is a constant depending on \(c_0\) and \(K\).

Recall the definition of \(\hat{B}_{\ell,ab}\) from Equation (\ref{eq:B_hat}):
\[
\hat{B}_{\ell,ab} = \frac{1}{\hat{n}_a \hat{n}_b} \sum_{i \in \hat{\mathcal{C}}_a} \sum_{j \in \hat{\mathcal{C}}_b} A_{\ell,ij}.
\]

Define the conditional expectation given the true and estimated labels:
\[
\bar{B}_{\ell,ab} := \mathbb{E}[\hat{B}_{\ell,ab} \mid \theta, \hat{\theta}] = \frac{1}{\hat{n}_a \hat{n}_b} \sum_{i \in \hat{\mathcal{C}}_a} \sum_{j \in \hat{\mathcal{C}}_b} B_{\ell,\theta_i\theta_j}.
\]

By Assumption \ref{assump:a1}, \(B_{\ell,\theta_i\theta_j} \in [\delta, 1-\delta]\) for all \(\ell, i, j\). Since \(\bar{B}_{\ell,ab}\) is an average of numbers in \([\delta, 1-\delta]\), we have
\[
\bar{B}_{\ell,ab} \in [\delta, 1-\delta] \quad \ell\in[L].
\]

Conditional on \((\theta, \hat{\theta})\), the random variables \(\{A_{\ell,ij} : i \in \hat{\mathcal{C}}_a, j \in \hat{\mathcal{C}}_b\}\) are independent Bernoulli trials with means \(B_{\ell,\theta_i\theta_j}\). Applying Hoeffding's inequality, for any \(t > 0\):
\[
\mathbb{P}\left( \left| \hat{B}_{\ell,ab} - \bar{B}_{\ell,ab} \right| \ge t \mid \theta, \hat{\theta} \right) \le 2 \exp\left( -2 \hat{n}_a \hat{n}_b t^2 \right).
\]

Choose
\[
t = t_n := \sqrt{ \frac{3 \log n}{2 \hat{n}_a \hat{n}_b} }.
\]

Using the bound in Equation \eqref{eq:nab-bound}, we have
\[
t_n \le \sqrt{ \frac{3 \log n}{2 c_1 n^2} } = O\left( \frac{\sqrt{\log n}}{n} \right) = o(1).
\]

Substituting \(t_n\) into Hoeffding's inequality yields:
\[
\mathbb{P}\left( \left| \hat{B}_{\ell,ab} - \bar{B}_{\ell,ab} \right| \ge t_n \mid \theta, \hat{\theta} \right) \le 2 \exp\left( -3 \log n \right) = 2n^{-3}.
\]

Taking a union bound over all \(L\) layers:
\[
\mathbb{P}\left( \exists \ell \in [L] : \left| \hat{B}_{\ell,ab} - \bar{B}_{\ell,ab} \right| \ge t_n \mid \theta, \hat{\theta} \right) \le 2L n^{-3}.
\]

By Assumption \ref{assump:a3}, we have \(L = o(n/\log n)\), which gives \(2L n^{-3} = o(1)\). Therefore,
\[
\mathbb{P}\left( \forall \ell \in [L] : \left| \hat{B}_{\ell,ab} - \bar{B}_{\ell,ab} \right| \le t_n \mid \theta, \hat{\theta} \right) = 1 - o(1).
\]

Since \(t_n = o(1)\), there exists \(N_0\) such that for all \(n \ge N_0\), \(t_n < \delta/2\). Then, on the event that \(\left| \hat{B}_{\ell,ab} - \bar{B}_{\ell,ab} \right| \le t_n\) for all \(\ell\), we have:
\[
\hat{B}_{\ell,ab} \in [\bar{B}_{\ell,ab} - t_n, \bar{B}_{\ell,ab} + t_n] \subseteq [\delta - t_n, 1-\delta + t_n] \subseteq [\delta/2, 1-\delta/2].
\]

Now consider the function \(g(x) = x(1-x)\). On the interval \([\delta/2, 1-\delta/2]\), the minimum is attained at the endpoints:
\[
\min_{x \in [\delta/2, 1-\delta/2]} g(x) = g(\delta/2) = \frac{\delta}{2}\left(1 - \frac{\delta}{2}\right).
\]

We verify that this minimum is at least \(\frac{\delta(1-\delta)}{2}\):
\[
\frac{\delta}{2}\left(1 - \frac{\delta}{2}\right) - \frac{\delta(1-\delta)}{2} = \frac{\delta^2}{4} > 0.
\]

Thus, for all \(x \in [\delta/2, 1-\delta/2]\),
\[
g(x) \ge \frac{\delta}{2}\left(1 - \frac{\delta}{2}\right) \ge \frac{\delta(1-\delta)}{2} =: \kappa.
\]

Consequently, on the high-probability event, for each \(\ell\):
\[
\hat{B}_{\ell,ab}(1 - \hat{B}_{\ell,ab}) \ge \kappa.
\]

Summing over \(\ell\) gives:
\[
\hat{D}_{ab} = \sum_{\ell=1}^L \hat{B}_{\ell,ab}(1 - \hat{B}_{\ell,ab}) \ge \kappa L.
\]

\textbf{Part 3: Probabilistic convergence.}
Define the event:
\[
\mathcal{A}_n := \left\{ \kappa L \le \hat{D}_{ab} \le \frac{L}{4} \right\}.
\]

We have shown that for any realization of \((\theta, \hat{\theta})\) satisfying the size bound in Equation \eqref{eq:nab-bound} (which holds with probability one under the construction of Lemma \ref{lem:large_submatrix}),
\[
\mathbb{P}\left( \mathcal{A}_n \mid \theta, \hat{\theta} \right) \ge 1 - 2L n^{-3} = 1 - o(1).
\]

Since the conditional probability converges to 1 uniformly over such \((\theta, \hat{\theta})\), the unconditional probability also converges to 1 by the law of total probability and the bounded convergence theorem
\[
\mathbb{P}(\mathcal{A}_n) = \mathbb{E}\left[ \mathbb{P}\left( \mathcal{A}_n \mid \theta, \hat{\theta} \right) \right] \to 1 \quad \text{as } n \to \infty.
\]

This completes the proof. \qedhere
\end{proof}

\textbf{Step 4: Lower bound on the spectral norm of the signal matrix $M$.}

The matrix $M$ is defined as the signal part of the submatrix $R$, i.e., for $i \in S = S_1 \cup S_2$ and $j \in T'$,
\[
M_{ij} = \frac{\sum_{\ell=1}^{L} (P_{\ell,ij} - \hat{P}_{\ell,ij})}{\sqrt{n \hat{D}_{ab}}}.
\]

Recall that $S_1$ and $S_2$ are subsets of the estimated community $\hat{\mathcal{C}}_a$ (with $\hat{\theta}_i = a$ for $i \in S$), and $T'$ is a subset of the estimated community $\hat{\mathcal{C}}_b$ (with $\hat{\theta}_j = b$ for $j \in T'$). For $i \in S_1$ we have true community $\theta_i = k_1$, and for $i \in S_2$, $\theta_i = k_2$. For all $j \in T'$, $\theta_j = k_3$.

By definition, $\hat{P}_{\ell,ij} = \hat{B}_{\ell, \hat{\theta}_i \hat{\theta}_j}$. Since $\hat{\theta}_i = a$ and $\hat{\theta}_j = b$ for all $i \in S, j \in T'$, we have uniformly
\[
\hat{P}_{\ell,ij} = \hat{B}_{\ell, a b} \quad \text{for all } i \in S, j \in T', \ell \in [L].
\]

Meanwhile, the true probability $P_{\ell,ij} = B_{\ell, \theta_i \theta_j}$ takes two distinct values:
\[
P_{\ell,ij} = 
\begin{cases}
B_{\ell, k_1 k_3}, & i \in S_1, \\
B_{\ell, k_2 k_3}, & i \in S_2.
\end{cases}
\]

Thus, the entries of $M$ are constant on each block:
\[
M_{ij} = 
\begin{cases}
\dfrac{\sum_{\ell=1}^{L} (B_{\ell, k_1 k_3} - \hat{B}_{\ell, a b})}{\sqrt{n \hat{D}_{ab}}}, & i \in S_1, \; j \in T', \\[10pt]
\dfrac{\sum_{\ell=1}^{L} (B_{\ell, k_2 k_3} - \hat{B}_{\ell, a b})}{\sqrt{n \hat{D}_{ab}}}, & i \in S_2, \; j \in T'.
\end{cases}
\]

Define the aggregated differences
\[
\alpha_1 = \sum_{\ell=1}^L (B_{\ell,k_1 k_3} - \hat{B}_{\ell,ab}), \qquad 
\alpha_2 = \sum_{\ell=1}^L (B_{\ell,k_2 k_3} - \hat{B}_{\ell,ab}).
\]

Crucially, the same estimate $\hat{B}_{\ell,ab}$ appears in both $\alpha_1$ and $\alpha_2$ because all involved rows share the same estimated community label $a$ and all columns share the same estimated community label $b$. Therefore,
\[
\alpha_1 - \alpha_2 = \sum_{\ell=1}^{L} (B_{\ell, k_1 k_3} - \hat{B}_{\ell, a b}) - \sum_{\ell=1}^{L} (B_{\ell, k_2 k_3} - \hat{B}_{\ell, a b})
                  = \sum_{\ell=1}^{L} (B_{\ell, k_1 k_3} - B_{\ell, k_2 k_3}). 
\]

Now $M$ can be written in outer-product form as
\[
M = \frac{1}{\sqrt{n\hat{D}_{ab}}} \begin{bmatrix} \alpha_1 \mathbf{1}_{|S_1|} \\ \alpha_2 \mathbf{1}_{|S_2|} \end{bmatrix} \mathbf{1}_{|T'|}^\top.
\]

The spectral norm of $M$ is then given by
\[
\|M\| = \frac{1}{\sqrt{n\hat{D}_{ab}}} \left\| \begin{bmatrix} \alpha_1 \mathbf{1}_{|S_1|} \\ \alpha_2 \mathbf{1}_{|S_2|} \end{bmatrix} \right\| \cdot \left\| \mathbf{1}_{|T'|} \right\|
       = \frac{1}{\sqrt{n\hat{D}_{ab}}} \sqrt{ \alpha_1^2 |S_1| + \alpha_2^2 |S_2| } \cdot \sqrt{|T'|}.
\]

We proceed to lower bound each term. From Lemma \ref{lem:large_submatrix} we have the community separation condition
\[
\left| \sum_{\ell=1}^{L} \bigl( B_{\ell, k_1 k_3} - B_{\ell, k_2 k_3} \bigr) \right| \ge \frac{\eta L}{2},
\]
which gives
\[
|\alpha_1 - \alpha_2| \ge \frac{\eta L}{2}. 
\]

Applying the triangle inequality,
\[
|\alpha_1| + |\alpha_2| \ge |\alpha_1 - \alpha_2| \ge \frac{\eta L}{2},
\]
which immediately implies
\[
\max(|\alpha_1|, |\alpha_2|) \ge \frac{\eta L}{4}. 
\]

Without loss of generality, assume $|\alpha_1| \ge \eta L/4$, so that $\alpha_1^2 \ge (\eta L/4)^2$.

By Lemma \ref{lem:bound_Dab}, we get
\[
\hat{D}_{ab} = \sum_{\ell=1}^L \hat{B}_{\ell,ab}(1-\hat{B}_{\ell,ab}) \le \frac{L}{4}. 
\]

From Lemma \ref{lem:large_submatrix}, we have
\[
|S_1| \ge \frac{c_0 n}{2 K^2}, \qquad |T'| \ge \frac{c_0 n}{2 K^2}. 
\]

Noting that $\sqrt{\alpha_1^2 |S_1| + \alpha_2^2 |S_2|} \ge |\alpha_1| \sqrt{|S_1|}$ (since all terms are nonnegative), we obtain
\[
\begin{aligned}
\|M\| &\ge \frac{1}{\sqrt{n\hat{D}_{ab}}} |\alpha_1| \sqrt{|S_1|} \sqrt{|T'|} \\
      &\ge \frac{1}{\sqrt{n \cdot (L/4)}} \cdot \frac{\eta L}{4} \cdot \sqrt{\frac{c_0 n}{2K^2}} \cdot \sqrt{\frac{c_0 n}{2K^2}} \\
      &= \frac{\eta}{4} \cdot \frac{2}{\sqrt{nL}} \cdot L \cdot \frac{c_0 n}{2K^2} \\
      &= \frac{\eta c_0}{4} \cdot \frac{\sqrt{nL}}{K^2}.
\end{aligned}
\]

Thus, there exists a constant $C_M = \eta c_0/4 > 0$ such that with probability tending to 1,
\begin{align}\label{eq2}
\|M\| \ge C_M \frac{\sqrt{nL}}{K^2}.
\end{align}

\textbf{Step 5: Controlling the spectral norm of the noise matrix $E$.}
\begin{lem}\label{lem:noise_bound}
Under Assumptions \ref{assump:a1} and \ref{assump:a3}, with probability tending to 1, $\|E\| = O(\sqrt{\log n})$.
\end{lem}
\begin{proof}
Recall the noise matrix $E$ with dimensions $m_1 = |S|$, $m_2 = |T'|$, where $S = S_1 \cup S_2$, and entries
\[
E_{ij} = \frac{\sum_{\ell=1}^{L} (A_{\ell,ij} - P_{\ell,ij})}{\sqrt{n \hat{D}_{ab}}}, \qquad i \in S, \; j \in T',
\]
where $\hat{D}_{ab} = \sum_{\ell=1}^L \hat{B}_{\ell,ab}(1 - \hat{B}_{\ell,ab})$. By Lemma \ref{lem:bound_Dab}, there exists a constant $\kappa = \delta(1-\delta)/2 > 0$ such that
\[
\mathbb{P}\bigl( \hat{D}_{ab} \geq \kappa L \bigr) \to 1 \quad \text{as } n \to \infty.
\]

Define the event $\mathcal{E}_n = \{\hat{D}_{ab} \geq \kappa L\}$. Then $\lim_{n\to\infty} \mathbb{P}(\mathcal{E}_n) = 1$. Conditional on $\mathcal{E}_n$, for any $i \in S$, $j \in T'$:
\begin{itemize}
    \item $\mathbb{E}[E_{ij} \mid \theta, \hat{\theta}, \mathcal{E}_n] = 0$ because $\mathbb{E}[A_{\ell,ij} \mid \theta] = P_{\ell,ij}$.
    \item Using $P_{\ell,ij}(1-P_{\ell,ij}) \leq 1/4$ from Assumption \ref{assump:a1} gives
    \[
    \mathbb{E}[E_{ij}^2 \mid \theta, \hat{\theta}, \mathcal{E}_n] 
    = \frac{\sum_{\ell=1}^L P_{\ell,ij}(1-P_{\ell,ij})}{n \hat{D}_{ab}} 
    \leq \frac{L/4}{n \kappa L} = \frac{1}{4\kappa n}.
    \]
    \item Since $|A_{\ell,ij} - P_{\ell,ij}| \leq 1$,
    \[
    |E_{ij}| \leq \frac{L}{\sqrt{n \hat{D}_{ab}}} \leq \frac{L}{\sqrt{n \kappa L}} = \sqrt{\frac{L}{\kappa n}} \triangleq R_n.
    \]
\end{itemize}

By Assumption \ref{assump:a3} ($L = o(n/\log n)$), we have $R_n = o(1/\sqrt{\log n})$. Let $\widetilde{E}$ be the Hermitian dilation of $E$:
\[
H = \begin{pmatrix} 0 & E \\ E^\top & 0 \end{pmatrix} \in \mathbb{R}^{d \times d},
\]
where $d = m_1 + m_2 \leq 2n$. Note that $\|E\| = \|H\|$. For each $(i,j)$ with $i \in [m_1]$, $j \in [m_2]$, define
\[
Y_{ij} = \begin{pmatrix} 0 & E_{ij} e_i^{(m_1)} (e_j^{(m_2)})^\top \\ 
                      E_{ij} e_j^{(m_2)} (e_i^{(m_1)})^\top & 0 \end{pmatrix},
\]
where $e_i^{(m_1)}$ and $e_j^{(m_2)}$ are standard basis vectors. Then
\[
H = \sum_{i=1}^{m_1} \sum_{j=1}^{m_2} Y_{ij}.
\]

Conditional on $(\theta, \hat{\theta})$ and $\mathcal{E}_n$, the matrices $\{Y_{ij}\}$ are independent, satisfy $\mathbb{E}[Y_{ij} \mid \theta, \hat{\theta}, \mathcal{E}_n] = 0$, and have spectral norm bounded by
\[
\|Y_{ij}\| = |E_{ij}| \leq R_n.
\]

Define
\[
V = \sum_{i=1}^{m_1} \sum_{j=1}^{m_2} \mathbb{E}[Y_{ij}^2 \mid \theta, \hat{\theta}, \mathcal{E}_n].
\]

We compute $Y_{ij}^2$:
\[
Y_{ij}^2 = 
\begin{pmatrix}
E_{ij}^2 e_i^{(m_1)} (e_i^{(m_1)})^\top & 0 \\
0 & E_{ij}^2 e_j^{(m_2)} (e_j^{(m_2)})^\top
\end{pmatrix}.
\]

Thus,
\[
V = \begin{pmatrix}
\sum_{i=1}^{m_1} \bigl( \sum_{j=1}^{m_2} \mathbb{E}[E_{ij}^2 \mid \theta, \hat{\theta}, \mathcal{E}_n] \bigr) e_i^{(m_1)} (e_i^{(m_1)})^\top & 0 \\
0 & \sum_{j=1}^{m_2} \bigl( \sum_{i=1}^{m_1} \mathbb{E}[E_{ij}^2 \mid \theta, \hat{\theta}, \mathcal{E}_n] \bigr) e_j^{(m_2)} (e_j^{(m_2)})^\top
\end{pmatrix}.
\]

The matrix $V$ is block-diagonal. Its spectral norm equals the maximum of the norms of the two diagonal blocks. For the first block, using $\mathbb{E}[E_{ij}^2 \mid \theta, \hat{\theta}, \mathcal{E}_n] \leq 1/(4\kappa n)$,
\[
\left\| \sum_{i=1}^{m_1} \bigl( \sum_{j=1}^{m_2} \mathbb{E}[E_{ij}^2 \mid \theta, \hat{\theta}, \mathcal{E}_n] \bigr) e_i^{(m_1)} (e_i^{(m_1)})^\top \right\|
= \max_{1 \leq i \leq m_1} \sum_{j=1}^{m_2} \mathbb{E}[E_{ij}^2 \mid \theta, \hat{\theta}, \mathcal{E}_n] \leq \frac{m_2}{4\kappa n}.
\]

Similarly, the second block has norm at most $\frac{m_1}{4\kappa n}$. Since $m_1, m_2 \leq n$, we obtain
\[
\|V\| \leq \frac{1}{4\kappa}.
\]

We apply Theorem 1.6 (Matrix Bernstein: rectangular case) of \cite{tropp2012user} to the independent, centered, symmetric matrices $\{Y_{ij}\}$. With parameters $R = R_n$ and $\sigma^2 = \|V\| \leq 1/(4\kappa)$, for any $t > 0$,
\[
\mathbb{P}\bigl( \|H\| \ge t \mid \theta, \hat{\theta}, \mathcal{E}_n \bigr) 
\le 2d \cdot \exp\left( -\frac{t^2/2}{\sigma^2 + R t/3} \right).
\]

Choose $t = 2\sqrt{\frac{\log n}{\kappa}}$. This choice satisfies $t = O(\sqrt{\log n})$. We now verify that for large $n$, $R t/3 \leq \sigma^2$. Since $\sigma^2 \geq 0$, it suffices to show $R t/3 \leq 1/(4\kappa)$. Observe
\[
R t = \sqrt{\frac{L}{\kappa n}} \cdot 2\sqrt{\frac{\log n}{\kappa}} = 2\sqrt{\frac{L \log n}{\kappa^2 n}}.
\]

By Assumption \ref{assump:a3}, $L = o(n/\log n)$, hence $R t = o(1)$. Therefore, for sufficiently large $n$, $R t/3 \leq 1/(4\kappa)$. Consequently,
\[
\sigma^2 + R t/3 \leq \frac{1}{4\kappa} + \frac{1}{4\kappa} = \frac{1}{2\kappa}.
\]

Now plug into the Bernstein inequality, we get
\begin{align*}
\mathbb{P}\bigl( \|H\| \ge t \mid \theta, \hat{\theta}, \mathcal{E}_n \bigr) 
&\le 2d \cdot \exp\left( -\frac{t^2/2}{1/(2\kappa)} \right) \\
&= 2d \cdot \exp\left( -\kappa t^2 \right) \\
&= 2d \cdot \exp\left( -\kappa \cdot 4\frac{\log n}{\kappa} \right) \\
&= 2d \cdot e^{-4\log n} \\
&= 2d \cdot n^{-4}.
\end{align*}

Since $d = m_1 + m_2 \leq 2n$, we have
\[
\mathbb{P}\bigl( \|H\| \ge 2\sqrt{\frac{\log n}{\kappa}} \mid \theta, \hat{\theta}, \mathcal{E}_n \bigr) \le 4n \cdot n^{-4} = 4n^{-3}.
\]

Finally,
\begin{align*}
\mathbb{P}\bigl( \|E\| \ge 2\sqrt{\frac{\log n}{\kappa}} \bigr)
&= \mathbb{P}\bigl( \|H\| \ge 2\sqrt{\frac{\log n}{\kappa}} \bigr) \\
&\le \mathbb{P}\bigl( \|H\| \ge 2\sqrt{\frac{\log n}{\kappa}}, \mathcal{E}_n \bigr) + \mathbb{P}(\mathcal{E}_n^c) \\
&= \mathbb{E}\left[ \mathbb{P}\bigl( \|H\| \ge 2\sqrt{\frac{\log n}{\kappa}} \mid \theta, \hat{\theta}, \mathcal{E}_n \bigr) \mathbf{1}_{\mathcal{E}_n} \right] + \mathbb{P}(\mathcal{E}_n^c) \\
&\le 4n^{-3} + \mathbb{P}(\mathcal{E}_n^c).
\end{align*}

Because $\mathbb{P}(\mathcal{E}_n^c) \to 0$, we conclude
\[
\lim_{n\to\infty} \mathbb{P}\bigl( \|E\| \ge 2\sqrt{\frac{\log n}{\kappa}} \bigr) = 0.
\]

Hence, $\|E\| = O_P(\sqrt{\log n})$. More specifically, for any $\epsilon > 0$, there exists $C > 0$ such that $\mathbb{P}(\|E\| \ge C\sqrt{\log n}) \le \epsilon$ for all large $n$. This completes the proof.
\end{proof}

\textbf{Step 6: Lower bound on the spectral norm of $R$ and $\widetilde{A}^{\text{agg}}$.}

Recall from Steps 4--5 that we have constructed the submatrix $R = (\widetilde{A}^{\text{agg}}_{ij})_{i \in S, j \in T'}$, where $S = S_1 \cup S_2$ and $T'$ are node subsets with cardinalities at least $c_0 n/(2K^2)$, and we have the decomposition $R = M + E$. From Equation (\ref{eq2}) and Lemma \ref{lem:noise_bound}, we have the following probabilistic bounds:

\begin{enumerate}
    \item There exists a constant $C_M = \eta c_0/4 > 0$ such that
    \begin{align}\label{eqA}
 \mathbb{P}\left( \|M\| \ge C_M \frac{\sqrt{nL}}{K^2} \right) \to 1 \quad \text{as } n \to \infty.
    \end{align}
    
    \item There exists a constant $C_E > 0$ such that
    \begin{align}\label{eqB}
\mathbb{P}\left( \|E\| \le C_E \sqrt{\log n} \right) \to 1 \quad \text{as } n \to \infty. 
    \end{align}
    That is, $\|E\| = O_P(\sqrt{\log n})$.
\end{enumerate}

Define the event
\[
\mathcal{A}_n = \left\{ \|M\| \ge C_M \frac{\sqrt{nL}}{K^2} \right\} \cap \left\{ \|E\| \le C_E \sqrt{\log n} \right\}.
\]

By the union bound,
\[
\mathbb{P}(\mathcal{A}_n^c) \le \mathbb{P}\left( \|M\| < C_M \frac{\sqrt{nL}}{K^2} \right) + \mathbb{P}\left( \|E\| > C_E \sqrt{\log n} \right).
\]

From Equations (\ref{eqA}) and (\ref{eqB}), both terms on the right-hand side converge to $0$ as $n \to \infty$. Hence, we have
\[
\lim_{n \to \infty} \mathbb{P}(\mathcal{A}_n) = 1.
\]

On the event $\mathcal{A}_n$, we apply the triangle inequality for the spectral norm:
\[
\|R\| = \|M + E\| \ge \|M\| - \|E\| \ge C_M \frac{\sqrt{nL}}{K^2} - C_E \sqrt{\log n}.
\]

Now, by Assumption \ref{assump:a5}, $\sqrt{nL}/K^2 \to \infty$ as $n \to \infty$. Since $\sqrt{\log n} = o(\sqrt{nL}/K^2)$, we have
\[
\frac{C_E \sqrt{\log n}}{C_M \sqrt{nL}/K^2} \to 0 \quad \text{as } n \to \infty.
\]

Thus, there exists an integer $n_0$ such that for all $n \ge n_0$,
\[
C_E \sqrt{\log n} \le \frac{C_M}{2} \cdot \frac{\sqrt{nL}}{K^2}.
\]

Consequently, on $\mathcal{A}_n$ for all $n \ge n_0$,
\[
\|R\| \ge C_M \frac{\sqrt{nL}}{K^2} - \frac{C_M}{2} \cdot \frac{\sqrt{nL}}{K^2} = \frac{C_M}{2} \cdot \frac{\sqrt{nL}}{K^2}.
\]

Therefore, we have established
\begin{align}\label{eq3}
\lim_{n \to \infty} \mathbb{P}\left( \|R\| \ge \frac{C_M}{2} \cdot \frac{\sqrt{nL}}{K^2} \right) = 1.
\end{align}

Next, we relate the spectral norm of the submatrix $R$ to that of the full matrix $\widetilde{A}^{\text{agg}}$. Observe that $R$ is precisely the submatrix of $\widetilde{A}^{\text{agg}}$ obtained by selecting rows indexed by $S$ and columns indexed by $T'$. For any matrix $X \in \mathbb{R}^{n \times n}$ and any index sets $I, J \subseteq [n]$, the spectral norm of the submatrix $X_{IJ}$ does not exceed the spectral norm of $X$. This can be seen from the variational characterization of the spectral norm:
\[
\|X_{IJ}\| = \max_{\substack{u \in \mathbb{R}^{|I|}, v \in \mathbb{R}^{|J|} \\ \|u\|=\|v\|=1}} u^\top X_{IJ} v 
          = \max_{\substack{\tilde{u} \in \mathbb{R}^n, \tilde{v} \in \mathbb{R}^n \\ \|\tilde{u}\|=\|\tilde{v}\|=1 \\ \operatorname{supp}(\tilde{u}) \subseteq I, \operatorname{supp}(\tilde{v}) \subseteq J}} \tilde{u}^\top X \tilde{v}
          \le \max_{\substack{\tilde{u}, \tilde{v} \in \mathbb{R}^n \\ \|\tilde{u}\|=\|\tilde{v}\|=1}} \tilde{u}^\top X \tilde{v} = \|X\|,
\]
where $\operatorname{supp}(x)$ denotes the set of indices where $x$ is nonzero. Applying this inequality with $X = \widetilde{A}^{\text{agg}}$, $I = S$, $J = T'$, we obtain
\[
\|\widetilde{A}^{\text{agg}}\| \ge \|R\|.
\]

Combining this with Equation (\ref{eq3}), we conclude
\begin{align}\label{eq4}
\lim_{n \to \infty} \mathbb{P}\left( \|\widetilde{A}^{\text{agg}}\| \ge \frac{C_M}{2} \cdot \frac{\sqrt{nL}}{K^2} \right) = 1. 
\end{align}

Finally, let $\lambda_{\max}$ denote the eigenvalue of $\widetilde{A}^{\text{agg}}$ with the largest absolute value. Since $\widetilde{A}^{\text{agg}}$ is a real symmetric matrix, its spectral norm equals the spectral radius. Thus, Equation (\ref{eq4}) is equivalent to
\[
\lim_{n \to \infty} \mathbb{P}\left( |\lambda_{\max}| \ge \frac{C_M}{2} \cdot \frac{\sqrt{nL}}{K^2} \right) = 1.
\]

\textbf{Step 7: Controlling the contribution of other eigenvalues to \(T(K_0)\).}

Write \(\widetilde{A}^{\text{agg}} = \widetilde{M} + \widetilde{E}\), where
\[
\widetilde{M}_{ij} = 
\begin{cases} 
  \dfrac{\sum_{\ell=1}^L (\hat{P}_{\ell,ij} - P_{\ell,ij})}{\sqrt{n \hat{D}_{ij}}} & i \neq j, \\
  0 & i = j,
\end{cases}
\qquad
\widetilde{E}_{ij} = 
\begin{cases} 
  \dfrac{\sum_{\ell=1}^L (A_{\ell,ij} - P_{\ell,ij})}{\sqrt{n \hat{D}_{ij}}} & i \neq j, \\
  0 & i = j,
\end{cases}
\]
with \(\hat{D}_{ij} = \sum_{\ell=1}^L \hat{P}_{\ell,ij}(1-\hat{P}_{\ell,ij})\).

\begin{lem}[Spectral norm of the noise matrix]\label{lem:noise_full}
Under Assumptions \ref{assump:a1}--\ref{assump:a3}, \(\|\widetilde{E}\| = O_P(\sqrt{\log n})\).
\end{lem}

\begin{proof}
By Lemma \ref{lem:bound_Dab}, with probability tending to 1,
\[
\hat{D}_{ij} \geq \kappa L \quad \text{for all } i \neq j,
\]
where \(\kappa = \delta(1-\delta)/2\). Also, \(\hat{D}_{ij} \leq L/4\). Conditional on \(\theta\) and \(\hat{\theta}\), the entries \(\{\widetilde{E}_{ij}\}_{i<j}\) are independent, with \(\mathbb{E}[\widetilde{E}_{ij} \mid \theta, \hat{\theta}] = 0\) and
\[
\operatorname{Var}(\widetilde{E}_{ij} \mid \theta, \hat{\theta}) = \frac{\sum_{\ell=1}^L P_{\ell,ij}(1-P_{\ell,ij})}{n \hat{D}_{ij}} \leq \frac{L/4}{n \kappa L} = \frac{1}{4\kappa n}.
\]

Moreover,
\[
|\widetilde{E}_{ij}| \leq \frac{L}{\sqrt{n \kappa L}} = \sqrt{\frac{L}{\kappa n}} \triangleq R_n.
\]

Since \(L = o(n/\log n)\) by Assumption \ref{assump:a3}, we have \(R_n = o(1/\sqrt{\log n})\).

We apply the matrix Bernstein inequality (Theorem 1.6 of \citep{tropp2012user}) to the symmetric matrix \(\widetilde{E} = \sum_{i<j} X_{ij}\), where \(X_{ij}\) has entries \((i,j)\) and \((j,i)\) equal to \(\widetilde{E}_{ij}\) and zeros elsewhere. Then \(\|X_{ij}\| = |\widetilde{E}_{ij}| \leq R_n\), and \(\mathbb{E}[X_{ij} \mid \theta, \hat{\theta}] = 0\). The variance statistic is
\[
\sigma^2 = \left\| \sum_{i<j} \mathbb{E}[X_{ij}^2 \mid \theta, \hat{\theta}] \right\|.
\]
Now,
\[
\sum_{i<j} \mathbb{E}[X_{ij}^2 \mid \theta, \hat{\theta}] = \sum_{i=1}^n \left( \sum_{j \neq i} \mathbb{E}[\widetilde{E}_{ij}^2 \mid \theta, \hat{\theta}] \right) e_i e_i^T
\]
is a diagonal matrix with diagonal entries \(d_i \leq (n-1) \cdot \frac{1}{4\kappa n} \leq \frac{1}{4\kappa}\). Hence, \(\sigma^2 \leq \frac{1}{4\kappa}\).

For any \(t > 0\),
\[
\mathbb{P}\left( \|\widetilde{E}\| \geq t \mid \theta, \hat{\theta} \right) \leq 2n \exp\left( -\frac{t^2/2}{\sigma^2 + R_n t/3} \right).
\]
Choose \(t = 2\sqrt{\frac{2\log n}{\kappa}}\). For sufficiently large \(n\), \(R_n t/3 \leq \sigma^2\). Then
\[
\mathbb{P}\left( \|\widetilde{E}\| \geq 2\sqrt{\frac{2\log n}{\kappa}} \mid \theta, \hat{\theta} \right) \leq 2n \exp\left( -\frac{4\log n}{\kappa} \cdot \frac{1}{2/\kappa} \right) = 2n^{-1}.
\]
Thus, \(\|\widetilde{E}\| = O_P(\sqrt{\log n})\).
\end{proof}

\begin{lem}[Frobenius norm of the signal matrix]\label{lem:M_F_bound}
Under Assumptions \ref{assump:a1}--\ref{assump:a3}, 
\[
\|\widetilde{M}\|_F^2 = O_P\left( \frac{L K_0^2 (m+\sqrt{\log n})^2}{n} \right),
\]
and consequently \(\|\widetilde{M}\|_F = o_P(1)\).
\end{lem}

\begin{proof}
From the proof of Lemma \ref{lem:plugin_error} (see the bound for \(|\widetilde{A}^{\text{agg}}_{ij} - \widetilde{A}^{\text{ideal}}_{ij}|\)), we have
\[
|\widetilde{M}_{ij}| = O_P\left( \frac{\sqrt{L} K_0 (m+\sqrt{\log n})}{n^{3/2}} \right),
\]
which gives
\[
\|\widetilde{M}\|_F^2 = \sum_{i \neq j} |\widetilde{M}_{ij}|^2 = O_P\left( n^2 \cdot \frac{L K_0^2 (m+\sqrt{\log n})^2}{n^3} \right) = O_P\left( \frac{L K_0^2 (m+\sqrt{\log n})^2}{n} \right).
\]
By Assumption \ref{assump:a3}, we get \(\|\widetilde{M}\|_F = o_P(1)\).
\end{proof}

Now let \(\lambda_1 \geq \lambda_2 \geq \dots \geq \lambda_n\) be the eigenvalues of \(\widetilde{A}^{\text{agg}}\). From Equation (\ref{eq4}), with probability tending to 1,
\[
|\lambda_1| = \|\widetilde{A}^{\text{agg}}\| \geq C \frac{\sqrt{nL}}{K^2}
\]
for some constant \(C > 0\). Thus, \(|\lambda_1|^3 = O_P\left( (nL)^{3/2}/K^6 \right)\).

We bound the other eigenvalues using the decomposition \(\widetilde{A}^{\text{agg}} = \widetilde{M} + \widetilde{E}\). Since \(\widetilde{M}\) has rank at most \(K_0\), let its nonzero eigenvalues be \(\mu_1, \dots, \mu_r\) (\(r \leq K_0\)). By Weyl's inequality, for each \(i\),
\[
|\lambda_i - \mu_i| \leq \|\widetilde{E}\| = O_P(\sqrt{\log n}),
\]
where we set \(\mu_i = 0\) for \(i > r\). For \(i > K_0\), \(\mu_i = 0\), so \(|\lambda_i| \leq \|\widetilde{E}\| = O_P(\sqrt{\log n})\). For \(i = 2, \dots, K_0\),
\[
|\lambda_i| \leq |\mu_i| + \|\widetilde{E}\| \leq \|\widetilde{M}\|_F + \|\widetilde{E}\| = o_P(1) + O_P(\sqrt{\log n}) = O_P(\sqrt{\log n}).
\]

Hence, for all \(i \geq 2\), we have \(|\lambda_i| = O_P(\sqrt{\log n})\). Consequently,
\[
\sum_{i=2}^n |\lambda_i|^3 = O_P(n(\log n)^{3/2}).
\]

By Assumption \ref{assump:a5}, we have \(\sum_{i=2}^n |\lambda_i|^3 = o_P(|\lambda_1|^3)\). Therefore,
\[
|T(K_0)| = \frac{1}{\sqrt{6}} \left| \sum_{i=1}^n \lambda_i^3 \right| \geq \frac{1}{\sqrt{6}} \left( |\lambda_1|^3 - \sum_{i=2}^n |\lambda_i|^3 \right) = \frac{1}{\sqrt{6}} |\lambda_1|^3 \left( 1 - o_P(1) \right).
\]

Thus, there exists a constant \(C' > 0\) such that
\[
\lim_{n \to \infty} \mathbb{P}\left( |T(K_0)| \geq C' \cdot \frac{(nL)^{3/2}}{K^6} \right) = 1.
\]

This completes the proof of Theorem \ref{thm:power}.
\end{proof}
\subsection{Proof of Theorem \ref{thm:consistency}}
\begin{proof}

\textbf{Setup.} The sequential estimator is defined as
\[
\hat{K} = \min\bigl\{ K_0 \ge 1 : |T(K_0)| \le t_n \bigr\},
\]
where the threshold sequence $\{t_n\}$ satisfies:
\begin{itemize}
    \item $t_n \to \infty$ as $n \to \infty$,
    \item $t_n = o\!\left( (nL)^{3/2} / K^6 \right)$.
\end{itemize}

We prove consistency by establishing:
\begin{enumerate}
    \item $\lim_{n \to \infty} \mathbb{P}(\hat{K} < K) = 0$ (no underestimation).
    \item $\lim_{n \to \infty} \mathbb{P}(\hat{K} > K) = 0$ (no overestimation).
\end{enumerate}

\textbf{Part 1: No underestimation.}

We need to show that $\mathbb{P}(\hat{K} < K) \to 0$. Observe that the event $\{\hat{K} < K\}$ is equivalent to the union over $K_0 = 1, \dots, K-1$ of the events $\{|T(K_0)| \le t_n\}$. By the union bound,
\[
\mathbb{P}(\hat{K} < K) \le \sum_{K_0=1}^{K-1} \mathbb{P}\bigl( |T(K_0)| \le t_n \bigr).
\]

Thus, it suffices to prove that for each fixed $K_0$ with $1 \le K_0 < K$, 
\[
\lim_{n \to \infty} \mathbb{P}\bigl( |T(K_0)| \le t_n \bigr) = 0.
\]

Fix $K_0$ with $1 \le K_0 < K$. Under the same Assumptions \ref{assump:a1}--\ref{assump:a5}, Theorem \ref{thm:power} establishes that there exists a constant $c_2 > 0$ (depending only on $\eta$, $c_0$, and $\delta$) such that
\[
\lim_{n \to \infty} \mathbb{P}\left( |T(K_0)| \ge c_2 \frac{(nL)^{3/2}}{K^6} \right) = 1.
\]

Since the threshold sequence satisfies $t_n = o\!\left( (nL)^{3/2} / K^6 \right)$, we have, for sufficiently large $n$,
\[
c_2 \frac{(nL)^{3/2}}{K^6} > t_n.
\]
Consequently,
\[
\mathbb{P}\bigl( |T(K_0)| \le t_n \bigr) \le 1 - \mathbb{P}\left( |T(K_0)| \ge c_2 \frac{(nL)^{3/2}}{K^6} \right) \to 0.
\]
This holds for each fixed $K_0$ with $1 \le K_0 < K$. Since $K$ is either fixed or grows slowly with $n$ under Assumption \ref{assump:a3}, taking the union over $K_0 = 1, \dots, K-1$ yields
\begin{align}\label{eqA2}
\lim_{n \to \infty} \mathbb{P}(\hat{K} < K) = 0.
\end{align}

\textbf{Part 2: No overestimation.}

Define the overestimation event:
\[
E_n^{\text{over}} = \{ |T(K)| > t_n \}.
\]

Under the null hypothesis $H_0: K_0 = K$, Theorem \ref{thm:main} gives:
\[
T(K) \stackrel{d}{\longrightarrow} N(0,1).
\]

Since $t_n \to \infty$, for any $\epsilon > 0$, choose $M_\epsilon$ such that $\mathbb{P}(|Z| > M_\epsilon) < \epsilon$ for $Z \sim N(0,1)$. By convergence in distribution, there exists $N_\epsilon$ such that for all $n \ge N_\epsilon$:
\[
\mathbb{P}(|T(K)| > M_\epsilon) < 2\epsilon.
\]

Because $t_n \to \infty$, for sufficiently large $n$, $t_n > M_\epsilon$. Thus:
\[
\mathbb{P}(E_n^{\text{over}}) = \mathbb{P}(|T(K)| > t_n) \le \mathbb{P}(|T(K)| > M_\epsilon) < 2\epsilon.
\]

Since $\epsilon > 0$ is arbitrary, we conclude:
\begin{align}\label{EqA3}
\lim_{n \to \infty} \mathbb{P}(E_n^{\text{over}}) = 0.
\end{align}

\textbf{Part 3: Consistency.}

The event $\{\hat{K} = K\}$ is the complement of $\{\hat{K} < K\} \cup \{\hat{K} > K\}$. Therefore:
\[
\mathbb{P}(\hat{K} = K) = 1 - \mathbb{P}(\hat{K} < K) - \mathbb{P}(\hat{K} > K).
\]

Taking limits as $n \to \infty$ and using Equations (\ref{eqA2}) and (\ref{EqA3}), we have
\[
\lim_{n \to \infty} \mathbb{P}(\hat{K} = K) \ge 1 - 0 - 0 = 1.
\]

Since probabilities are bounded above by $1$, we have:
\[
\lim_{n \to \infty} \mathbb{P}(\hat{K} = K) = 1.
\]

\textbf{Existence of threshold sequence $t_n$.} We verify that $t_n = \log n$ satisfies the required conditions. Clearly $t_n \to \infty$. Moreover:
\[
\frac{t_n}{(nL)^{3/2}/K^6} = \frac{K^6 \log n}{(nL)^{3/2}}.
\]

From Assumption \ref{assump:a5}, $\frac{nL^3}{K^{12}(\log n)^3} \to \infty$, which implies:
\[
\frac{K^{12} (\log n)^3}{n L^3} \to 0.
\]

Taking square roots:
\[
\frac{K^6 (\log n)^{3/2}}{n^{1/2} L^{3/2}} \to 0.
\]

Since $(\log n)^{1/2} \to \infty$, we have:
\[
\frac{K^6 \log n}{(nL)^{3/2}} = \frac{K^6 (\log n)^{3/2}}{n^{1/2} L^{3/2}} \cdot \frac{1}{(\log n)^{1/2}} \to 0.
\]

Thus $t_n = \log n$ satisfies $t_n = o((nL)^{3/2}/K^6)$. This completes the proof.
\end{proof}

\subsection{Proof of Theorem \ref{thm:ratio-behavior}}
\begin{proof}
For $2 \le K_0 \le K-1$, by Theorem \ref{thm:power} and its proof, there exist constants $0 < c < C$ such that
\begin{align*}
&\lim_{n \to \infty} \mathbb{P}\!\left( c \frac{(nL)^{3/2}}{K^6} \le |T(K_0-1)| \le C \frac{(nL)^{3/2}}{K^6} \right) = 1, \\
&\lim_{n \to \infty} \mathbb{P}\!\left( c \frac{(nL)^{3/2}}{K^6} \le |T(K_0)| \le C \frac{(nL)^{3/2}}{K^6} \right) = 1.
\end{align*}

Define the event
\[
\mathcal{E}_n = \left\{ c \frac{(nL)^{3/2}}{K^6} \le |T(K_0-1)| \le C \frac{(nL)^{3/2}}{K^6} \right\}
              \cap \left\{ c \frac{(nL)^{3/2}}{K^6} \le |T(K_0)| \le C \frac{(nL)^{3/2}}{K^6} \right\}.
\]

By union bound and the limits above, $\mathbb{P}(\mathcal{E}_n) \to 1$. On $\mathcal{E}_n$, we have
\[
\eta_{K_0} = \frac{|T(K_0-1)|}{|T(K_0)|} \le \frac{C \frac{(nL)^{3/2}}{K^6}}{c \frac{(nL)^{3/2}}{K^6}} = \frac{C}{c}, 
\]
which gives $\mathbb{P}(\eta_{K_0} \le C/c) \to 1$.

For $K_{0}=K$, consider $\eta_K = |T(K-1)| / |T(K)|$. Since $K-1 < K$, Theorem \ref{thm:power} provides $c > 0$ such that
\begin{align}\label{eqD1}
\lim_{n \to \infty} \mathbb{P}\!\left( |T(K-1)| \ge c \frac{(nL)^{3/2}}{K^6} \right) = 1.
\end{align}

From Theorem \ref{thm:main}, $T(K) \xrightarrow{d} N(0,1)$, implying $|T(K)| = O_P(1)$. More precisely, for any $\epsilon > 0$, there exists $M_\epsilon > 0$ such that
\begin{align}\label{eqD2}
\limsup_{n \to \infty} \mathbb{P}\!\left( |T(K)| > M_\epsilon \right) < \epsilon. 
\end{align}

Now fix any constant $M > 0$. We analyze:
\[
\mathbb{P}(\eta_K \le M) = \mathbb{P}\!\left( |T(K-1)| \le M |T(K)| \right).
\]

Decompose this probability as:
\begin{align*}
\mathbb{P}(\eta_K \le M) &\le \mathbb{P}\!\left( |T(K-1)| < c \frac{(nL)^{3/2}}{K^6} \right)+ \mathbb{P}\!\left( M |T(K)| \ge c \frac{(nL)^{3/2}}{K^6} \right).
\end{align*}

By Equation (\ref{eqD1}), the first term tends to 0. For the second term, note that Assumption \ref{assump:a5} implies
\[
\frac{(nL)^{3/2}}{K^6 \log n} \to \infty.
\]

Thus, for sufficiently large $n$, we have $c (nL)^{3/2}/K^6 > M M_\epsilon$, and
\[
\mathbb{P}\!\left( M |T(K)| \ge c \frac{(nL)^{3/2}}{K^6} \right)
\le \mathbb{P}\!\left( |T(K)| > M_\epsilon \right).
\]

Using Equation (\ref{eqD2}), the right-hand side is eventually less than $\epsilon$. Since $\epsilon$ is arbitrary, we conclude $\mathbb{P}(\eta_K \le M) \to 0$ for any fixed $M$, which is equivalent to $\eta_K \xrightarrow{P} \infty$.
\end{proof}

\subsection{Proof of Theorem \ref{thm:srnast-consistency}}
\begin{proof}
From Theorem \ref{thm:main}, under the null hypothesis $H_0: K_0 = K$, we have $T(K) \xrightarrow{d} N(0,1)$. Consequently,
\begin{equation}\label{eq:TK_bound}
|T(K)| = O_P(1).
\end{equation}

For any candidate $K_0 < K$ (underfitting), Theorem \ref{thm:power} guarantees the existence of a constant $C > 0$ such that
\begin{equation}\label{eq:TK0_divergence}
\lim_{n \to \infty} \mathbb{P}\!\left( |T(K_0)| \ge C \frac{(nL)^{3/2}}{K^6} \right) = 1.
\end{equation}

This shows that $|T(K_0)|$ diverges at rate at least $(nL)^{3/2}/K^6$ when $K_0 < K$. Regarding the ratio statistic $\eta_{K_0} = |T(K_0-1)|/|T(K_0)|$, Theorem \ref{thm:ratio-behavior} provides two key results:
\begin{enumerate}
\item For $2 \le K_0 \le K-1$, $\eta_{K_0} = O_P(1)$.
\item When $K_0 = K$, $\eta_K \xrightarrow{P} \infty$.
\end{enumerate}

In fact, from the proofs of Theorems \ref{thm:power} and \ref{thm:ratio-behavior}, one can see that $\eta_K$ grows at least as fast as $(nL)^{3/2}/K^6$, since the numerator $|T(K-1)|$ is of that order while the denominator $|T(K)|$ remains bounded.

For the SR-NAST algorithm to be consistent, the threshold $t_{\mathrm{ratio},n}$ must satisfy two asymptotic conditions derived from the above properties.

First, consider the case $K = 1$. Equation \eqref{eq:TK_bound} gives $|T(1)| = O_P(1)$. To ensure the algorithm stops at $K_0 = 1$ with high probability, we need $|T(1)| \le t_{\mathrm{ratio},n}$ with probability tending to 1. This requires $t_{\mathrm{ratio},n} \to \infty$, because any bounded threshold would lead to a non-vanishing probability of $|T(1)|$ exceeding it.

Second, consider $K \ge 2$. The algorithm must (a) proceed past $K_0 = 1$ when $K \ge 2$, (b) not stop prematurely at any $K_0 < K$, and (c) stop exactly at $K_0 = K$. These requirements translate into the following:
\begin{itemize}
\item For (a): We need $\mathbb{P}(|T(1)| > t_{\mathrm{ratio},n}) \to 1$. By Equation \eqref{eq:TK0_divergence}, this holds if $t_{\mathrm{ratio},n} = o\!\left( (nL)^{3/2}/K^6 \right)$.
\item For (b): For each $K_0 = 2,\dots,K-1$, we need $\mathbb{P}(\eta_{K_0} > t_{\mathrm{ratio},n}) \to 0$. Since $\eta_{K_0} = O_P(1)$, this holds if $t_{\mathrm{ratio},n} \to \infty$.
\item For (c): We need $\mathbb{P}(\eta_K > t_{\mathrm{ratio},n}) \to 1$. Since $\eta_K$ diverges at least as fast as $(nL)^{3/2}/K^6$, this again requires $t_{\mathrm{ratio},n} = o\!\left( (nL)^{3/2}/K^6 \right)$.
\end{itemize}

Therefore, a threshold sequence $\{t_{\mathrm{ratio},n}\}$ ensures consistency of SR-NAST if it satisfies:
\begin{equation}\label{eq:threshold_conditions}
t_{\mathrm{ratio},n} \to \infty \quad \text{and} \quad t_{\mathrm{ratio},n} = o\!\left( \frac{(nL)^{3/2}}{K^6} \right).
\end{equation}

With $t_{\mathrm{ratio},n}$ satisfying Equation \eqref{eq:threshold_conditions}, we now establish $\mathbb{P}(\hat{K} = K) \to 1$ for all possible $K$.

\noindent\textit{Subcase 1: $K = 1$.} By Equation \eqref{eq:TK_bound}, $|T(1)| = O_P(1)$. Equation \eqref{eq:threshold_conditions} guarantees
\[
\mathbb{P}\!\left( |T(1)| \le t_{\mathrm{ratio},n}\right) \to 1.
\]

Thus, SR-NAST stops at $K_0 = 1$ and returns $\hat{K} = 1$ with probability tending to 1.

\noindent\textit{Subcase 2: $K \ge 2$.} We trace the algorithm:
\begin{itemize}
\item At $K_0 = 1$: Since $K_0 < K$, Equation \eqref{eq:TK0_divergence} applies. Because $t_{\mathrm{ratio},n} = o\!\left( (nL)^{3/2}/K^6 \right)$, so
\[
\mathbb{P}\!\left( |T(1)| > t_{\mathrm{ratio},n} \right) \to 1.
\]
Hence, the algorithm proceeds to $K_0 = 2$ with probability tending to 1.
\item For $K_0 = 2, \dots, K-1$: Theorem \ref{thm:ratio-behavior} gives $\eta_{K_0} = O_P(1)$, which gives
\[
\mathbb{P}\!\left( \eta_{K_0} > t_{\mathrm{ratio},n} \right) \to 0.
\]
Thus, the algorithm does not stop at any $K_0 < K$ with probability tending to 1.
\item At $K_0 = K$: Since $\eta_K$ grows at least as fast as $(nL)^{3/2}/K^6$,  Equation \eqref{eq:threshold_conditions} guarantees
\[
\mathbb{P}\!\left( \eta_K > t_{\mathrm{ratio},n} \right) \to 1.
\]
Therefore, the algorithm stops at $K_0 = K$ with probability tending to 1.
\end{itemize}

Consequently, $\mathbb{P}(\hat{K} = K) \to 1$ for $K \ge 2$ as well. Since both subcases are covered, we conclude that $\lim_{n \to \infty} \mathbb{P}(\hat{K} = K) = 1$ for all $K \ge 1$, completing the proof of Theorem \ref{thm:srnast-consistency}.
\end{proof}
\section{Useful theoretical results}
\subsection{Block probability estimation error in multi-layer SBM}
\begin{lem}\label{BlockErrorMain}
Suppose that Assumption \ref{assump:a3} holds, then under the null hypothesis \(H_0: K = K_0\), for any layer \(\ell \in [L]\) and communities \(k, l \in [K_0]\), the estimated block probability \(\hat{B}_{\ell,kl}\) satisfies:  
\[
|\hat{B}_{\ell,kl} - B_{\ell,kl}| =O_P\left( \sqrt{\frac{\log n}{\hat{n}^2_\mathrm{min}}} \right)+ O_P\left(\frac{m \hat{n}_{\text{max}}}{\hat{n}^2_{\text{min}}}\right),
\]  
where \(\hat{n}_{\text{max}} = \max_k \hat{n}_k\), \(\hat{n}_{\text{min}} = \min_k \hat{n}_k\), and \(m = \|\hat{\theta} - \theta\|_0\) is the number of misclustered nodes.
\end{lem}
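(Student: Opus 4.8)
The plan is to separate the randomness coming from the adjacency entries from the randomness coming from the data-driven labels $\hat\theta$, by comparing $\hat B_{\ell,kl}$ not directly with $B_{\ell,kl}$ but with the block average over the \emph{true} communities. Write $S_{U,V}:=\sum_{i\in U,\,j\in V}A_{\ell,ij}$ for index sets $U,V\subseteq[n]$, set $n_k:=|C_k|$, and recall $\mathbb{E}[S_{C_k,C_l}]=n_kn_lB_{\ell,kl}$ when $k\neq l$. Then, for $k\neq l$,
\[
\hat B_{\ell,kl}-B_{\ell,kl}
=\underbrace{\frac{S_{\hat C_k,\hat C_l}-S_{C_k,C_l}}{\hat n_k\hat n_l}}_{\mathrm{(I)}}
+\underbrace{S_{C_k,C_l}\Bigl(\frac{1}{\hat n_k\hat n_l}-\frac{1}{n_kn_l}\Bigr)}_{\mathrm{(II)}}
+\underbrace{\frac{S_{C_k,C_l}-\mathbb{E}[S_{C_k,C_l}]}{n_kn_l}}_{\mathrm{(III)}}.
\]
The point of this arrangement is that (III) is a fluctuation of independent summands over the \emph{deterministic} sets $C_k,C_l$, so no union bound over data-dependent partitions is needed, while (I) and (II) admit sure bounds once $m$, $\hat n_{\min}$, $\hat n_{\max}$ are fixed.

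First I would bound (I) and (II) using only $A_{\ell,ij}\in\{0,1\}$. Aligning labels by the Hamming-minimizing permutation, every misclustered node lies in exactly one wrong estimated block, so $|\hat C_k\triangle C_k|\le 2m$ for each $k$. Hence the symmetric difference of the product index sets $\hat C_k\times\hat C_l$ and $C_k\times C_l$ has cardinality $O(m\hat n_{\max})$ (each of the $\le 2m$ misclustered endpoints pairs with at most $\hat n_{\max}(1+o(1))$ partners), so $|S_{\hat C_k,\hat C_l}-S_{C_k,C_l}|=O(m\hat n_{\max})$ and term (I) is $O(m\hat n_{\max}/\hat n_{\min}^2)$. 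Similarly $|n_kn_l-\hat n_k\hat n_l|\le|n_k-\hat n_k|\,n_l+\hat n_k\,|n_l-\hat n_l|=O(m\hat n_{\max})$ and $S_{C_k,C_l}\le n_kn_l$, so term (II) equals $S_{C_k,C_l}\,|n_kn_l-\hat n_k\hat n_l|/(\hat n_k\hat n_l\,n_kn_l)=O(m\hat n_{\max}/\hat n_{\min}^2)$. These two contributions already yield the second term of the claimed bound. (Assumptions \ref{assump:a2}--\ref{assump:a3}, which are in force for the lemma's applications, give $m=o(\hat n_{\min})$, so $n_k$ and $\hat n_k$ are comparable and the substitutions below are harmless.)

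Next I would control (III) by Hoeffding's inequality. Given $\theta$, the summands $A_{\ell,ij}-B_{\ell,kl}$ for $i\in C_k$, $j\in C_l$ are independent, mean zero and bounded in $[-1,1]$, and there are at most $n_kn_l$ of them; since $C_k,C_l$ are fixed, Hoeffding gives $\mathbb{P}\bigl(|S_{C_k,C_l}-\mathbb{E}[S_{C_k,C_l}]|>t\bigr)\le 2\exp\bigl(-2t^2/(n_kn_l)\bigr)$. Choosing $t\asymp\sqrt{n_kn_l\log n}$ and taking a union bound over the $LK_0^2$ triples $(\ell,k,l)$ — legitimate because Assumption \ref{assump:a3} forces $LK_0^2=o(n)$, hence $\log(LK_0^2)=O(\log n)$ — yields $|S_{C_k,C_l}-\mathbb{E}[S_{C_k,C_l}]|=O_P(\sqrt{n_kn_l\log n})$ uniformly in $(\ell,k,l)$, so (III) is $O_P(\sqrt{\log n/(n_kn_l)})=O_P(\sqrt{\log n/\hat n_{\min}^2})$ using $n_k,n_l\ge\hat n_{\min}-2m\gtrsim\hat n_{\min}$. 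Summing the bounds for (I), (II) and (III) gives the claim for $k\neq l$; the diagonal case $k=l$ (denominator $\hat n_k(\hat n_k-1)/2$, summation over $i<j\in\hat C_k$) is identical up to constants.

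The main obstacle is precisely the dependence of the estimated blocks $\hat C_k$ on the adjacency matrices $\{A_\ell\}$: one cannot condition on $\hat\theta$ and invoke Hoeffding directly, and a brute-force union bound over the Hamming ball around $\theta$ would cost an extra factor $\sqrt{m}$ and degrade the rate. The decomposition above removes this difficulty by pushing all of the ``which-set'' randomness into the crude deterministic count behind the $O_P(m\hat n_{\max}/\hat n_{\min}^2)$ term, leaving the entry-level noise entirely on the fixed true communities where Hoeffding and an $O(\log n)$-size union bound apply cleanly. What remains is routine: careful accounting of symmetric-difference cardinalities and the verification, under Assumptions \ref{assump:a2}--\ref{assump:a3}, that $\hat n_k\asymp n_k$.
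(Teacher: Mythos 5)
Your proof is correct, but it takes a genuinely different route from the paper's. The paper decomposes the error around the intermediate quantity $\tilde B_{\ell,kl}=\frac{1}{\hat n_k\hat n_l}\sum_{i\in\hat C_k}\sum_{j\in\hat C_l}B_{\ell,\theta_i\theta_j}$ (the block average of the \emph{true} probabilities over the \emph{estimated} communities): it bounds the "sampling error" $|\hat B_{\ell,kl}-\tilde B_{\ell,kl}|$ by applying Hoeffding's inequality conditionally on $(\theta,\hat\theta)$ to the entries indexed by $\hat C_k\times\hat C_l$, and bounds the "bias error" $|\tilde B_{\ell,kl}-B_{\ell,kl}|$ by counting the at most $2m\hat n_{\max}$ pairs with a misclustered endpoint. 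Your decomposition instead anchors the concentration step to the \emph{true} communities $C_k\times C_l$, which are deterministic, and pushes the entire dependence of $\hat C_k$ on the data into the two surely-bounded terms (I) and (II), each controlled by the same $O(m\hat n_{\max})$ symmetric-difference count. The two arguments produce identical rates, and the combinatorial bookkeeping (symmetric differences of size $O(m\hat n_{\max})$, union bound over $O(LK_0^2)$ blocks justified by Assumption \ref{assump:a3}, comparability $n_k\asymp\hat n_k$ from $m=o(\hat n_{\min})$ under Assumption \ref{assump:a2}) is essentially shared. What your route buys is rigor at the concentration step: the paper's conditioning on $\hat\theta$ while treating the $A_{\ell,ij}$ as independent Bernoulli variables with means $B_{\ell,\theta_i\theta_j}$ is delicate, since $\hat\theta$ is itself a function of the adjacency matrices and conditioning on it perturbs the entries' distribution; your version applies Hoeffding only over fixed index sets and so avoids this issue entirely, at the cost of a slightly longer three-term decomposition. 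Both yield the stated bound, and your diagonal-case remark ($k=l$) is handled the same way in the paper.
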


\begin{proof}
Let \(\hat{n}_k = |\hat{\mathcal{C}}_k|\) be the size of estimated community \(k\). We redefine the conditional expectation \(\tilde{B}_{\ell,kl}\) to match the estimator's definition
\[
\tilde{B}_{\ell,kl} = 
\begin{cases} 
\frac{1}{\hat{n}_k \hat{n}_l} \sum_{i \in \hat{\mathcal{C}}_k} \sum_{j \in \hat{\mathcal{C}}_l} B_{\ell, \theta_i\theta_j} & k \neq l, \\
\frac{1}{\hat{n}_k (\hat{n}_k - 1)} \sum_{i \neq j \in \hat{\mathcal{C}}_k} B_{\ell, \theta_i\theta_j} & k = l.
\end{cases}
\]

The error is decomposed as
\[
|\hat{B}_{\ell,kl} - B_{\ell,kl}| \leq \underbrace{\left| \hat{B}_{\ell,kl} - \tilde{B}_{\ell,kl} \right|}_{\text{sampling error}} + \underbrace{\left| \tilde{B}_{\ell,kl} - B_{\ell,kl} \right|}_{\text{bias error}}.
\]

\noindent\textbf{Part 1: Bounding the sampling error \(\left| \hat{B}_{\ell,kl} - \tilde{B}_{\ell,kl} \right|\)}

\begin{itemize}
\item \textit{Case \(k \neq l\)}: Given \(\theta\) and \(\hat{\theta}\), the variables \(\{A_{\ell,ij} : i \in \hat{\mathcal{C}}_k, j \in \hat{\mathcal{C}}_l\}\) are independent, bounded in \([0,1]\), and Bernoulli with mean \(B_{\ell, \theta_i\theta_j}\). By Hoeffding's inequality, we have
\[
\mathbb{P}\left( \left| \hat{B}_{\ell,kl} - \tilde{B}_{\ell,kl} \right| \geq t \mid \theta, \hat{\theta} \right) \leq 2 \exp\left(-2 t^2 \hat{n}_k \hat{n}_l \right).
\]
Set \(t = \sqrt{\frac{3 \log n}{\hat{n}_k \hat{n}_l}}\), we get
\[
\mathbb{P}\left( \left| \hat{B}_{\ell,kl} - \tilde{B}_{\ell,kl} \right| \geq \sqrt{\frac{3 \log n}{\hat{n}_k \hat{n}_l}} \mid \theta, \hat{\theta} \right) \leq 2n^{-6}.
\]

\item \textit{Case \(k = l\)}: The estimator excludes diagonal elements. Given \(\theta\) and \(\hat{\theta}\), the variables \(\{A_{\ell,ij} : i \neq j \in \hat{\mathcal{C}}_k\}\) are independent, bounded in \([0,1]\), and Bernoulli with mean \(B_{\ell, \theta_i\theta_j}\). The sum has \(N = \hat{n}_k(\hat{n}_k - 1)\) terms. By Hoeffding's inequality, we get
\[
\mathbb{P}\left( \left| \hat{B}_{\ell,kk} - \tilde{B}_{\ell,kk} \right| \geq t \mid \theta, \hat{\theta} \right) \leq 2 \exp\left( -\frac{t^2 \hat{n}_k (\hat{n}_k - 1)}{8} \right).
\]
Set \(t = \sqrt{\frac{24 \log n}{\hat{n}_k (\hat{n}_k - 1)}}\), we have
\[
\mathbb{P}\left( \left| \hat{B}_{\ell,kk} - \tilde{B}_{\ell,kk} \right| \geq \sqrt{\frac{24 \log n}{\hat{n}_k (\hat{n}_k - 1)}} \mid \theta, \hat{\theta} \right) \leq 2n^{-3}.
\]
To unify both cases, adjust the constant to \(t = \sqrt{\frac{48 \log n}{\hat{n}_k \hat{n}_l}}\) for all \(k,l\) (since \(\hat{n}_k(\hat{n}_k - 1) \geq \hat{n}_k^2 / 2\) for large \(n\)), giving probability bound \(2n^{-6}\).
\end{itemize}

By Assumption \ref{assump:a3},
union bounding over all \(O(K_0^2 L)\) blocks obtains
\[
\mathbb{P}\left( \exists \ell, k, l : \left| \hat{B}_{\ell,kl} - \tilde{B}_{\ell,kl} \right| \geq \sqrt{\frac{48 \log n}{\hat{n}_k \hat{n}_l}} \right) \leq 2L K_0^2 n^{-6} = o(1).
\]

Thus, uniformly, we have
\[
\left| \hat{B}_{\ell,kl} - \tilde{B}_{\ell,kl} \right| = O_P\left( \sqrt{\frac{\log n}{\hat{n}_k \hat{n}_l}} \right)=O_P\left( \sqrt{\frac{\log n}{\hat{n}^2_\mathrm{min}}} \right).
\]

\noindent\textbf{Part 2: Bounding the bias error \(\left| \tilde{B}_{\ell,kl} - B_{\ell,kl} \right|\)}

Define:
\[
\mathcal{G}_k = \{i \in \hat{\mathcal{C}}_k : \theta_i = k\}, \quad 
\mathcal{B}_k = \{i \in \hat{\mathcal{C}}_k : \theta_i \neq k\}, \quad 
|\mathcal{B}_k| \leq m, \quad |\mathcal{G}_k| = \hat{n}_k - |\mathcal{B}_k|.
\]

Similarly define \(\mathcal{G}_l, \mathcal{B}_l\) for community \(l\).

\begin{itemize}
\item \textit{Case \(k \neq l\)}:
\[
\tilde{B}_{\ell,kl} - B_{\ell,kl} = \frac{1}{\hat{n}_k \hat{n}_l} \sum_{i \in \hat{\mathcal{C}}_k} \sum_{j \in \hat{\mathcal{C}}_l} \left( B_{\ell, \theta_i\theta_j} - B_{\ell,kl} \right).
\]
The term \(\left| B_{\ell, \theta_i\theta_j} - B_{\ell,kl} \right| \leq 1\) and is zero if \(i \in \mathcal{G}_k, j \in \mathcal{G}_l\). The number of non-zero terms is at most
\[
|\mathcal{B}_k| \hat{n}_l + |\mathcal{B}_l| \hat{n}_k - |\mathcal{B}_k||\mathcal{B}_l| \leq 2m \hat{n}_{\text{max}}.
\]
Thus, we have
\[
\left| \tilde{B}_{\ell,kl} - B_{\ell,kl} \right| \leq \frac{2m \hat{n}_{\text{max}}}{\hat{n}_k \hat{n}_l}\leq \frac{2m \hat{n}_{\text{max}}}{\hat{n}^2_{\text{min}}}.
\]

\item \textit{Case \(k = l\)}:
\[
\tilde{B}_{\ell,kk} - B_{\ell,kk} = \frac{1}{\hat{n}_k (\hat{n}_k - 1)} \sum_{i \neq j \in \hat{\mathcal{C}}_k} \left( B_{\ell, \theta_i\theta_j} - B_{\ell,kk} \right).
\]
The term is zero if \(i,j \in \mathcal{G}_k\). The number of non-zero terms is
\[
\hat{n}_k(\hat{n}_k - 1) - |\mathcal{G}_k|(|\mathcal{G}_k| - 1) \leq 2\hat{n}_k m.
\]
Thus, we get
\[
\left| \tilde{B}_{\ell,kk} - B_{\ell,kk} \right| \leq \frac{2\hat{n}_k m}{\hat{n}_k (\hat{n}_k - 1)} = \frac{2m}{\hat{n}_k - 1}.
\]
For large \(n\), \(\hat{n}_k - 1 \geq \hat{n}_k / 2\), so, we have
\[
\frac{2m}{\hat{n}_k - 1} \leq \frac{4m}{\hat{n}_k}\leq \frac{4m}{\hat{n}_{\text{min}}}\leq \frac{4m \hat{n}_{\text{max}}}{\hat{n}^2_{\text{min}}}..
\]
\end{itemize}

Thus, for both cases, we have
\[
\left| \tilde{B}_{\ell,kl} - B_{\ell,kl} \right| \leq \frac{4m \hat{n}_{\text{max}}}{\hat{n}^2_{\text{min}}}.
\]

\noindent\textbf{Combining errors:}
\[
|\hat{B}_{\ell,kl} - B_{\ell,kl}| = O_P\left( \sqrt{\frac{\log n}{\hat{n}^2_\mathrm{min}}} \right)+ O_P\left(\frac{m \hat{n}_{\text{max}}}{\hat{n}^2_{\text{min}}}\right).
\]
\end{proof}
\subsection{Uniform lower bound for estimated variance terms}
\begin{lem}\label{UniformLowerBoundForEstimatedVariance}
Suppose that Assumptions \ref{assump:a1}-\ref{assump:a3} hold, then under the null hypothesis \(H_0: K = K_0\), there exists a constant \(\kappa = \kappa(\delta) > 0\) such that:
\[
\min_{i \neq j} \hat{D}_{ij} \geq \kappa L \quad \text{with probability tending to 1 as } n \to \infty,
\]
where \(\hat{D}_{ij} = \sum_{\ell=1}^L \hat{P}_{\ell,ij}(1 - \hat{P}_{\ell,ij})\) and \(\kappa = \delta(1-\delta)/2\).
\end{lem}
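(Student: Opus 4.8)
The plan is to avoid any attempt to show $\widehat P_{\ell,ij}\approx P_{\ell,ij}$ entrywise --- such a comparison necessarily fails for any node pair touching a misclustered vertex --- and instead to control the finitely many \emph{block averages} $\widehat B_{\ell,kl}$ uniformly, exploiting the elementary fact that $t\mapsto t(1-t)$ is bounded away from $0$ on every compact subinterval of $(0,1)$. Concretely, I will show that with probability tending to $1$ every entry $\widehat B_{\ell,kl}$ lies in $[\delta/2,\,1-\delta/2]$; since each $\widehat P_{\ell,ij}$ is, by \eqref{eq:P_hat}, exactly one of these block averages, this forces $\widehat P_{\ell,ij}(1-\widehat P_{\ell,ij})\ge(\delta/2)(1-\delta/2)=\delta/2-\delta^2/4\ge\delta(1-\delta)/2=\kappa$ for all $i\neq j$ and $\ell$, and summing over the $L$ layers yields $\widehat D_{ij}\ge\kappa L$ simultaneously for all $i\neq j$.

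The first ingredient is that the estimated communities are balanced. From Assumption~\ref{assump:a3} we get $K^2m^2/n\to0$, hence $m=o(\sqrt n/K)=o(n/K)$; because an optimal label alignment loses or gains at most $m$ vertices per community, Assumption~\ref{assump:a2} gives $\widehat n_{\min}\ge\min_k|C_k|-m\ge cn/K-o(n/K)\ge(c/2)n/K$ and $\widehat n_{\max}\le n$ for all large $n$. In particular $\widehat n_k\ge2$ eventually, so the degenerate ``$0$'' branch in the definition \eqref{eq:B_hat} of $\widehat B_{\ell,kk}$ never occurs --- which is needed since a spurious $\widehat B_{\ell,kk}=0$ would kill the variance term. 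Feeding $\widehat n_{\min}\gtrsim n/K$ and $\widehat n_{\max}\le n$ into Lemma~\ref{BlockErrorMain} and taking the same union bound over the $O(LK^2)$ blocks used in its proof yields $\max_{\ell,k,l}\bigl|\widehat B_{\ell,kl}-B_{\ell,kl}\bigr|=O_P\!\bigl(K\sqrt{\log n}/n\bigr)+O_P\!\bigl(mK^2/n\bigr)$, and both terms are $o_P(1)$ under Assumption~\ref{assump:a3}, since $\bigl(K\sqrt{\log n}/n\bigr)^2\le(K^2\log n/n)/n\to0$ and $mK^2/n\le\max(1,m^2)K^2/n\le LK^2\max(\log n,m^2)/n\to0$.

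Combining this uniform estimate with Assumption~\ref{assump:a1}, on an event $\mathcal E_n$ with $\mathbb P(\mathcal E_n)\to1$ we have $\widehat B_{\ell,kl}\in[\delta/2,\,1-\delta/2]$ for all $\ell,k,l$, hence $\widehat P_{\ell,ij}\in[\delta/2,\,1-\delta/2]$ for all $i\neq j$, $\ell$; since $t(1-t)$ attains its minimum over this interval at the endpoints, the chain of inequalities above delivers $\widehat D_{ij}\ge\kappa L$ uniformly in $i\neq j$ on $\mathcal E_n$, which is the claim. The main obstacle is exactly the step where a naive ``$\widehat D_{ij}\approx D_{ij}$'' perturbation argument breaks down: misclustered vertices can make $\widehat P_{\ell,ij}$ far from $P_{\ell,ij}$. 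The resolution is to demand only a \emph{two-sided uniform separation} of the block averages from $\{0,1\}$ rather than an approximation of individual variance terms --- a property Lemma~\ref{BlockErrorMain} supplies irrespective of clustering accuracy, because the misclustering enters there only through the deterministically bounded, and under Assumption~\ref{assump:a3} negligible, bias term $m\widehat n_{\max}/\widehat n_{\min}^2$. The only accompanying technical care is checking, again via Assumption~\ref{assump:a3}, that $m$ is small enough for the estimated communities to remain balanced so that Lemma~\ref{BlockErrorMain} can be applied with $\widehat n_{\min}\asymp n/K$.
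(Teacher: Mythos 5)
Your proof is correct, and it takes a genuinely different route from the paper's. The paper argues by perturbation of the \emph{true} variance terms: it first lower-bounds $D_{ij}=\sum_\ell P_{\ell,ij}(1-P_{\ell,ij})\ge L\delta(1-\delta)$ and then claims $\max_{\ell,i,j}|\hat{P}_{\ell,ij}-P_{\ell,ij}|=o_P(1)$ uniformly over all node pairs, concluding via uniform continuity of $x\mapsto x(1-x)$. As you correctly observe, that uniform entrywise claim is delicate: for any pair touching a misclustered node, $\hat{P}_{\ell,ij}=\hat{B}_{\ell,\hat\theta_i\hat\theta_j}$ sits near $B_{\ell,\hat\theta_i\hat\theta_j}$ rather than $B_{\ell,\theta_i\theta_j}$, so the difference can be of constant order whenever $m\ge 1$; the paper's own proof of Lemma \ref{lem:plugin_error} even records the indicator term $\mathbf{1}\{\hat\theta_i\neq\theta_i \text{ or } \hat\theta_j\neq\theta_j\}$ that the variance-lemma proof silently drops. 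Your argument sidesteps this entirely: you control only the $O(LK^2)$ block averages $\hat{B}_{\ell,kl}$ via Lemma \ref{BlockErrorMain} and a union bound, note that every $\hat{P}_{\ell,ij}$ \emph{is} one of these averages, and hence lies in $[\delta/2,1-\delta/2]$ w.h.p.\ irrespective of whether $i$ or $j$ is misclustered, which gives $\hat{P}_{\ell,ij}(1-\hat{P}_{\ell,ij})\ge(\delta/2)(1-\delta/2)\ge\kappa$ directly. This buys a proof that is robust to misclassification and also patches two details the paper leaves implicit: that $\hat{n}_{\min}\gtrsim n/K$ (needed to invoke Lemma \ref{BlockErrorMain} with useful rates) and that the degenerate branch $\hat{B}_{\ell,kk}=0$ in Equation \eqref{eq:B_hat} cannot fire. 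What the paper's route would buy, if the uniform approximation held, is a statement that $\hat{D}_{ij}$ is actually \emph{close} to $D_{ij}$, which is stronger than the one-sided lower bound the lemma asserts; but only the lower bound is needed downstream, and your argument delivers it cleanly.
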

\begin{proof}
By Assumption \ref{assump:a1}, \(P_{\ell,ij} = B_{\ell,\theta_i,\theta_j} \in [\delta, 1-\delta]\) for all \(\ell, i, j\). The function \(g(x) = x(1-x)\) satisfies
   \[
   \min_{x \in [\delta,1-\delta]} g(x) = \delta(1-\delta) > 0.
   \]  
   
   Thus, for the true variance term, we have 
   \[
   \min_{i \neq j} D_{ij} = \min_{i \neq j} \sum_{\ell=1}^L P_{\ell,ij}(1-P_{\ell,ij}) \geq L \cdot \delta(1-\delta).
   \]

By Lemma \ref{BlockErrorMain} and Assumptions \ref{assump:a2}-\ref{assump:a3}, we have  
   \[
   \max_{\ell,i,j} |\hat{P}_{\ell,ij} - P_{\ell,ij}| = O_P\left( \sqrt{\frac{\log n}{\hat{n}^2_\mathrm{min}}} \right) + O_P\left(\frac{m \hat{n}_{\text{max}}}{\hat{n}^2_{\text{min}}}\right)=O_P\left( \frac{K_0(m + \sqrt{\log n})}{n} \right) = o_P(1),
   \]  
   where \(m = \|\hat{\theta} - \theta\|_0\). This implies 
   \[
   \mathbb{P}\left( \max_{\ell,i,j} |\hat{P}_{\ell,ij} - P_{\ell,ij}| \geq \eta \right) \to 0 \quad \forall \eta > 0.
   \]

Define \(\mathcal{I} = [\delta/2, 1-\delta/2] \supset [\delta, 1-\delta]\). The function \(g(x) = x(1-x)\) is uniformly continuous on \(\mathcal{I}\) with minimum value 
   \[
   g_{\min} = \min_{x \in \mathcal{I}} g(x) = \frac{\delta}{2}\left(1 - \frac{\delta}{2}\right) \geq \frac{\delta(1-\delta)}{2} = \kappa,
   \]  
   where the inequality holds because \(\frac{\delta}{2}(1 - \frac{\delta}{2}) - \frac{\delta(1-\delta)}{2} = \delta^2/4 > 0\). By uniform continuity, \(\exists \eta = \eta(\delta) > 0\) such that if \(|x - y| < \eta\) and \(y \in [\delta,1-\delta]\), then \(x \in \mathcal{I}\) and  
   \[
   |g(x) - g(y)| < \frac{\delta(1-\delta)}{4}.
   \]  
   
   Consequently, when \(|\hat{P}_{\ell,ij} - P_{\ell,ij}| < \eta\), we have
   \[
   \hat{P}_{\ell,ij}(1 - \hat{P}_{\ell,ij}) > P_{\ell,ij}(1 - P_{\ell,ij}) - \frac{\delta(1-\delta)}{4} \geq \frac{3\delta(1-\delta)}{4} > \kappa.
   \]

By Lemma \ref{BlockErrorMain} and Hoeffding's inequality, for each \((\ell,i,j)\), weh ave 
   \[
   \mathbb{P}\left( |\hat{P}_{\ell,ij} - P_{\ell,ij}| \geq \eta \right) \leq O(n^{-c}) \quad (c> 0).
   \]  
   
   Applying a union bound over all \(O(n^2 L)\) entries gets 
   \[
   \mathbb{P}\left( \max_{\ell,i,j} |\hat{P}_{\ell,ij} - P_{\ell,ij}| \geq \eta \right) \leq O(n^2 L \cdot n^{-c}) = O(L n^{2-c)}.
   \]  
   
   Under Assumption \ref{assump:a3}, choosing \(c > 3\) gets 
   \[
   L n^{2-c_1} \leq o(n^{3-c_1}) \to 0.
   \]  
   
   Thus w.h.p., \(|\hat{P}_{\ell,ij} - P_{\ell,ij}| < \eta\) for all \(\ell,i,j\), which implies  
   \[
   \hat{D}_{ij} = \sum_{\ell=1}^L \hat{P}_{\ell,ij}(1 - \hat{P}_{\ell,ij}) > \sum_{\ell=1}^L \kappa = \kappa L \quad \forall i \neq j.
   \]  
   This completes the proof.
\end{proof}
\end{appendices}


\bibliography{reference}

\end{document}